\documentclass[11pt,journal,onecolumn]{IEEEtran}
\usepackage{graphicx} 
\graphicspath{{figures/}}
\usepackage{amsthm}
\usepackage{xcolor}
\newtheorem{theorem}{Theorem}
\newtheorem*{theorem*}{Theorem}

\newtheorem{lemma}{Lemma}

\newtheorem{definition}{Definition}
\newtheorem{corollary}{Corollary}
\newtheorem{assumption}{Assumption}
\newtheorem{remark}{Remark}
\usepackage{cite}
\usepackage{amsmath,amssymb,amsfonts}
\usepackage{algorithmicx}
\usepackage{textcomp}
\usepackage[hypertexnames=false]{hyperref}
\usepackage{tikz}
\usepackage{main}
\usepackage{algorithm}
\usepackage{algpseudocode}
\usepackage{enumitem}
\usetikzlibrary{arrows.meta,positioning,fit,calc,decorations.pathmorphing}

\newcommand{\R}{\mathbb{R}}
\newcommand{\I}{\mathsf{I}}
\newcommand{\op}{\mathrm{op}}

\newcommand{\safe}{\mathrm{safe}}

\title{Quantized Online LQR}
\author{Barron Han, Victoria Kostina, and Babak Hassibi%
\thanks{Barron Han, Victoria Kostina, and Babak Hassibi are with the Department of Electrical Engineering and Computing and Mathematical Sciences, California Institute of Technology, Pasadena, CA 91125 USA (e-mail: \{bshan, vkostina, hassibi\}@caltech.edu).}%
}

\begin{document}
\maketitle

\begin{abstract}
We study online linear--quadratic regulation (LQR) with unknown dynamics under communication rate constraints. Classical networked control quantizes the plant state at every time step, requiring $\mathcal{O}(T)$ total bits while injecting persistent quantization noise that limits control performance. We consider a setting where the plant observes its state locally and can estimate system dynamics via ordinary least squares, while a remote controller possesses knowledge of the control cost. Rather than quantizing the raw state, the plant transmits learned dynamics estimates over a rate-limited uplink, and the controller returns the optimal control policy so that the plant can compute actions locally using its superior state knowledge. We first prove a fundamental information-theoretic lower bound: any scheme achieving $\mathcal{O}(T^\alpha)$ regret for $\alpha \in [1/2,1)$ compared to the optimal infinite horizon LQR controller that knows the true system dynamics must transmit at least $\Omega(\log T)$ bits. We then design the \textbf{Quantized Certainty Equivalent (QCE-LQR)} algorithm, which matches this bound. The resulting regret bound contains inflation factors $Q_{\mathrm{slow}}(\varrho)$ and $Q_{\mathrm{fast}}(\varrho)$ that vanish as the codebook resolution increases, smoothly recovering the unquantized baseline regret. Numerical experiments on four benchmark systems---from a scalar unstable plant to a 24-parameter Boeing 747 lateral model---confirm that a variant of QCE-LQR achieves regret comparable to an unquantized certainty equivalent controller over a horizon of $T=10{,}000$ steps.
\end{abstract}

\begin{IEEEkeywords}
Adaptive control, communication-constrained control, linear quadratic regulation, online learning, quantization.
\end{IEEEkeywords}

\section{Introduction}

The online LQR problem is a canonical model for analyzing the regret guarantees of data-driven adaptive control algorithms \cite{simchowitz20a, mania2019}. Practical deployments face tight limits on communication bandwidth and onboard computation, motivating theoretical results that quantify performance--communication tradeoffs and algorithms that provably operate under rate constraints.

The online LQR problem without communication constraints has been well studied \cite{mania2019, simchowitz20a, kargin}. Over a horizon of $T$, the optimal regret relative to the infinite-horizon LQR controller with known dynamics is $\tilde{\mathcal{O}}(\sqrt{T})$ (where $\tilde{\mathcal{O}}$ suppresses logarithmic factors). This rate is achieved by standard $\varepsilon$-greedy exploration schemes when a stabilizing safety controller is available \cite{mania2019}, and remains achievable without such a controller via more sophisticated strategies \cite{kargin}.

The offline LQR problem with communication constraints has also been of recent interest. The fundamental tradeoff between LQR cost and communication rate is characterized in \cite{kostina2019}. Joint source-channel codes for LQR over noisy communication channels have been studied in \cite{tatikondacontrol, barronisit, sahaimitter}. Crucially, these classical networked control schemes quantize the \emph{state} $x_t$ at every time step, requiring $\mathcal{O}(T)$ total bits over a horizon of $T$. Moreover, continuous state-quantization noise is injected into the feedback loop at each step, which fundamentally limits the achievable regret when the dynamics are unknown.

These works also assume the linearized dynamics are known a priori, which is often unrealistic. Most real-world systems are inherently non-linear, and their corresponding linear models are obtained by computing the local Jacobian around a nominal operating point. As a system transitions between different operating regimes---due to changes in velocity, varying payload distributions, or environmental wear---this local approximation drifts, causing the effective system parameters to vary with the changing operating conditions. This motivates a data-driven approach that learns the linearized dynamics online. Furthermore, unlike \cite{kostina2019, tatikondacontrol, barronisit, sahaimitter}, we assume the actuator has direct access to the system state and can apply control inputs without quantization, so that communication is required only for transmitting the control \emph{policy}.

An alternative model-free problem setting considers designing the optimal control policy without explicitly learning the system dynamics. For this problem, \cite{mitra2024} devise the Adaptively Quantized Gradient Descent (AQGD) algorithm based on \cite{fazel18a} to learn LQR controllers over rate-limited channels. Their main result shows that, above a finite bit-rate threshold, AQGD converges exponentially fast to the globally optimal policy with no degradation of the convergence rate relative to the unquantized setting. However, their framework studies convergence to the optimal \emph{policy} rather than \emph{online regret}, and requires the agent to observe multiple independent trajectories of the system to form policy gradient estimates.

In this work, we study online LQR with control policy quantization, motivated by a fundamental \emph{information asymmetry} between the system and the controller: the system observes its state locally and can form accurate dynamics estimates via ordinary least squares (OLS), while the controller possesses knowledge of the control cost. To address this asymmetry, the system transmits its dynamics estimates to the controller, which returns the optimal control \emph{policy} $K_t$---not the control action $u_t$---computed from those estimates and the cost objective. The system then applies $u_t = K_t x_t$ locally, exploiting its superior knowledge of the state, $x_t$. Unlike classical networked control \cite{sahaimitter, kostina2019}, the controller transmits the policy rather than the action and therefore does not need access to the system state. This extends naturally to tracking problems, where the controller transmits the target state alongside the policy gain, yielding an affine control law that still requires no knowledge of the current state. In the offline setting where the linear system dynamics are known, this is trivial: the controller computes the optimal gain $K$ once and transmits it at $t=1$, requiring only $\mathcal{O}(1)$ bits. In the online setting, the controller cannot compute the optimal $K$ since the dynamics are unknown, and exploration noise must be injected to accelerate learning \cite{mania2019, simchowitz20a}. The goal is to recover the optimal $\tilde{\mathcal{O}}(\sqrt{T})$ regret scaling.

A naive approach of uniformly quantizing the OLS dynamics estimates fails to preserve the optimal $\tilde{\mathcal{O}}(\sqrt{T})$ regret for two reasons. \emph{First}, the OLS estimation error of the linearized system is highly anisotropic: different parameter subspaces converge at different polynomial rates \cite{simchowitz20a}, so a single-scale quantizer is forced to track the slowest rate and incurs unnecessary dimensional overhead in the dominant regret term. \emph{Second}, before these asymptotic rates take effect, the transient OLS error cannot be tightly bounded, so any fixed-radius codebook risks overflow. Both challenges necessitate an adaptive, mixed-scale quantization scheme.

In this paper, we study online LQR with rate-limited feedback and establish sharp limits on the communication needed to retain optimal regret. We first prove a fundamental information-theoretic lower bound: any scheme achieving $\mathcal{O}(T^\alpha)$ regret for $\alpha \in [1/2,1)$ must transmit at least $\Omega(\log T)$ bits, even if the true dynamics are known by the system. We then construct the \textbf{Quantized Certainty Equivalent (QCE-LQR)} algorithm, which matches this lower bound in its dependence on the horizon $T$. QCE-LQR is a rate-limited variant of the $\varepsilon$-greedy exploration certainty equivalent policy from \cite{simchowitz20a}, modified to compress the evolving sequence of dynamics estimates into an information-theoretically optimal $\mathcal{O}(\log T)$ bit stream without degrading the $\tilde{\mathcal{O}}(\sqrt{T})$ regret scaling.

We present the following informal theorem that combines our main results in Theorems \ref{thm:main_converse} and \ref{thm:main_ach}.

\begin{theorem*}[Informal]
    In the quantized online LQR setting for horizon $T$, a total communication budget $B(T)=\Theta(\log T)$ is necessary and sufficient to achieve $\tilde{\mathcal{O}}(\sqrt{T})$ regret relative to the infinite-horizon LQR controller with known dynamics.
\end{theorem*}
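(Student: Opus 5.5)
The informal theorem has two halves: a converse (every scheme with $\mathcal{O}(T^\alpha)$ regret needs $\Omega(\log T)$ bits) and an achievability result (QCE-LQR uses $\mathcal{O}(\log T)$ bits and attains $\tilde{\mathcal{O}}(\sqrt{T})$ regret). I will prove them separately, as Theorems \ref{thm:main_converse} and \ref{thm:main_ach} are meant to, and then combine them.

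\textbf{Converse.} The plan is to reduce to a problem that is easier for the scheme than the real one. Suppose the plant knows $(A,B)$ exactly, so only the cost-dependent gain $K^\star$ must be communicated. Since $B(T)$ is the total budget, the plant can apply at most $2^{B(T)}$ distinct gain sequences over the horizon.

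Next, I will place a prior over a family of systems (or cost matrices) whose optimal gains $K^\star$ fill a small ball in the space of gains. Around $K^\star$, the per-step excess cost of a fixed stabilizing gain $K$ is quadratic:
\begin{equation*}
J(K)-J(K^\star)=\operatorname{tr}\bigl((K-K^\star)^\top (R+B^\top P B)(K-K^\star)\Sigma_K\bigr)\ge c\|K-K^\star\|^2 .
\end{equation*}
The cumulative regret is then at least $c\sum_t \mathbb{E}\|K_t-K^\star\|^2$, up to transient terms of size $\mathcal{O}(1)$ per switch. A covering-number (rate-distortion) argument then bounds the distortion from below. A codebook of size $2^{B}$ covering a $d$-dimensional ball leaves worst-case (or Bayes) error $\|K_t-K^\star\|^2\gtrsim 2^{-2B/d}$ for every $t$, because every $K_t$ is a function of the bits. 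Regret is therefore $\gtrsim T\,2^{-2B/d}$. Requiring this to be $\mathcal{O}(T^\alpha)$ with $\alpha<1$ forces $B\ge \tfrac{d(1-\alpha)}{2}\log T-\mathcal{O}(1)=\Omega(\log T)$.

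The delicate point in the converse is that the bits may be sent adaptively over time. I will handle this by noting that the whole sequence $(K_t)$ is determined by the full bit string plus the plant's local randomness, which is independent of $K^\star$. Averaging over that randomness and using convexity of the quadratic loss reduces the problem to a single static code.

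\textbf{Achievability.} QCE-LQR runs $\varepsilon$-greedy certainty equivalence in doubling epochs $\tau_k=2^k$, following \cite{simchowitz20a}. At the start of each epoch the plant sends a quantized OLS estimate $(\hat A_k,\hat B_k)$, and the controller returns $K_k=K(\hat A_k,\hat B_k)$. There are $\log_2 T$ epochs, so the bit count is $\mathcal{O}(\log T)$ provided each message costs $\mathcal{O}(1)$ bits.

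To keep messages that cheap, I will quantize the \emph{innovation} relative to the previous estimate. The confidence set in epoch $k$ is anisotropic, with radius $\sim \tau_k^{-1/4}$ in the slow subspace and $\tau_k^{-1/2}$ in the fast one. The quantizer uses mixed-scale lattice codebooks of fixed cell count, one matched to each rate. The quantization error is then a fixed fraction $\varrho^{-1}$ of the statistical error in each subspace, which inflates the two error terms by the factors $Q_{\mathrm{slow}}(\varrho)$ and $Q_{\mathrm{fast}}(\varrho)$. Substituting into the perturbation bound $J(K_k)-J^\star\lesssim\|(\hat A,\hat B)-(A,B)\|^2$, which holds in the form of \cite{simchowitz20a} with directional refinement, and summing over epochs gives $\tilde{\mathcal{O}}(\sqrt{T})$ regret.

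The main obstacle is \emph{overflow}: early in the run, the OLS error may exceed the codebook radius. I plan two safeguards. The first is a burn-in phase that runs the safe controller until the high-probability OLS bounds become valid. The second is a fallback on an overflow event: the plant sends an escape symbol plus a coarser description, or reverts to the safe gain. On the good event, which has probability $1-\delta$ with $\delta=1/T$, overflow does not occur after burn-in; the bad event contributes $\mathcal{O}(1)$ expected regret. Both safeguards add only $\mathcal{O}(1)$ bits per epoch, so the total budget stays at $\Theta(\log T)$, matching the converse.
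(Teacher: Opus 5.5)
\paragraph{Gap in the converse.} The problem is the step \emph{regret $\ge c\sum_t\mathbb{E}\|K_t-K^\star\|^2$ up to $\mathcal{O}(1)$ transients per switch}. The formula $J(K)-J(K^\star)=\operatorname{tr}(\cdots\Sigma_K)$ describes a \emph{fixed} gain run to stationarity. An admissible scheme, however, may change $K_t$ at every step, since $K_t$ is an arbitrary, possibly randomized, function of $t$ and $\Pi_t$. There can therefore be order-$T$ switches, and your correction terms can total order $T$, which swamps the $T\,2^{-2B/(d_ud_x)}$ lower bound you are after.

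The missing tool is the exact one-step identity the paper uses. With $V(x)=x^\top Px$, $S=R_u+B^\top PB$ and $J^\star=\sigma_w^2\operatorname{tr}(P)$, one has $c(x,u)+\mathbb{E}\,V(Ax+Bu+w)-V(x)-J^\star=(u-K^\star x)^\top S(u-K^\star x)$ for \emph{all} $(x,u)$. Summing gives $\sum_t\epsilon_t=\mathbb{E}[\mathrm{Regret}_T]+\mathbb{E}\,V(x_{T+1})$, with no stationarity needed. Each $\epsilon_t$ is then at least $\sigma_w^2\lambda_{\min}(R_u)\,\mathbb{E}\|K_t-K^\star\|_F^2$. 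This holds because $K_t$ is fixed before $w_{t-1}$ is drawn, so $\mathbb{E}[x_tx_t^\top\mid\mathcal{G}_{t-1}]\succeq\sigma_w^2 I$; your argument never uses this timing fact. The terminal term must also be bounded above by a Lyapunov recursion, which is what produces the weights $1-((c-1)/c)^{T-t}\ge 1/c$.

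You also need an explicit hard family. A prior on the cost matrices is useless because the controller knows them. The prior must be on $(A,B)$ with fixed cost, chosen so that $K_\infty(A,B)$ fills a $d_ud_x$-dimensional cube with uniform constants. The paper's $(A_K,B_K)$ with common $P=cR_x$ does exactly this.

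Your per-$t$ covering bound is fine once you \emph{condition} on the controller's randomness rather than averaging it out. Each $K_t$ then ranges over at most $2^{B(T)+1}$ points while $K^\star$ stays uniform. This is a valid alternative to the paper's mutual-information bound on the time-averaged estimator $K_Z$.

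\paragraph{Achievability: a different design whose key steps are asserted, not shown.} You use per-subspace codebooks, a burn-in and an escape symbol. The paper instead uses a single isotropic $\varrho$-net with radius $m_k(c_{\mathrm{slow}}\tau_k^{-1/4}+c_{\mathrm{fast}}\tau_k^{-1/2})$, an Elias-gamma coded multiplier $m_k$, an absolute first message, and a safe-set projection. The gaps in your version are as follows.

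\emph{Per-subspace codebooks.} These need a \emph{directional} OLS bound on policy-dependent subspaces that both sides know. The bound the analysis actually rests on (Lemma 5.4 of \cite{simchowitz20a}) only controls the total Frobenius error, by a sum of the two terms. This is why the paper quantizes isotropically.

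\emph{Burn-in.} \emph{Burn in until the bounds are valid} is not implementable, because the onset $c\tau_{\mathrm{ls}}$ and the envelope constants depend on the unknown $\|P_\star\|_{\mathrm{op}}$. The paper replaces them with a data-driven trigger and the computable proxy $1.0835\|P_\infty(\tilde A_{k_{\mathrm{safe}}},\tilde B_{k_{\mathrm{safe}}})\|_{\mathrm{op}}$.

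\emph{Self-consistency of the innovation radius.} This is the most important omission. To get $\mathcal{O}(1)$ bits per epoch you must show $r_k\ge E_k+E_{k-1}+\varrho r_{k-1}$, with $E_{k-1}$ evaluated at $\tau_k/2$. This condition dictates the factors $(1+2^{1/4})/(1-\varrho 2^{1/4})$ and $(1+\sqrt{2})/(1-\varrho\sqrt{2})$. It is exactly what makes the overflow multiplier contract (Lemma~\ref{lem:contraction}), i.e.\ what would make your escapes stop.

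\emph{Bad-event regret.} \emph{The bad event contributes $\mathcal{O}(1)$ expected regret} does not follow, since nothing bounds the cost on the failure event by $\mathcal{O}(T)$. The paper's achievability is only a high-probability statement and does not need this.
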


\section{Setting}\label{sec:setting}

\textbf{Notation:} We write $\rho(\cdot)$ for the spectral radius of a matrix, $\|\cdot\|_{\mathrm{op}}$ for the operator (spectral) norm, and $\|\cdot\|_{\mathrm{F}}$ for the Frobenius norm. We use $\lesssim$ and $\gtrsim$ to denote inequality up to universal constants, and $\asymp$ for equality up to universal constants. All logarithms, entropies, and mutual informations are in base~$2$.

Consider the stochastic linear system
\begin{equation}
    x_{t+1} = A x_t + B u_t + w_t,\qquad
    w_t \stackrel{\text{i.i.d.}}{\sim} \mathcal N(0,\sigma_w^2 I), \quad x_1=0,
    \label{eq:sys}
\end{equation}
where $x_t \in \mathbb{R}^{\dx}$ is the state, $u_t \in \mathbb{R}^{\du}$ is the control input, and the per-step cost is
\begin{equation}\label{eq:cost-fn}
    c(x,u) = x^\top \Rx x + u^\top \Ru u, \qquad \Rx\succ 0,\; \Ru\succ 0.
\end{equation}

\begin{remark}
    The isotropic noise covariance $\sigma_w^2 I$ is without loss of generality when the noise covariance is known. Given a system with $w_t \sim \mathcal{N}(0, \Sigma)$ for a known $\Sigma \succ 0$, let $L$ denote its Cholesky factor ($\Sigma = LL^\top$). The change of coordinates $\bar{x}_t = L^{-1} x_t$ yields $\bar{x}_{t+1} = \bar{A}\,\bar{x}_t + \bar{B}\,u_t + \bar{w}_t$ with $\bar{A} = L^{-1}AL$, $\bar{B} = L^{-1}B$, $\bar{w}_t \sim \mathcal{N}(0, I)$, and cost matrices $\bar{R}_x = L^\top R_x L \succ 0$, $\bar{R}_u = R_u$.
\end{remark}
For a stationary controller \(K\), define the infinite-horizon average cost
\begin{equation} \label{cost}
    J(A,B,K) := \lim_{T\to\infty}\frac{1}{T}\,\E  \Big[\sum_{t=1}^T c(x_t,Kx_t)\Big],
    \qquad
    \Jstar(A,B):=\inf_K J(A,B,K),
\end{equation}
and define the finite-horizon regret of a (possibly time-varying, randomized) policy \(\pi=\{u_t\}\) by
\begin{equation}
    \mathrm{Regret}_T := \sum_{t=1}^T c(x_t,u_t)  -  T\,\Jstar(A,B).
    \label{eq:regret}
\end{equation}
Simchowitz and Foster showed that without a rate limit, the optimal regret scales as $\Theta(\du^2 \dx\sqrt{T})$ \cite{simchowitz20a}.

\begin{assumption} \label{asm:pre}
    We assume that the system $(A,B)$ is stabilizable and a stabilizing $K_0$, $\rho(A - BK_0) < 1$, is known by both the controller and system upfront.
\end{assumption}

In our setting, communication is over a noiseless but \emph{asymmetric} bidirectional channel, consistent with standard IoT and edge architectures where the battery-powered plant is power-limited on its uplink while the grid-powered controller has ample downlink capacity. The uplink (plant $\to$ controller) is rate-limited: let \(\Pi_t\) denote the uplink transcript available to the controller when the time-$t$ policy is selected, and let \(B(t)\) denote its cumulative bit length, with total \(B(T)\) by horizon \(T\). This transcript is generated causally from plant-side information available before the action at time $t$ is applied. The downlink (controller $\to$ plant) is unconstrained, allowing the controller to transmit the full-precision policy matrix. We assume the class of linear policies $u_t = K_t x_t$. At time \(t\), the control policy $K_t$ must be measurable with respect to \(\Pi_t\) (and the controller's local randomness).

While an unconstrained downlink theoretically allows the controller to transmit the cost matrices $(\Rx, \Ru)$, enabling the plant to solve the Discrete Algebraic Riccati Equation (DARE) locally to obtain $K_\infty(A,B)$, transmitting only the policy $K_t$ reflects key operational realities in cloud-edge systems. 
First, in fleet deployments, cost matrices encode proprietary strategic priorities; transmitting only $K_t$ obfuscates these costs, as inverse-LQR is ill-posed. Second, edge devices lack the computational capability for iterative DARE solvers (e.g., Schur decompositions), whereas OLS estimation requires only basic matrix operations. Finally, in decentralized systems like energy grids, the effective cost reflects dynamic global states. Streaming this full context is bandwidth-prohibitive, so the cloud must evaluate the global objective and transmit only the compiled local policy $K_t$.

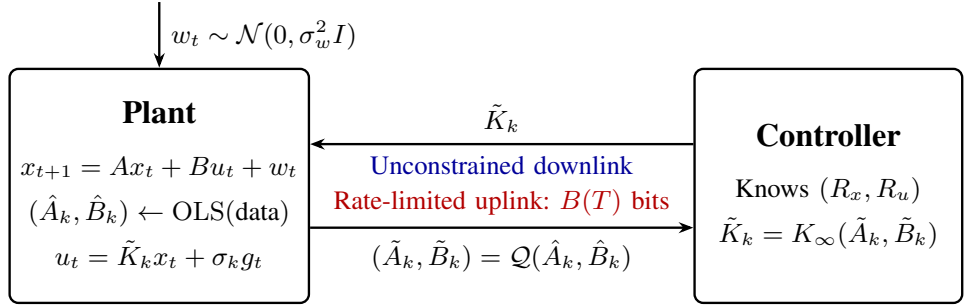
\begin{figure}[t]
    \centering
    \resizebox{0.7\columnwidth}{!}{%
    \begin{tikzpicture}[
        block/.style={draw, rounded corners=3pt, minimum width=3.2cm, minimum height=2.8cm, align=center, thick},
        arr/.style={-{Stealth[length=5pt]}, thick},
        lbl/.style={font=\footnotesize, align=center},
        eq/.style={font=\footnotesize\itshape, text=black!70}
    ]
    \node[block] (plant) at (0,0) {
        \textbf{Plant}\\[4pt]
        {\footnotesize $x_{t+1} = Ax_t + Bu_t + w_t$}\\[2pt]
        {\footnotesize $(\hat A_k, \hat B_k) \gets \mathrm{OLS}(\text{data})$}\\[2pt]
        {\footnotesize $u_t = \tilde K_k x_t + \sigma_k g_t$}
    };

    \node[block] (ctrl) at (8,0) {
        \textbf{Controller}\\[4pt]
        {\footnotesize Knows $(R_x, R_u)$}\\[2pt]
        {\footnotesize $\tilde K_k = K_\infty(\tilde A_k, \tilde B_k)$}
    };

    \draw[arr] ([yshift=-0.5cm]plant.east) -- node[below, lbl] {$(\tilde A_k, \tilde B_k) = \mathcal{Q}(\hat A_k, \hat B_k)$} node[above, lbl, text=red!70!black] {Rate-limited uplink: $B(T)$ bits} ([yshift=-0.5cm]ctrl.west);

    \draw[arr] ([yshift=0.5cm]ctrl.west) -- node[above, lbl] {$\tilde K_k$} node[below, lbl, text=blue!60!black] {Unconstrained downlink} ([yshift=0.5cm]plant.east);

    \draw[arr] (0,2.2) -- node[right, lbl] {$w_t \sim \mathcal{N}(0, \sigma_w^2 I)$} (plant.north);

    \end{tikzpicture}%
    }
    \caption{System model. Communication occurs sparsely at block-wise \emph{epochs} indexed by $k$ (introduced in Section \ref{sec:algorithm-overview}, below). At each epoch, the plant computes OLS estimates $(\hat A_k, \hat B_k)$ of the unknown dynamics and transmits a quantized version $(\tilde A_k, \tilde B_k)$. The controller, which knows the cost matrices $(R_x, R_u)$, solves the DARE to compute the certainty-equivalent policy $\tilde K_k = K_\infty(\tilde A_k, \tilde B_k)$ and transmits it to the plant without rate constraints. At every time step $t$, the plant applies the current epoch's policy $u_t = \tilde K_k x_t + \sigma_k g_t$ with decaying exploration noise.}
    \label{fig:system}
\end{figure}

We use $t$ to denote the discrete time index, and $k$ to denote the algorithm epoch index, described in Section \ref{sec:algorithm-overview}, below. For a stabilizable pair $(A,B)$, let $P_\infty(A,B)$ denote the unique stabilizing solution to the discrete algebraic Riccati equation (DARE)
\begin{equation}\label{eq:dare}
    P = A^\top P A + \Rx - A^\top P B\,(R_u + B^\top P B)^{-1} B^\top P A,
\end{equation}
and let $K_\infty(A,B)$ be the corresponding optimal stationary gain,
\begin{equation}\label{eq:Kstar}
    K_\infty(A,B) := -(R_u + B^\top P_\infty(A,B)\, B)^{-1} B^\top P_\infty(A,B)\, A.
\end{equation}
Denote the optimal closed-loop matrix by
\begin{equation}\label{eq:Acl-star}
    A_{\mathrm{cl},\star} := A + B K_\infty(A,B).
\end{equation}
For a Schur-stable matrix $M$ and a positive semidefinite matrix $Q$, let $\mathsf{dlyap}[M, Q]$ denote the unique solution $X \succeq 0$ to the discrete Lyapunov equation $X = M^\top X M + Q$. When $Q = I$, we abbreviate $\mathsf{dlyap}[M] := \mathsf{dlyap}[M, I]$.

Define:
\begin{align}
    \Jstar &:= \sigma_w^2 \Tr(P_\infty(A,B)), & &\text{(optimal cost),} \label{eq:Jstar_tr} \\
    J_0 &:= J(\Astar,\Bstar,K_0), & &\text{(initial controller cost),} \label{eq:J0} \\
    \Pzero &:= J_0 / \dx, & &\text{(normalized initial cost),} \label{eq:P0} \\
    \PsiB &:= \max\{1,\, \opnorm{\Bstar}\}, & &\text{(controllability proxy).} \label{eq:PsiB}
\end{align}
When the arguments $(A,B)$ are omitted from these functions (e.g., $P_\infty$), they correspond to the true system parameters.
Following \cite{simchowitz20a}, define the \emph{safe constant}
\begin{equation}\label{eq:Csafe-def}
    \Csafe(A,B) := 54\,\opnorm{P_\infty(A,B)}^5,
\end{equation}
which governs the radius within which certainty-equivalent control remains stabilizing, in Section~\ref{sec:quantized-perturbations} below. We use $C_0$ throughout to denote a universal constant arising from the OLS self-normalized bound in Section~\ref{sec:regret-safe-rounds} below.

\section{Converse: Necessity of \texorpdfstring{$\Omega(\log T)$}{Omega(log T)} Bits}
\label{sec:converse}

In this section, we establish a fundamental information-theoretic limit on the communication required for near-optimal adaptive control. We prove that any scheme achieving sub-linear regret must transmit at least $\Omega(\log T)$ bits from the plant to the controller.

Fix $r>0$ and define the class of stabilizable instances whose optimal gain lies in a Frobenius ball of radius $r$:
\begin{equation}\label{eq:Theta-def}
    \Theta := \bigl\{(A,B) : (A,B) \text{ stabilizable},\ \|K_\infty(A,B)\|_F \le r\bigr\}.
\end{equation}

\begin{theorem}[Fundamental Regret-Communication Trade-off]\label{thm:main_converse}
    Fix $\alpha\in[1/2,1)$ and $C_1>0$. Consider any quantized control scheme of the form $u_t = K_t x_t$, where for each $t$, the random matrix $K_t\in\R^{\du\times\dx}$ is measurable with respect to the controller's received uplink transcript $\Pi_t$ and its local randomness (cf.\ Section~\ref{sec:setting}). If
    \begin{equation}\label{eq:regret-assumption}
        \sup_{\theta \in \Theta} \E[\mathrm{Regret}_T] \le C_1 T^\alpha,
    \end{equation}
    then there exists a constant $C$, depending only on $(\Rx,\Ru,\sigma_w^2,r,\dx,\du,C_1,\alpha)$ and not on $T$, such that
    \begin{equation}
        B(T)  \ge  \frac{\du\dx}{2}\,(1-\alpha)\,\log_2 T  -  C.
        \label{eq:bits-lb}
    \end{equation}
\end{theorem}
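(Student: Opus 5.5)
Our approach is a rate--distortion style argument: low regret forces the controller's gains to be close to $K_\infty(A,B)$ for most times, and so the controller must effectively learn $K_\infty$ to accuracy $\approx T^{-(1-\alpha)/2}$. By a covering/packing (Fano-type) bound, that takes about $\frac{\du\dx}{2}(1-\alpha)\log T$ bits. Throughout, we take the plant to know $\theta$ exactly, which only helps the scheme.

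\emph{Step 1 (regret $\Rightarrow$ gain accuracy).} Use the standard quadratic excess-cost identity for linear policies: for a stabilizing $K$ near $K_\infty$,
\[
J(A,B,K)-\Jstar = \sigma_w^2\Tr\bigl(\Sigma_K (K-K_\infty)^\top(\Ru+B^\top P_\infty B)(K-K_\infty)\bigr)\ \ge\ c_\theta\|K-K_\infty\|_F^2,
\]
with $\Sigma_K\succeq I$ (isotropic noise) and $\Ru+B^\top P_\infty B\succeq \Ru\succ 0$. So $c_\theta\ge\sigma_w^2\lambda_{\min}(\Ru)$, uniformly over $\Theta$. For time-varying $K_t$, we pass from stationary to per-step costs by a performance-difference decomposition with value function $x^\top P_\infty x$:
\[
\E\sum_t \bigl(c(x_t,u_t)-\Jstar\bigr) = \sum_t \E\bigl[x_t^\top (K_t-K_\infty)^\top(\Ru+B^\top P_\infty B)(K_t-K_\infty)x_t\bigr] - \E[x_{T+1}^\top P_\infty x_{T+1}] + O(1).
\]
Since $x_t$ carries fresh noise $w_{t-1}$ independent of $K_t$, we have $\E[x_tx_t^\top\mid K_t]\succeq\sigma_w^2 I$. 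The terminal term is handled by truncating or using a prefix horizon, which is where some care is needed. The conclusion is $\sum_{t\le T}\E\|K_t-K_\infty\|_F^2\lesssim T^\alpha$. Hence there is a time $t^*\le T$, or a random uniform $t$, with $\E\|K_{t^*}-K_\infty(\theta)\|_F^2\le D:=C' T^{\alpha-1}$, and $K_{t^*}$ is a function of $\Pi_{t^*}$ (at most $B(T)$ bits) and independent local randomness.

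\emph{Step 2 (packing/Fano).} Put a prior on $\Theta$: choose $K$ uniformly in the ball $\{\|K\|_F\le r\}$ and realize it as $K_\infty(A,B)$ for some stabilizable $(A,B)$. For example, fix $B=I$-like padding $B=[I;0]$ or a suitable fixed $B$, and solve the inverse-LQR problem for $A$ given $K$. A simpler route is to pick $A$ so that $K_\infty(A,B)$ is a smooth bijection from a parameter neighborhood onto a set of positive Lebesgue measure in $\R^{\du\dx}$. Then $\mathcal K:=K_\infty(\theta)$ has a density bounded above on a set of positive measure, so $h(\mathcal K)\ge -C''$. By the data processing inequality,
\[
B(T)\ge H(\Pi)\ge I(\mathcal K;\Pi)\ge I(\mathcal K;K_{t^*}) \ge h(\mathcal K)-h(\mathcal K-K_{t^*}) .
\]
The maximum-entropy bound gives $h(\mathcal K-K_{t^*})\le \frac{\du\dx}{2}\log\bigl(2\pi e D/(\du\dx)\bigr)$. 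This yields $B(T)\ge \frac{\du\dx}{2}(1-\alpha)\log T - C$. Conditioning on the local randomness and on $t^*$ costs only $O(1)$ or $\log T$ terms. Using a fixed deterministic $t^*$ via averaging plus Jensen on the concave log avoids the $\log T$.

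\emph{Main obstacle.} The main obstacle is Step 1's rigor for arbitrary, possibly destabilizing, time-varying gains. The regret sum must be lower bounded by $\sum\E\|K_t-K_\infty\|^2$ with constants uniform over the prior's support. This requires the identity to hold without assuming stability and controlling the terminal $P_\infty$ term. Destabilizing gains only increase cost, so I would first try the exact telescoping identity, which is valid pathwise with no stability assumption. The terminal term I would bound by restricting to a horizon $T/2$ and noting that regret on a prefix is at most $C_1T^\alpha$ plus the $O(T^\alpha)$ slack. A secondary concern is making the prior's support compact, stabilizable and inside $\Theta$ with $\opnorm{P_\infty}$ bounded, so that all constants depend only on the listed parameters.
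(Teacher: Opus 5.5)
Your outline matches the paper's: the identity $\sum_t\epsilon_t=\mathbb{E}[\mathrm{Regret}_T]+\mathbb{E}[V(x_{T+1})]$ with $V(x)=x^\top Px$, the fresh-noise bound $\epsilon_t\ge\sigma_w^2\lambda_{\min}(R_u)\,\mathbb{E}\|K_t-K\|_F^2$ for $t\ge 2$, a time-randomized estimator, a uniform prior, data processing, and Gaussian max-entropy. However, the step you flag as the main obstacle is not closed by your proposed fix.

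Truncating to $[1,T/2]$ does not remove the terminal term. The identity there reads $\sum_{t\le T/2}\epsilon_t=\mathbb{E}[\mathrm{Regret}_{T/2}]+\mathbb{E}[V(x_{T/2+1})]$, so you still need an upper bound on $\mathbb{E}[V(x_{T/2+1})]$. Even the claim $\mathbb{E}[\mathrm{Regret}_{T/2}]\lesssim T^\alpha$ is not automatic: the hypothesis is only at horizon $T$, and $c(x_t,u_t)-J_\star$ can be negative, so you would need a lower bound on the suffix regret. The missing idea is to feed the trivial bound $c(x_t,u_t)\ge x_t^\top R_x x_t\ge \lambda V(x_t)$, with $\lambda:=\lambda_{\min}(R_x)/\|P\|_{\mathrm{op}}$, back into the one-step identity. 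This gives the contracting recursion $\mathbb{E}V(x_{t+1})\le(1-\lambda)\mathbb{E}V(x_t)+J_\star+\epsilon_t$, where $J_\star=\sigma_w^2\,\mathrm{tr}(P)$. Hence $\mathbb{E}V(x_{T+1})\le J_\star/\lambda+\sum_t(1-\lambda)^{T-t}\epsilon_t$, and therefore $\sum_t\bigl(1-(1-\lambda)^{T-t}\bigr)\epsilon_t\le \mathbb{E}[\mathrm{Regret}_T]+J_\star/\lambda$. The terminal term can cancel only a geometrically discounted tail of the $\epsilon_t$. So $\lambda\sum_{t<T}\epsilon_t\le \mathbb{E}[\mathrm{Regret}_T]+J_\star/\lambda$ holds for arbitrary, even destabilizing, gain sequences; this is exactly how the paper disposes of it.

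For the constants to depend only on the listed parameters, $\lambda$ must be bounded below uniformly on the prior's support. Your hard family leaves this, and more, open. Fixing $B$ and solving for $A$ cannot work in general. If $d_u>d_x$, the map $A\mapsto K_\infty(A,B)$ is smooth from an open subset of $\mathbb{R}^{d_x^2}$ into $\mathbb{R}^{d_u d_x}$, so its image is Lebesgue-null. Then $K_\infty(\theta)$ has no density and the step $h(\mathcal K)\ge -C''$ fails. The ``smooth local bijection'' alternative is asserted without proof.

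The paper instead builds an explicit inverse-LQR family with a common Riccati solution $P=cR_x$, where $c>1+r^2\lambda_{\max}(R_u)/\lambda_{\min}(R_x)$. Set $M_K=(c-1)R_x-K^\top R_uK\succ 0$, $\Phi_K=P^{-1/2}M_K^{1/2}$, $B_K=-P^{-1}\Phi_K^{-\top}K^\top R_u$, and $A_K=\Phi_K-B_KK$. One checks that $R_x+K^\top R_uK+\Phi_K^\top P\Phi_K=P$ and $B_K^\top P\Phi_K=-R_uK$. Completing the square then shows that $u=Kx$ is the unique Bellman minimizer, while $\Phi_K^\top P\Phi_K\preceq\frac{c-1}{c}P$ gives stability. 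Hence $K_\infty(A_K,B_K)=K$ for every $K$ in the cube $[-a,a]^{d_u\times d_x}$, $a=r/\sqrt{d_u d_x}$; note that $B_K$ must vary with $K$. The prior can be placed uniformly on $K$ itself, so $h(K)=d_ud_x\log(2a)$ is explicit, and $\lambda=1/c$ exactly.
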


\begin{corollary}[Optimal Regret Requires $\Omega(\log T)$ Bits]\label{cor:sqrt-t}
    To achieve the optimal unquantized regret scaling of $\E[\mathrm{Regret}_T] = \mathcal{O}(\sqrt{T})$ corresponding to $\alpha = 1/2$ in \eqref{eq:regret-assumption}, the total communication budget must scale as:
    \begin{equation}
        B(T) \ge \frac{\du\dx}{4}\,\log_2 T - C.
        \label{eq:bits-lb-sqrt}
    \end{equation}
\end{corollary}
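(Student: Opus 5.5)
The corollary is a direct specialization of Theorem~\ref{thm:main_converse}, so the plan is simply to instantiate that result at the endpoint $\alpha = 1/2$ of its admissible range $[1/2,1)$. First, we will make precise what ``$\E[\mathrm{Regret}_T] = \mathcal{O}(\sqrt{T})$'' means here. It should be read uniformly over the instance class $\Theta$ of \eqref{eq:Theta-def}: there is a constant $C_1>0$, independent of $T$, with $\sup_{\theta\in\Theta}\E[\mathrm{Regret}_T] \le C_1 T^{1/2}$ for all $T$. This is exactly hypothesis \eqref{eq:regret-assumption} with $\alpha = 1/2$, for a scheme of the form $u_t = K_t x_t$ with $K_t$ measurable with respect to $\Pi_t$ and the controller's local randomness.

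Next, we invoke Theorem~\ref{thm:main_converse} with this $C_1$ and $\alpha = 1/2$. It yields a constant $C$, depending only on $(\Rx,\Ru,\sigma_w^2,r,\dx,\du,C_1)$ and not on $T$, such that
\begin{equation*}
    B(T) \ge \frac{\du\dx}{2}\Bigl(1-\tfrac12\Bigr)\log_2 T - C = \frac{\du\dx}{4}\log_2 T - C,
\end{equation*}
which is \eqref{eq:bits-lb-sqrt}. Since $\du\dx \ge 1$, the right-hand side grows like $\Omega(\log T)$, which gives the headline claim.

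No step is a real obstacle; the substantive content sits entirely in Theorem~\ref{thm:main_converse}. The only care needed is interpretive. The $\mathcal{O}(\sqrt{T})$ guarantee must hold in the worst case over $\Theta$ with a $T$-independent constant, not merely for a single fixed instance. Otherwise the hypothesis of the theorem is not met, and indeed a scheme tailored to one known instance could use $\mathcal{O}(1)$ bits. We will therefore state explicitly that the corollary is understood in this minimax sense, matching \eqref{eq:regret-assumption}.
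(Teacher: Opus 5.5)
Your proposal is correct and matches the paper: the corollary is Theorem~\ref{thm:main_converse} evaluated at $\alpha = 1/2$, giving $\frac{\du\dx}{2}\cdot\frac{1}{2}\log_2 T - C$. Your reading of the $\mathcal{O}(\sqrt{T})$ hypothesis as holding uniformly over $\Theta$, with a $T$-independent constant $C_1$, is the right interpretation of \eqref{eq:regret-assumption}.
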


\begin{proof}[Proof sketch]
    The full proof is in Appendix~\ref{app:converse}. We construct a concrete hard subclass $\Theta_{\mathrm{hard}} \subset \Theta$ parametrized by a cube of gains $K \in [-a,a]^{\du \times \dx}$, where $a := r/\sqrt{\du\dx}$ is chosen so that every $K$ in the cube satisfies $\|K\|_F \le r$ and hence $\Theta_{\mathrm{hard}} \subset \Theta$. Each $K$ indexes a stabilizable instance $(A_K, B_K)$ for which $K_\infty(A_K,B_K) = K$, and all instances share the common optimal cost $J_\infty(A_K,B_K) = \sigma_w^2 \Tr(P)$. A regret identity then decomposes horizon-$T$ regret into a sum of per-step excess costs, each lower bounded by a constant times $\E\|K_t - K\|_F^2$ via the process-noise covariance floor. Averaging over the first half of the horizon produces a single estimator $\widehat K$ of $K$ satisfying $\E\|\widehat K - K\|_F^2 \le C_{\mathrm{est}}\,T^{\alpha-1}$. Finally, placing a uniform prior on the cube and applying the data-processing inequality bounds the mutual information by the transcript length $B(T)$. We lower bound the distortion in $K$ using a Gaussian reduction. Combining these yields~\eqref{eq:bits-lb}.
\end{proof}

\section{Achievability: \texorpdfstring{$\tilde{\mathcal{O}}(\sqrt T)$}{O\textasciitilde(sqrt T)} Regret with \texorpdfstring{$\mathcal{O}(\log T)$}{\mathcal{O}(log T)} Bits}
\label{sec:achievability}

In this section, we present the Quantized Certainty Equivalent LQR (QCE-LQR) algorithm, which builds upon the unquantized adaptive control scheme of \cite{simchowitz20a}. By integrating our novel quantization protocols, we show how the plant can compress its local dynamics estimates into $\mathcal{O}(\log T)$ total bits without degrading the $\tilde{\mathcal{O}}(\sqrt{T})$ regret scaling.

Define dimensional constants
\begin{equation}\label{eq:ds-def}
    d := \dx + \du, \qquad d_s := \dx^2 + \dx\du,
\end{equation}
where $d_s$ is the total number of unknown system parameters. For a codebook resolution $\varrho \in (0,1/\sqrt{2})$, define
\begin{equation}\label{eq:Crho-def}
    \Crho := \frac{1}{1-\varrho\sqrt{2}},
\end{equation}
and the quantization inflation factors
\begin{align}
    Q_{\mathrm{slow}}(\varrho) &= \varrho^2 \Crho^2 \left( \frac{1+2^{1/4}}{1-\varrho\,2^{1/4}} \right)^{\!2} \cdot C_0\,(1.0835)^4, \label{eq:Qslow} \\
    Q_{\mathrm{fast}}(\varrho) &= \varrho^2 \Crho^2 \left( \frac{1+\sqrt{2}}{1-\varrho\sqrt{2}} \right)^{\!2} \cdot C_0\,(1.0835)^6. \label{eq:Qfast}
\end{align}
Finally, define the OLS mixing time and the pre-safe duration bound (Lemma~\ref{lem:presaferegret} in Section \ref{sec:presafe-regret}, below):
\begin{align}
    \tauls &:= d \max\{1, \du/\dx\} \left(\Pstarnorm^3 \Pzero + \Pstarnorm^{11} \PsiB^6\right) \log \frac{d\Pstarnorm}{\delta}, \label{eq:tauls-def} \\
    \tausafe &:= d\,\Pstarnorm^{10}\,(1+\opnorm{K_0}^2)\,\log\frac{\PsiB^2 \Jzero}{\delta}. \label{eq:tausafe-def}
\end{align}
For the communication bound, let $\beta := \varrho\sqrt{2}$ and
\begin{equation}\label{eq:m-infty-def}
    m_\infty := \frac{2 - \beta}{1 - \beta}, \qquad b_\varrho := 2\left\lfloor \log_2 \lceil m_\infty \rceil \right\rfloor + 1.
\end{equation}
Let $\theta_{\max}$ denote the largest absolute entry of the system matrices,
\begin{equation}\label{eq:theta-max-def}
    \theta_{\max} := \max_{i,j}\{|A_{ij}|, |B_{ij}|\},
\end{equation}
and define the one-time initialization cost
\begin{equation}\label{eq:Binit-def}
    B_{\mathrm{init}} := \mathcal{O}\!\left(d_s \log\!\left(1 + \theta_{\max}\, \Csafe(\Astar, \Bstar)\, \sqrt{d_s}\right)\right).
\end{equation}

\begin{theorem}[Main Achievability Result]
    \label{thm:main_ach}
    Under the setting of Section~\ref{sec:converse}, for any $\delta \in (0,\frac{1}{T})$, the QCE-LQR algorithm (Algorithm~\ref{alg:quantized_ce_lqr_proj}) satisfies the following with probability at least $1-\delta$.

    \medskip
    \noindent\textbf{(a) Total Communication.} The total number of bits transmitted by horizon $T$ is bounded by
    \begin{equation}
        \label{eq:total-bits}
        B(T) \le \left[ d_s \log_2\!\left(1+\frac{2}{\varrho}\right) + b_\varrho \right] \log_2 T + B_{\mathrm{init}} + \mathcal{O}\!\left(\log \tausafe + \log^2 \tauls\right) = \mathcal{O}\big( d_s \log T \big).
    \end{equation}
    \medskip
    \noindent\textbf{(b) Total Regret.} The regret satisfies
    \begin{align}
        \mathrm{Regret}_T \lesssim \;& \sqrt{T}\left( \du\sqrt{\dx}\,\PsiB^3\,\Pstarnorm^{11/2}\,\sqrt{\log\tfrac{\Pstarnorm}{\delta}} + \sqrt{d\log\tfrac{1}{\delta}}\,\Pstarnorm^4 + Q_{\mathrm{slow}}(\varrho)\,\frac{\dx\du}{\sigmain^2}\,\Pstarnorm^{10}\,\log\tfrac{1}{\delta} \right) \label{eq:regret-dominant} \\
        & + \log T \cdot Q_{\mathrm{fast}}(\varrho)\,\dx^2\,\Pstarnorm^{11}\,\log^2\!\tfrac{1}{\delta} \nonumber \\
        & +  d^2 \PsiB^2 \Pzero \Pstarnorm^{10} (1+\opnorm{K_0}^2) \log \tfrac{\PsiB^2 \Jzero}{\delta}\, \log \tfrac{1}{\delta} \nonumber \\
        & + \sqrt{d\dx}\,\PsiB^2\,\Pstarnorm^{17/2}\,\sqrt{\log\tfrac{\Pstarnorm}{\delta}}\,\log^2\!\tfrac{1}{\delta} + \PsiB\,\Jzero\,\Pstarnorm^3\, \tauls \log\!\tfrac{1}{\delta} \nonumber \\
        & + \dx^2\, \Pstarnorm^3\, \tauls \log^2\!\tfrac{1}{\delta}. \nonumber
    \end{align}
\end{theorem}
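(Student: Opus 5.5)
The plan is to prove parts (a) and (b) together on a single high-probability event $\mathcal{E}$, with $\Pr[\mathcal{E}]\ge 1-\delta$. On $\mathcal{E}$, three things hold for every epoch $k$. First, the OLS self-normalized bound (with constant $C_0$) holds uniformly over the doubling epochs, via a union bound over the $\mathcal{O}(\log T)$ epochs. Second, the plant's covariance matrices satisfy the two-sided excitation bounds of \cite{simchowitz20a}. Third, the estimation error splits anisotropically: it decays like $t_k^{-1/4}$ along the directions of the slow, or "$K$-parallel", subspace and like $t_k^{-1/2}$ along the fast directions.

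Time will be split into three regimes, each handled separately.
\begin{itemize}
\item \emph{Burn-in.} The plant runs $K_0$ with exploration noise for $\mathcal{O}(\tauls)$ steps, until the OLS error is controlled.
\item \emph{Pre-safe.} This lasts until the quantized estimate $(\tilde A_k,\tilde B_k)$ lies within the safe radius $1/\Csafe(\Astar,\Bstar)$ in Frobenius norm. Its duration is bounded by $\tausafe$ (Lemma~\ref{lem:presaferegret}).
\item \emph{Safe.} Doubling epochs of length $\tau_k=2^k\tau_0$, with exploration variance $\sigma_k^2\propto \tau_k^{-1/2}$.
\end{itemize}
The regret in the first two regimes is bounded crudely. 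We use the cost of a stabilizing gain, namely $\PsiB\Jzero\Pstarnorm^3$ per step, times the duration. This produces the last four lines of (b).

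For the communication bound (a), the key construction is a nested, successively refined quantizer. At epoch $k$ the plant encodes $\hat\theta_k-\tilde\theta_{k-1}$ on a $\varrho$-net of a ball whose radius is a known scale $r_k$. Here $r_k$ shrinks geometrically along the slow coordinates by $2^{-1/4}$ per doubling and along the fast coordinates by $2^{-1/2}$. A covering argument for a ball in $\R^{d_s}$ costs $d_s\log_2(1+2/\varrho)$ bits per epoch.

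The main inductive claim is that the quantization error never exceeds $\varrho\Crho r_k$. I would prove it by induction: if $\|\tilde\theta_{k-1}-\theta\|\le \Crho\varrho r_{k-1}$ and $\|\hat\theta_k-\theta\|\le r_k$, then the increment lies in a ball of radius $(1+\varrho\sqrt2\,\Crho)r_k$. The geometric series in $\beta=\varrho\sqrt2$ closes the induction, which is exactly where $\Crho$ and the factors $(1+2^{1/4})/(1-\varrho 2^{1/4})$ and $(1+\sqrt2)/(1-\varrho\sqrt2)$ come from.

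The transient OLS error is not tightly bounded a priori, so the radius itself must be adapted. An overflow flag and a zoom-out multiplier $m\le \lceil m_\infty\rceil$ are sent with an Elias-gamma code, costing $b_\varrho$ bits per epoch. Summing over $\log_2 T$ epochs and adding the initial coarse description costs gives (a). These are $B_{\mathrm{init}}$, which quantizes an entrywise box of side $\theta_{\max}$ to accuracy $1/(\Csafe\sqrt{d_s})$, and the $\mathcal{O}(\log\tausafe+\log^2\tauls)$ bits for the burn-in and pre-safe epochs.

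For the regret bound (b) in the safe regime, I would use the certainty-equivalence perturbation bound of \cite{simchowitz20a}. Inside the safe radius, $J(A,B,K_\infty(\tilde\theta))-\Jstar\lesssim \Pstarnorm^{c}\|\tilde\theta-\theta\|^2$, with the sharper anisotropic version where slow-direction errors enter quadratically with weight $\Pstarnorm^{10}$ and fast ones with $\Pstarnorm^{11}$. The total error is then split by the triangle inequality into the OLS part and the quantization part.
\begin{itemize}
\item The OLS part reproduces the unquantized terms: $\du\sqrt{\dx}\,\PsiB^3\Pstarnorm^{11/2}\sqrt{T}$, plus the exploration cost $\sum_k\tau_k\sigma_k^2\sim\sqrt T$, plus martingale concentration giving $\sqrt{d\log(1/\delta)}\,\Pstarnorm^4\sqrt T$.
\item The quantization part contributes $\tau_k(\varrho\Crho r_k)^2$ per epoch. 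On the slow subspace $r_k^2\sim \tau_k^{-1/2}/\sigmain^2$, which sums to the $Q_{\mathrm{slow}}\sqrt T$ term. On the fast subspace $r_k^2\sim\tau_k^{-1}$, which contributes a constant per epoch and hence the $\log T\cdot Q_{\mathrm{fast}}$ term.
\end{itemize}

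The main obstacle I expect is the inductive no-overflow argument for the mixed-scale codebook. Specifically, the quantized estimate must stay inside the safe set at every epoch, so that $\tilde K_k$ stabilizes and the perturbation bound applies. At the same time, the slow and fast radii must be aligned with subspaces that are themselves defined through the unknown $\theta$. I would handle this by defining the subspaces through the previously quantized estimate $\tilde\theta_{k-1}$, which both ends know, and absorbing the resulting misalignment into the constants $1.0835^4$ and $1.0835^6$ via a perturbation bound on the subspace projectors, valid inside the safe radius.
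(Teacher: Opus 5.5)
Your plan has a genuine gap at its core step. The inductive hypothesis $\|\tilde\theta_{k-1}-\theta\|\le \varrho C_\varrho r_{k-1}$ is false. The decoded estimate $\tilde\theta_{k-1}$ tracks the (projected) OLS estimate, so its distance to $\theta$ contains the OLS error, which is of order $r_{k-1}$, not $\varrho r_{k-1}$. Closing your induction would require $1+\varrho(1+\varrho\sqrt2\,C_\varrho)\le\varrho C_\varrho$, and the two sides differ by exactly $1$ for every $\varrho$. If you meant the quantization error $\|\tilde\theta_{k-1}-\hat\theta_{k-1}\|$, then the increment picks up the extra term $\|\hat\theta_{k-1}-\theta\|\le r_{k-1}$, which your radius $(1+\varrho\sqrt2\,C_\varrho)r_k$ omits.

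The paper works with the innovation $\Delta_k=\bar\theta_k-\tilde\theta_{k-1}$ and the bound $\|\Delta_k\|_2\le\mathrm{OLS}_k+\mathrm{OLS}_{k-1}+\varrho s_{k-1}$. The previous-epoch term is exactly where the numerators $1+2^{1/4}$ and $1+\sqrt2$ come from: $c_{\mathrm{slow}},c_{\mathrm{fast}}$ are chosen so that $\mathrm{OLS}_k+\mathrm{OLS}_{k-1}=s_k^{\mathrm{base}}-\varrho s_{k-1}^{\mathrm{base}}$. Moreover, between $k_{\mathrm{safe}}$ and $k_{\mathrm{ls}}$ no such bound is available, and $m_k$ can be polynomially large in $\tau_{\mathrm{ls}}$. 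So asserting $m\le\lceil m_\infty\rceil$ in every epoch is unjustified. Two ingredients are missing:
\begin{enumerate}
\item The contraction $m_k<2+\beta(m_{k-1}-1)$ of Lemma~\ref{lem:contraction}, which uses $s_{k-1}^{\mathrm{base}}/s_k^{\mathrm{base}}\le\sqrt2$. It gives $b_\varrho$ bits per epoch eventually, plus the $\mathcal{O}(\log^2\tau_{\mathrm{ls}})$ transient overhead. (The pre-safe phase costs only one flag bit per epoch, which is the $\mathcal{O}(\log\tau_{\mathrm{safe}})$ term.)
\item The envelope recurrence $s_k\le\varrho s_{k-1}+U_k$, with $U_k=E_k+E_{k-1}+s_k^{\mathrm{base}}$, $E_k=\|\bar\theta_k-\theta_\star\|_2$, and base case $U_{k_{\mathrm{safe}}}=\|\tilde\theta_{k_{\mathrm{safe}}}-\theta_\star\|_2$. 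Combined with the Cauchy--Schwarz convolution bound of Lemma~\ref{lem:convolution}, which exploits $\tau_k=2^k$, it shows the random scales cost at most $C_\varrho^2\sum_k\tau_kU_k^2$.
\end{enumerate}
Without these, neither (a) nor the quantization terms of (b) follow.

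Second, you are effectively analyzing a different algorithm, and the obstacle you anticipate is not the real one. Algorithm~\ref{alg:quantized_ce_lqr_proj} uses an isotropic $\varrho$-net with scalar radius $s_k=m_k(c_{\mathrm{slow}}\tau_k^{-1/4}+c_{\mathrm{fast}}\tau_k^{-1/2})$, and the Frobenius bound of Corollary~\ref{cor:ols-two-rate} suffices. Indeed $\tau_k(s_k^{\mathrm{base}})^2\le 2c_{\mathrm{slow}}^2\tau_k^{1/2}+2c_{\mathrm{fast}}^2$, and since $d_x^2$ enters only through $c_{\mathrm{fast}}$, it lands in the $\log T$ term with no subspace alignment needed. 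There is no anisotropic perturbation bound with weights $\|P_\star\|_{\mathrm{op}}^{10}$ and $\|P_\star\|_{\mathrm{op}}^{11}$. One uses the isotropic Lemma~\ref{lem:51}(1) with weight $\|P_\star\|_{\mathrm{op}}^{8}$. The higher powers and the factors $(1.0835)^4,(1.0835)^6$ come from $\widehat V_{\mathrm{slow}}^2\propto\widehat P_{\mathrm{op}}^2$, $\widehat V_{\mathrm{fast}}^2\propto\widehat P_{\mathrm{op}}^3$, and the two-sided comparison $\|P_\star\|_{\mathrm{op}}\le\widehat P_{\mathrm{op}}\le(1.0835)^2\|P_\star\|_{\mathrm{op}}$ from Lemma~\ref{lem:11}, not from projector perturbation.

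That computable proxy $\widehat P_{\mathrm{op}}=1.0835\|P_\infty(\tilde A_{k_{\mathrm{safe}}},\tilde B_{k_{\mathrm{safe}}})\|_{\mathrm{op}}$ is also what makes the schedule implementable, a point your plan never addresses. Similarly, your pre-safe stopping rule depends on the unknown $C_{\mathrm{safe}}(A_\star,B_\star)$. Lemma~\ref{lem:presaferegret} is proved for the actual trigger $\sqrt{\mathrm{Conf}_k}\le 1/(9C_{\mathrm{safe}}(\hat A_k,\hat B_k))$, not yours.

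Keeping $\tilde\theta_k$ inside the safe set is unnecessary. What you need is $\theta_\star\in\mathcal{B}_{\mathrm{safe}}$, which follows from $\sqrt{\mathrm{Conf}_{k_{\mathrm{safe}}}}+\epsilon_{\mathrm{target}}\le r_{\mathrm{safe}}$ and Lemma~\ref{lem:11}. After that, the controller-side projection $\check\theta_k=\Pi_{\mathcal{B}_{\mathrm{safe}}}(\tilde\theta_k)$ gives stability, and non-expansiveness gives $\|\check\theta_k-\theta_\star\|\le\|\tilde\theta_k-\theta_\star\|$ in every post-safe epoch. This includes the transient epochs with $\tau_k\le c\,\tau_{\mathrm{ls}}$. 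These run the certainty-equivalent controller rather than $K_0$, and are bounded via $\sqrt{d_s}/C_{\mathrm{safe}}(A_\star,B_\star)=\mathcal{O}(1)$.

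Relatedly, only the $d^2\Psi_B^2\mathcal{P}_0(\cdots)$ line of (b) is pre-safe cost. The $\sqrt{d d_x}$, boundary-state, and $d_x^2\tau_{\mathrm{ls}}$ terms come from, respectively, the remainder of the decomposition in Lem.~5.2 of \cite{simchowitz20a}, the state bound of its Lem.~5.3, and the fast OLS rate.
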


Theorem~\ref{thm:main_ach} shows that $\mathrm{Regret}_T = \tilde{\mathcal{O}}(\sqrt{T})$ is achievable using only $\mathcal{O}(\log T)$ bits. Part~(a) establishes that the total uplink cost grows as $[d_s \log_2(1 + 2/\varrho) + b_\varrho] \cdot \log_2 T$ plus a one-time initialization overhead $B_{\mathrm{init}}$ that depends on the system parameters but not on $T$; the detailed accounting is in Appendix~\ref{app:comm-bound}. Part~(b) gives a regret bound with $\sigmain^2$ from~\eqref{eq:sigmain-def} (see Appendix~\ref{app:achievability}). The dominant $\sqrt{T}$ coefficient is $\du\sqrt{\dx}\,\PsiB^3\,\Pstarnorm^{11/2}\,\sqrt{\log(\Pstarnorm/\delta)}$, yielding the optimal $\tilde{\mathcal{O}}(\sqrt{\du^2\dx\,T})$ dimension scaling of~\cite{simchowitz20a}. The quantization contributes the slow-rate term $Q_{\mathrm{slow}}(\varrho)\,\frac{\dx\du}{\sigmain^2}\,\Pstarnorm^{10}\log(1/\delta)$ at the $\sqrt{T}$ scale and the fast-rate term $Q_{\mathrm{fast}}(\varrho)\,\dx^2\,\Pstarnorm^{11}\log^2(1/\delta)$ at the $\log T$ scale, and both inflation factors vanish as $\varrho \to 0$.

\subsection{Algorithm Overview}
\label{sec:algorithm-overview}

Our main algorithm, Quantized Certainty Equivalent LQR (\textbf{QCE-LQR}) (Algorithm~\ref{alg:quantized_ce_lqr_proj}), is a rate-limited adaptation of the $\varepsilon$-greedy scheme from \cite{simchowitz20a}. The algorithm takes as input a stabilizing controller $K_0$ and proceeds in doubling epochs $k$ of length $\tau_k = 2^k$. During an initial burn-in period, the plant uses $K_0$ with Gaussian excitation to explore the state space. Once the plant's local, unquantized OLS estimate achieves sufficient statistical reliability at epoch $\ksafe$, it transmits this initial estimate to the controller using an absolute Elias Gamma encoding. This establishes a shared, reliable baseline model $(\tilde A_{\ksafe}, \tilde B_{\ksafe})$ between the plant and the controller. From this baseline, both tracking sides natively compute a known \emph{safe set} $\mathcal{B}_{\mathrm{safe}}$ designed to guarantee closed-loop stability for any contained parameter.

In the post-safe phase ($k > \ksafe$), the plant relies on differential quantization. At the end of each epoch $k$, it computes an updated OLS estimate
\begin{equation}\label{eq:theta-hat}
    \hat\theta_k := \mathrm{vec}(\hat A_k,\hat B_k) \in \mathbb{R}^{d_s}
\end{equation}
and sends a \emph{quantized innovations update} relative to the previously shared estimate. Writing $\tilde\theta_k := \mathrm{vec}(\tilde A_k, \tilde B_k)$ for the shared decoded parameter, both sides maintain it via
\begin{equation}\label{eq:diff-update}
    \tilde\theta_k \;=\; \tilde\theta_{k-1} + s_k c_{q_k},
\end{equation}
where $q_k$ is the transmitted codeword index corresponding to a vector $c_{q_k}$ from a fixed codebook $\mathcal C$ (a $\varrho$-net of the unit ball in $\mathbb R^{d_s}$), and $s_k$ is a scalar radius. Given the sequence of shared parameters, the controller projects onto the known safe set:
\begin{equation}\label{eq:safe-proj-overview}
    \check\theta_k := \mathrm{vec}(\check A_k, \check B_k) = \Pi_{\mathcal{B}_{\mathrm{safe}}}(\tilde{\theta}_k),
\end{equation}
which ensures closed-loop stability with high probability. It then synthesizes the certainty-equivalent controller
\begin{equation}\label{eq:Ktil-overview}
    \tilde K_k := K_\infty(\check A_k, \check B_k)
\end{equation}
and transmits it back to the plant over the unconstrained downlink. The plant applies this policy with decaying exploration noise:
\begin{equation}\label{eq:control-input}
    u_t = \tilde K_k x_t + \sigma_k g_t, \qquad \sigma_k^2 := \min\{1,\,\sigmain^2\,\tau_k^{-1/2}\},
\end{equation}
where $g_t \overset{\mathrm{i.i.d.}}{\sim} \mathcal{N}(0, I_{\du})$ is isotropic Gaussian excitation and $\sigmain^2 > 0$ is an exploration constant computed at epoch $\ksafe$, specified in Lemma~\ref{lem:51} (Section~\ref{sec:quantized-perturbations}, below).

Because the codebook $\mathcal C$ is a fixed, dimension-dependent $\varrho$-net, each epoch requires communicating a constant number of bits $\log_2|\mathcal C|$ for the codeword index. In addition, each epoch carries a small overhead: a $1$-bit safety flag indicating whether the safe trigger has fired, and an Elias-Gamma coded adaptive multiplier $m_k$ (defined in Section~\ref{sec:contraction}, below) that scales the quantization radius to absorb transient estimation errors. By Lemma~\ref{lem:contraction} (Section~\ref{sec:contraction}, below), $m_k$ contracts to $\mathcal{O}(1)$ after a bounded number of epochs, so the total transmission rate per epoch is $\mathcal{O}(1)$ bits asymptotically. Summing over $\asymp \log_2 T$ epochs, this scheme compresses the entire sequence of dynamics estimates into $\mathcal{O}(\log T)$ total bits. The key technical challenge, treated in the subsequent sections, is designing the scale sequence $s_k$ such that quantization errors do not degrade the asymptotic $\tilde{\mathcal{O}}(\sqrt{T})$ regret.

\begin{table}[h]
\centering
\caption{Parameter estimate notation at epoch $k \geq \ksafe$.}
\label{tab:theta-notation}
\begin{tabular}{c l l}
\toprule
\textbf{Symbol} & \textbf{Description} & \textbf{Known to} \\
\midrule
$\hat\theta_k$ & Unquantized OLS estimate~\eqref{eq:theta-hat} & Plant only \\
$\bar\theta_k$ & Plant-side projected estimate~\eqref{eq:plant-proj} & Plant only \\
$\tilde\theta_k$ & Shared decoded (quantized) estimate~\eqref{eq:diff-update} & Both \\
$\check\theta_k$ & Controller-side projected estimate~\eqref{eq:safe-proj-overview} & Both \\
\bottomrule
\end{tabular}
\end{table}
\subsection{Quantization Rule: Adaptive Scale Protocol}

The key challenge in designing the quantization schedule is that the OLS estimation error is \emph{highly anisotropic}. As shown in Corollary~\ref{cor:ols-two-rate} in Section~\ref{sec:regret-safe-rounds}, below, after a bounded period $\tauls$, the error decays as $\tau^{-1/4}$ for the $\dx\du$-dimensional subspace (the ``slow'' rate) but as $\tau^{-1/2}$ for the $\dx^2$-dimensional subspace (the ``fast'' rate). A standard single-scale tracking quantizer would set its radius to $s_k \propto \tau_k^{-1/4}$, forcing the slow rate to dominate and coupling the $\dx^2$ scaling into the dominant $\sqrt{T}$ regret term.

To avoid this, we define the \emph{ideal base schedule}
\begin{equation}
    \label{eq:two-scale-schedule}
    \sbase = \cslow\,\tau_k^{-1/4} + \cfast\,\tau_k^{-1/2},
\end{equation}
where $\cslow$ and $\cfast$ are constants computed at epoch $\ksafe$ in Section~\ref{sec:data-driven}, below. The key design principle is that $\cslow$ and $\cfast$ are calibrated to match the dimension-dependent scaling of the OLS estimation error (Corollary~\ref{cor:ols-two-rate} in Section~\ref{sec:regret-safe-rounds}, below): $\cslow$ isolates the $\dx\du$-dimensional slow-rate component, while $\cfast$ isolates the $\dx^2$-dimensional fast-rate component. Because the quantization error $\varrho s_k$ inherits the same two-scale decay as the underlying estimation error, it adds no order-wise overhead to the total parameter error. While the quantizer remains an isotropic $\varrho$-net of the unit ball, this tighter envelope ensures the $\dx^2$ contribution is quarantined into the lower-order $\log T$ regret term.

After the safe epoch $\ksafe$, the plant projects the OLS estimate onto the safe set,
\begin{equation}\label{eq:plant-proj}
    \bar\theta_k := \Pi_{\mathcal{B}_{\mathrm{safe}}}(\hat\theta_k),
\end{equation}
and computes the innovation
\begin{equation}\label{eq:innovation}
    \Delta_k := \bar\theta_k - \tilde\theta_{k-1}.
\end{equation}
Because the asymptotic OLS bounds only hold after a mixing time $\tauls$ (Corollary~\ref{cor:ols-two-rate} in Section~\ref{sec:regret-safe-rounds}, below), early innovations may exceed $\sbase$. The plant therefore dynamically computes a multiplier
\begin{equation}
    \label{eq:adaptive-mult}
    m_k = \max\!\left\{1,\; \left\lceil \frac{\|\Delta_k\|_2}{\sbase} \right\rceil\right\},
\end{equation}
and sets the actual scale $s_k = m_k \cdot \sbase$. By construction, $\|\Delta_k\|_2 \le s_k$, making quantizer overflow impossible. The multiplier $m_k$ is transmitted via Elias Gamma coding at a cost of $2\lfloor\log_2 m_k\rfloor + 1$ bits. The plant then quantizes using the $\varrho$-net:
\begin{equation}\label{eq:quant-rule}
    q_k \in \arg\min_{c\in\mathcal C}\|\Delta_k - s_k c\|_2, \qquad \|\tilde\theta_k - \bar\theta_k\|_2 \le \varrho s_k.
\end{equation}

\subsection{Two Phases: Before and After the Safe Epoch}

Initially, the controller applies a known stabilizing gain \(K_0\) with Gaussian excitation,
e.g.\ \(u_t = K_0 x_t + g_t\), \(g_t\sim\mathcal N(0,I)\).
This guarantees boundedness and persistent excitation sufficient for OLS to become accurate.

At the end of each epoch \(k\), the plant computes the self-normalized confidence radius
\(\Conf_k\) as in \cite{simchowitz20a}, where $\Csafe$ was defined in~\eqref{eq:Csafe-def}.
The safe epoch is triggered when the following two conditions are both met:
\begin{enumerate}
    \item the empirical covariance satisfies $\mathbf{\Lambda}_k \succeq I$, and
    \item the OLS estimation error is small enough: $\sqrt{\Conf_k} \le \epsilon_{\mathrm{target}}$,
\end{enumerate}
where the target precision is
\begin{equation} \label{eq:eps-target-def}
    \epsilon_{\mathrm{target}} := \frac{1}{9\,\Csafe(\hat A_k,\hat B_k)}.
\end{equation}
The trigger and quantization precision share the same threshold $\epsilon_{\mathrm{target}}$.
This ensures that the decoded center $\tilde\theta_{\ksafe}$ is close enough to $\hat\theta_{\ksafe}$ for the
perturbation lemma (Lemma~\ref{lem:11}) to apply with $\tilde\theta_{\ksafe}$ as the common center
(see Appendix~\ref{app:lemmas} for details).
At this point, the algorithm has gathered sufficient data to switch from the known stabilizing controller $K_0$ to a learned certainty-equivalent controller~\eqref{eq:Ktil-overview}: it sets \(\mathrm{safe}\gets\mathtt{True}\) and declares the corresponding epoch index \(\ksafe\).
Crucially, this trigger depends only on the OLS confidence and not on the quantization scale, so the safe epoch occurs at the same time as in the unquantized algorithm of~\cite{simchowitz20a}, recovering the same regret in the pre-safe period up to constants.

At the safe epoch, the plant performs an initialization via Elias Gamma coding: each of the $d_s$~\eqref{eq:ds-def} coordinates of $\hat\theta_{\ksafe}$ is transmitted as a signed integer on a dyadic grid with step $\Delta = 2^{-E}$, where $E = \lceil\log_2(\sqrt{d_s}/(2\epsilon_{\mathrm{target}}))\rceil$. This yields a shared reconstruction $\tilde\theta_{\ksafe}$ with $\|\tilde\theta_{\ksafe} - \hat\theta_{\ksafe}\|_2 \le \epsilon_{\mathrm{target}}$.

Both plant and controller then set the safe radius from the shared decoded center:
\begin{equation}\label{eq:rsafe-def}
    \rsafe := \frac{1}{3\,\Csafe(\tilde A_{\ksafe}, \tilde B_{\ksafe})}.
\end{equation}
Since the decoded pair $(\tilde A_{\ksafe}, \tilde B_{\ksafe})$ is obtained solely from the transmitted indices,
both sides compute the same $\rsafe$ with no communication cost.
We then form a fixed operator-norm neighborhood (``safe set'')
\begin{equation}\label{eq:Bsafe-def}
    \mathcal B_{\mathrm{safe}}
    := \Big\{(A,B):\ \max\{\|A-\tilde A_{k_{\mathrm{safe}}}\|_{\op},\ \|B-\tilde B_{k_{\mathrm{safe}}}\|_{\op}\}\le \rsafe\Big\},
\end{equation}
and we choose the initial exploration variance
\(\sigmain^2\) using the same explicit formula as in~\cite{simchowitz20a}, but evaluated at the
same shared estimate \((\tilde A_{k_{\mathrm{safe}}},\tilde B_{k_{\mathrm{safe}}})\). This sets the initial scale from which the exploration schedule \(\sigma_k^2\) in~\eqref{eq:control-input} decays.

After \(k_{\mathrm{safe}}\), the controller continues decoding \(\tilde\theta_k\) from the received indices,
but it \emph{does not directly synthesize a controller from \((\tilde A_k,\tilde B_k)\)}.
Instead, it first performs the safe-set projection~\eqref{eq:safe-proj-overview} and synthesizes the certainty-equivalent controller~\eqref{eq:Ktil-overview}, applying the policy~\eqref{eq:control-input} with continual exploration noise.

Let the empirical covariance over epoch $k$ be
\begin{equation}\label{eq:Lambda-k}
    \mathbf{\Lambda}_k := \sum_{t=\tau_{k-1}}^{\tau_k-1}
    \begin{bmatrix}x_t\\u_t\end{bmatrix}
    \begin{bmatrix}x_t\\u_t\end{bmatrix}^{\!\top}.
\end{equation}
Define the confidence scalar
\begin{equation}
    \Conf_k :=
    6\,\lambda_{\min}(\mathbf{\Lambda}_k)^{-1}\Big(
    d\log 5 + \log\Big( \frac{4k^2 \det(3\mathbf{\Lambda}_k/\delta)}{1}\Big)
    \Big),
    \label{eq:Conf}
\end{equation}
with the convention $\Conf_k=\infty$ if $\mathbf{\Lambda}_k\not\succ 0$.

\subsection{Correctness of Quantized Perturbations}
\label{sec:quantized-perturbations}

Let
\begin{equation}\label{eq:Jk-def}
    P_k = P_\infty\left(\tilde K_k, A, B\right), \qquad J_k = \Tr(P_k)
\end{equation}
denote the infinite-horizon LQR cost matrix and cost of the quantized controller at epoch $k \geq k_{\safe}$ on the true system. Define the safe event
\begin{equation}\label{eq:Esafe}
    \mathcal{E}_{\text{safe}} := \left\{ \left\| \left[ \widehat{A}_{k_{\text{safe}}} - A \mid \widehat{B}_{k_{\text{safe}}} - B \right] \right\|_{\mathrm{op}}^2 \leq \text{Conf}_{k_{\text{safe}}} \right\}.
\end{equation}

Write $A_{\mathrm{cl},k} := A + B\tilde K_k$ for the quantized closed-loop matrix at epoch $k$.

The following lemma verifies that the quantized controller $\tilde K_k$ inherits the essential properties of the optimal controller $\Kstar$. It parallels Lemma 5.1 in \cite{simchowitz20a}, which proves the same properties on the certainty equivalent controller generated directly from the OLS estimates. The following lemma contains six bounds that serve distinct roles in the regret analysis:
\emph{(1)}~links per-epoch excess cost to parameter estimation error, driving the main $\sum_k \tau_k(J_k - \Jstar)$ regret term;
\emph{(2)}~ensures the closed-loop Riccati solution does not blow up, which is needed for the self-normalized OLS confidence sets to remain valid;
\emph{(3)}~controls the magnitude of the applied control input, preserving the exploration--exploitation balance required for persistent excitation;
\emph{(4)}~bounds the $\mathcal{H}_\infty$ gain of the closed loop, ensuring process noise is not catastrophically amplified and that state norms remain bounded (Lemma~5.3 in \cite{simchowitz20a});
\emph{(5)}~establishes strict Lyapunov contraction of the quantized closed-loop matrix with respect to the optimal Lyapunov function, guaranteeing closed-loop stability and enabling the telescoping argument in the regret decomposition;
\emph{(6)}~specifies the exploration variance $\sigma_{\mathrm{in}}^2$ that balances fast OLS convergence against exploration cost, achieving the optimal $\tilde{\mathcal{O}}(\sqrt{d_x d_u^2 T})$ dimension scaling.

\begin{lemma} (Correctness of Quantized Perturbations)
    \label{lem:51}
    On the event $\mathcal{E}_{\text{safe}}$, the following bounds hold for all $k \geq k_{\safe}.$

    \begin{enumerate}
        \item $J_k - \Jstar \lesssim \Pstarnorm^8 \left(\|\tilde{A} - A\|_{\text{F}}^2 + \|\tilde{B} - B\|_{\text{F}}^2\right).$

        \item $J_k \lesssim \Jstar, \text{ \textit{and} } \|P_k\|_{\mathrm{op}} \lesssim \Pstarnorm.$

        \item $\|\tilde{K}_k\|_{\mathrm{op}}^2 \leq \frac{21}{20} \|P_\star\|_{\mathrm{op}}.$

        \item $\|A_{\mathrm{cl},k}\|_{\mathcal{H}_\infty} \lesssim \|A_{\mathrm{cl},\star}\|_{\mathcal{H}_\infty} \lesssim \|P_\star\|_{\mathrm{op}}^{3/2}.$

        \item $A_{\text{cl},k}^\top \mathsf{dlyap}[A_{\text{cl},\star}] A_{\text{cl},k} \preceq \left(1 - \frac{1}{2}\|\mathsf{dlyap}[A_{\text{cl},\star}]\|_{\mathrm{op}}^{-1}\right)\mathsf{dlyap}[A_{\text{cl},\star}], \text{ \textit{where} } I \preceq \mathsf{dlyap}[A_{\text{cl},\star}] \preceq P_\star$. 

        \item $\sigmain^2 \asymp \sqrt{\dx}\Pstarnorm^{9/2} \PsiB \sqrt{\log \frac{\Pstarnorm}{\delta}}.$
    \end{enumerate}
\end{lemma}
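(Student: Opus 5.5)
The plan is to reduce every item to the unquantized certainty-equivalence perturbation bounds of \cite{simchowitz20a} (their Lemma~5.1 and the underlying perturbation result, Lemma~\ref{lem:11} here). Those bounds only need the parameter pair fed to the Riccati solver to lie within operator-norm distance $\epsilon \le 1/\Csafe(A,B)$ of the truth. So the core task is to show that, on $\mathcal{E}_{\mathrm{safe}}$, the projected controller-side estimate $\check\theta_k = \Pi_{\mathcal{B}_{\mathrm{safe}}}(\tilde\theta_k)$ always lies in such a neighborhood of $(A,B)$, for every $k\ge\ksafe$. Items (1)--(5) then follow verbatim. Throughout, $(\tilde A,\tilde B)$ in item (1) is read as the pair actually used, namely $(\check A_k,\check B_k)$. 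Since $\mathcal{B}_{\mathrm{safe}}$ is convex and contains the truth, $\|\check\theta_k-\theta\|\le\|\tilde\theta_k-\theta\|$, so the bound can equally be stated in terms of $\tilde\theta_k$.

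\textbf{Step 1 (truth lies in the safe set).} On $\mathcal{E}_{\mathrm{safe}}$, the trigger condition $\sqrt{\Conf_{\ksafe}}\le\epsilon_{\mathrm{target}}$ gives $\|\hat\theta_{\ksafe}-\theta\|_{\op}\le \epsilon_{\mathrm{target}}$. The Elias Gamma initialization adds at most another $\epsilon_{\mathrm{target}}$, so $\|\tilde\theta_{\ksafe}-\theta\|_{\op}\le 2\epsilon_{\mathrm{target}}$. Here the Frobenius, and hence operator, norm of the reconstruction error is at most $\epsilon_{\mathrm{target}}$ by the choice of $E$.

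Next I must compare the three safe constants $\Csafe(\hat A_{\ksafe},\hat B_{\ksafe})$, $\Csafe(\tilde A_{\ksafe},\tilde B_{\ksafe})$ and $\Csafe(A,B)$. I would use the Riccati perturbation bound, namely that $\|P_\infty\|_{\op}$ changes by at most a constant factor under perturbations of size $\lesssim \|P_\infty\|^{-5/2}$ or smaller. Applied to the power $\|P\|^5$ in $\Csafe$, this says the three constants agree up to a factor such as $(1.0835)^{\pm}$ (the constant appearing in $Q_{\mathrm{slow}}$ suggests exactly this). With those comparisons, $2\epsilon_{\mathrm{target}} = \tfrac{2}{9\Csafe(\hat\cdot)}\le \rsafe=\tfrac{1}{3\Csafe(\tilde\cdot)}$, so $\theta\in\mathcal{B}_{\mathrm{safe}}$. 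The triangle inequality then gives, for every point of $\mathcal{B}_{\mathrm{safe}}$, an operator distance to $\theta$ of at most $\rsafe+2\epsilon_{\mathrm{target}}\lesssim 1/\Csafe(A,B)$ with the right constant. This comparison of safe constants is the main obstacle: the chain of factors $9$ vs.\ $3$ must close, and the perturbation must be applied with the decoded center as the common center. I would first try to prove $\|P_\infty(\tilde\cdot)\|\le 1.0835\|P_\infty(A,B)\|$ and its reverse directly from Lemma~\ref{lem:11}.

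\textbf{Step 2 (apply the unquantized bounds).} Since $\check\theta_k\in\mathcal{B}_{\mathrm{safe}}$ deterministically, I would invoke Lemma~\ref{lem:11} and Lemma~5.1 of \cite{simchowitz20a} with $\hat\theta$ replaced by $\check\theta_k$. Item (1) is their quadratic cost-perturbation bound $J_k-\Jstar\lesssim\Pstarnorm^8\|\check\theta_k-\theta\|_F^2$, and the projection inequality above then yields the stated form. Items (2) and (3) come from $\|P_k\|\lesssim\Pstarnorm$ together with $\|\tilde K_k\|^2\le\|P_\infty(\check\cdot)\|\le\tfrac{21}{20}\Pstarnorm$, where the last inequality uses the closeness of Riccati solutions. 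Item (4) uses the $\mathcal{H}_\infty$ bound of \cite{simchowitz20a}, relying on $\|A_{\mathrm{cl},\star}\|_{\mathcal{H}_\infty}\lesssim\Pstarnorm^{3/2}$. For item (5), write $X=\mathsf{dlyap}[A_{\mathrm{cl},\star}]$, so that $I\preceq X\preceq P_\star$. Expand $A_{\mathrm{cl},k}=A_{\mathrm{cl},\star}+B(\tilde K_k-\Kstar)$ and use $A_{\mathrm{cl},\star}^\top XA_{\mathrm{cl},\star}=X-I\preceq(1-\|X\|^{-1})X$. Bounding the cross terms by $\|\tilde K_k-\Kstar\|\lesssim\Pstarnorm^{7/2}\epsilon$ shows that the excess is at most $\tfrac12\|X\|^{-1}X$ once $\epsilon\le 1/\Csafe$.

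\textbf{Step 3 (item 6).} Define $\sigmain^2$ by the formula of \cite{simchowitz20a} evaluated at $(\tilde A_{\ksafe},\tilde B_{\ksafe})$, i.e.\ in terms of $\|P_\infty(\tilde\cdot)\|$ and $\max\{1,\|\tilde B_{\ksafe}\|\}$. By the constant-factor comparisons from Step 1, together with $\|\tilde B-B\|\le 1$, this is $\asymp\sqrt{\dx}\Pstarnorm^{9/2}\PsiB\sqrt{\log(\Pstarnorm/\delta)}$.
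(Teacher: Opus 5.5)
Your proposal is correct and follows the paper's proof. Both arguments use Lemma~\ref{lem:11} with the decoded center $(\tilde A_{\ksafe},\tilde B_{\ksafe})$ as common center to put every point of $\mathcal{B}_{\mathrm{safe}}$ within operator distance $1/\Csafe(A,B)$ of the truth, and then invoke the perturbation results of \cite{simchowitz20a}. Your triangle-inequality bound $\rsafe+2\epsilon_{\mathrm{target}}\le 2\rsafe\le 1/\Csafe(A,B)$ closes via $\Csafe(A,B)\le 1.5\,\Csafe(\tilde A_{\ksafe},\tilde B_{\ksafe})$ from that same lemma, whereas the paper reads off \eqref{eq:thm11-safe} directly.

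One caution concerns your hand derivation of item (5). The cross terms must be controlled through $B(\tilde K_k-\Kstar)$ as a whole, because $\|B\|$ is not bounded by any function of $\Pstarnorm$, so a bound on $\|\tilde K_k-\Kstar\|$ alone does not close. It is safer to cite the closed-loop result of \cite{simchowitz20a} for items (4)--(5), as the paper does.
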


The proof relies on the following result from \cite{simchowitz20a}.

\begin{lemma}[Thm. 11 from \cite{simchowitz20a}]
    \label{lem:11}
    Let $(A_0, B_0)$ be a stabilizable system. Then, for any pair of systems $(\widehat{A}, \widehat{B}), (A, B)$ satisfying
    \[
        \max \left \{ \| \widehat{A} - A_0 \|_{\mathrm{op}}, \| \widehat{B} - B_0 \|_{\mathrm{op}} \right\} \le \frac{1}{3C_{\text{safe}}(A_0, B_0)}
        \quad \text{and} \quad
        \max \left \{ \| A - A_0 \|_{\mathrm{op}}, \| B - B_0 \|_{\mathrm{op}} \right\} \le \frac{1}{3C_{\text{safe}}(A_0, B_0)},
    \]
    each is stabilizable, and satisfies
    \begin{align}
        \max\{\|A - \widehat{A}\|_{\mathrm{op}},\; \|B - \widehat{B}\|_{\mathrm{op}}\} &\le \frac{1}{C_{\text{safe}}(A, B)}, \label{eq:thm11-safe} \\
        \|P_\infty(A, B)\|_{\mathrm{op}} &\le 1.0835\, \|P_\infty(A_0, B_0)\|_{\mathrm{op}}, \label{eq:thm11-P} \\
        C_\safe(A, B) &\le 1.5\, C_\safe(A_0, B_0). \label{eq:thm11-Csafe}
    \end{align}
\end{lemma}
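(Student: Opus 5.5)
The plan is to derive all three conclusions from a single Riccati perturbation estimate at the centre $(A_0,B_0)$. Write $P_0 := P_\infty(A_0,B_0)$ and $K_0 := K_\infty(A_0,B_0)$, the latter only as notation local to this argument. Let $\varepsilon := \max\{\|A-A_0\|_{\op},\|B-B_0\|_{\op}\} \le 1/(3\Csafe(A_0,B_0)) = 1/(162\|P_0\|_{\op}^5)$. I will use the normalization implicit in \cite{simchowitz20a}, $\Rx\succeq I$ and $\Ru \succeq I$, so that $P_0\succeq I$ and $\|K_0\|_{\op}^2\le\|P_0\|_{\op}$.

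\textbf{Stabilizability and \eqref{eq:thm11-P}.} I will not perturb the DARE directly. Instead I apply the fixed gain $K_0$ to the perturbed system and use $P_0$ as a Lyapunov certificate.

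Since $P_0 = A_{\mathrm{cl},0}^\top P_0 A_{\mathrm{cl},0} + \Rx + K_0^\top \Ru K_0 \succeq A_{\mathrm{cl},0}^\top P_0 A_{\mathrm{cl},0} + I$ with $A_{\mathrm{cl},0}=A_0+B_0K_0$, we get
\[
A_{\mathrm{cl},0}^\top P_0 A_{\mathrm{cl},0}\preceq (1-\|P_0\|_{\op}^{-1})P_0 .
\]
The perturbed closed loop is $A+BK_0 = A_{\mathrm{cl},0}+D$, where $\|D\|_{\op}\le \varepsilon(1+\|P_0\|_{\op}^{1/2})\le 2\varepsilon\|P_0\|_{\op}^{1/2}$. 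Expanding $(A_{\mathrm{cl},0}+D)^\top P_0(A_{\mathrm{cl},0}+D)$ and bounding the cross terms with Young's inequality gives a contraction factor $1-\|P_0\|_{\op}^{-1}+O(\varepsilon\|P_0\|_{\op}^{2})$. The size of $\varepsilon$ makes the correction a tiny fraction of $\|P_0\|_{\op}^{-1}$, so $A+BK_0$ is Schur and $(A,B)$ is stabilizable.

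The cost of $K_0$ on $(A,B)$ is the Lyapunov solution $P_{K_0}(A,B)$. Comparing Lyapunov equations term by term yields
\[
P_{K_0}(A,B)\preceq (1+\eta)P_0,\qquad \eta=O\bigl(\varepsilon\|P_0\|_{\op}^{3}\bigr).
\]
Optimality of the DARE solution then gives $P_\infty(A,B)\preceq P_{K_0}(A,B)$. The constant $54$ and the factor $1/3$ must make $1+\eta\le 1.0835$. This is the step I expect to be the main obstacle. Tracking the exact constants through the Young and Neumann-series steps requires care, and I would first attempt a cleaner expansion using the exact identity $P_{K_0}(A,B)-P_0=\mathsf{dlyap}[A+BK_0,\;\cdot\,]$ driven by the $D$-terms. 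The same argument applies verbatim to $(\widehat A,\widehat B)$, so it too is stabilizable.

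\textbf{\eqref{eq:thm11-Csafe}.} This is immediate from \eqref{eq:thm11-P}:
\[
\Csafe(A,B)=54\|P_\infty(A,B)\|_{\op}^5\le 1.0835^5\,\Csafe(A_0,B_0)\le 1.5\,\Csafe(A_0,B_0),
\]
since $1.0835^5\approx1.492$.

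\textbf{\eqref{eq:thm11-safe}.} Apply the triangle inequality through $(A_0,B_0)$:
\[
\|A-\widehat A\|_{\op}\le \frac{2}{3\Csafe(A_0,B_0)} = \frac{1}{1.5\,\Csafe(A_0,B_0)}\le\frac{1}{\Csafe(A,B)},
\]
where the last step uses \eqref{eq:thm11-Csafe}. The same bound holds for $\|B-\widehat B\|_{\op}$. This shows that the constant $1.0835$ is chosen exactly so that $1.0835^5\le 1.5$ closes the triangle inequality.
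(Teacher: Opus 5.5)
Your proof is correct. The paper gives no argument of its own here; it imports Theorem~11 of \cite{simchowitz20a}, so the relevant comparison is with that source.

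Your outer reduction is the natural one. A bound on $\|P_\infty(A,B)\|_{\op}$ gives $\Csafe(A,B)\le 1.0835^5\,\Csafe(A_0,B_0)\le 1.5\,\Csafe(A_0,B_0)$. Then \eqref{eq:thm11-safe} follows from the triangle inequality through $(A_0,B_0)$, because $2/(3\Csafe(A_0,B_0))=1/(1.5\,\Csafe(A_0,B_0))$.

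The genuine difference is the $P$-bound itself. In \cite{simchowitz20a} it comes out of their general Riccati perturbation theory (the self-bounding ODE method). You instead freeze $\bar K:=K_\infty(A_0,B_0)$, use $P_0$ as a Lyapunov certificate for $A+B\bar K$, and conclude with $P_\infty(A,B)\preceq P_{\bar K}(A,B)$.

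The step you flagged as the main obstacle closes with ample room. Write $A+B\bar K=A_{\mathrm{cl},0}+D$ and $E:=(A_{\mathrm{cl},0}+D)^\top P_0(A_{\mathrm{cl},0}+D)-A_{\mathrm{cl},0}^\top P_0A_{\mathrm{cl},0}$. The bound $\|P_0^{1/2}A_{\mathrm{cl},0}\|_{\op}\le\|P_0\|_{\op}^{1/2}$ gives $\|E\|_{\op}\le 4\varepsilon\|P_0\|_{\op}^{3/2}+4\varepsilon^2\|P_0\|_{\op}^2\le 0.025$. Since $Q_0:=\Rx+\bar K^\top\Ru\bar K\succeq I$, the matrix $Y:=P_0/(1-\|E\|_{\op})$ satisfies $(A+B\bar K)^\top Y(A+B\bar K)+Q_0\preceq Y$. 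This single inequality yields both Schur stability and $P_{\bar K}(A,B)\preceq Y$, hence $\|P_\infty(A,B)\|_{\op}\le 1.026\,\|P_0\|_{\op}$. So nothing hinges on the exact value $1.0835$; the triangle step only needs a factor at most $1.5^{1/5}\approx 1.0845$.

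What your route buys is an elementary, self-contained proof. It in fact needs only a radius of order $\|P_0\|_{\op}^{-3/2}$ for stabilizability and the $P$-bound. What it does not provide is information about the certainty-equivalent gain $K_\infty(\widehat A,\widehat B)$ acting on $(A,B)$. Examples are the second-order suboptimality bound in point~1 of Lemma~\ref{lem:51} and the other gain bounds there. Those are what the heavier perturbation machinery is needed for.

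Finally, you were right to make $\Rx,\Ru\succeq I$ explicit. The statement as transcribed here assumes only $\Rx,\Ru\succ 0$, and without that normalization it is false. Scaling both cost matrices by $\lambda\to 0$ scales $P_\infty$ by $\lambda$ and inflates the radius $1/(3\Csafe)$ like $\lambda^{-5}$.
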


The proof of Lemma~\ref{lem:51} is given in Appendix~\ref{app:lemmas}.

\subsection{Regret Bound Sketch}
\label{sec:regret-sketch}

The regret decomposes into two phases: pre-safe burn-in, and post-safe tracking.

\textbf{Pre-safe regret.} Before the safe trigger fires, the algorithm executes $K_0$ identically to the unquantized setting of \cite{simchowitz20a}. Because the trigger depends only on the OLS confidence and the quantization error is subsequently constrained to be of similar order, the safe epoch occurs at a bounded stopping time $\tau_{\ksafe} \le \tausafe$~\eqref{eq:tausafe-def}. The accumulated pre-safe regret is
\begin{equation}\label{eq:presafe-regret}
    \begin{split}
        \sum_{t=1}^{\tau_{\ksafe}-1} c(x_t,u_t) &\lesssim d^2 \PsiB^2 \Pzero \Pstarnorm^{10} (1+\opnorm{K_0}^2) \\
        &\quad \times \log\!\tfrac{\PsiB^2 \Jzero}{\delta}\, \log\!\tfrac{1}{\delta},
    \end{split}
\end{equation}
which matches the unquantized pre-safe bound of \cite[Lem. 5.5]{simchowitz20a}.

\textbf{Post-safe regret.} Following \cite[Lem. 5.2]{simchowitz20a}, the post-safe regret decomposes as 
\begin{equation}\label{eq:safe-decomp}
    \begin{split}
        &\sum_{t=\tau_{\ksafe}}^{T} \big(c(x_t,u_t) - \Jstar\big) \\
        &\qquad\lesssim \sum_{k=\ksafe}^{\kfin} \tauk (J_k - \Jstar) \\
        &\qquad\quad + \sqrt{T}\big(\du\sigmain^2\PsiB^2\Pstarnorm + \sqrt{d\log\!\tfrac{1}{\delta}}\,\Pstarnorm^4\big) \\
        &\qquad\quad + \text{lower-order terms}.
    \end{split}
\end{equation}
The perturbation bound gives
\begin{equation}
J_k - \Jstar \lesssim \Pstarnorm^8 \Fnorm{\tilde\theta_k - \theta_\star}^2,
\end{equation}
which splits into an unquantized OLS error and a quantization reconstruction error $\varrho^2 s_k^2$. Using the OLS error bounds and the ideal base schedule~\eqref{eq:two-scale-schedule}, the quantization-induced regret separates as
\begin{equation}\label{eq:quant-regret-split}
    \Pstarnorm^8\,\varrho^2\Crho^2 \big(\cslow^2\sqrt{T} + \cfast^2 \log T\big),
\end{equation}
recovering the inflation factors $Q_{\mathrm{slow}}(\varrho)$ and $Q_{\mathrm{fast}}(\varrho)$. The exploration variance $\sigmain^2$ appears in two competing terms in~\eqref{eq:safe-decomp}: the direct exploration cost $\du\sigmain^2\PsiB^2\Pstarnorm\sqrt{T}$ and, through the OLS estimation cost, a $1/\sigmain^2$ term. Balancing these yields $\sigmain^2 \asymp \sqrt{\dx}\Pstarnorm^{9/2}\PsiB\sqrt{\log(\Pstarnorm/\delta)}$ and recovers the optimal dominant coefficient $\du\sqrt{\dx}\,\PsiB^3\,\Pstarnorm^{11/2}\sqrt{\log(\Pstarnorm/\delta)} \cdot \sqrt{T}$.

\subsection{Regret Analysis for Safe Rounds}
\label{sec:regret-safe-rounds}

Lemma \ref{lem:51} establishes important conditions to analyze the regret incurred by Algorithm~\ref{alg:quantized_ce_lqr_proj} (described in Section~\ref{sec:algorithm-overview}). In particular, it guarantees that the quantized and projected controller $\tilde K_k$ remains stabilizing for the true system $(\Astar,\Bstar)$, with cost and operator-norm bounds comparable to the optimal controller. These conditions allow us to reproduce Lemmas 5.2, 5.3, and 5.4 from \cite{simchowitz20a}.

\begin{lemma}[Regret Decomposition on Safe Rounds, Lem. 5.2 from \cite{simchowitz20a}]
    There is an event $\mathcal{E}_{\mathrm{reg}}$ which holds with probability at least $1 - \frac{\delta}{8}$ such that, on $\mathcal{E}_{\mathrm{reg}} \cap \mathcal{E}_{\mathrm{safe}}$, the following bound holds
    \begin{equation} \label{regredecomp}
        \begin{split}
            &\sum_{t=\tau_{\ksafe}}^{T} \big(x_t^\top \Rx x_t + u_t^\top \Ru u_t - \Jstar\big) \\
            &\qquad\lesssim \sum_{k=\ksafe}^{\kfin} \tauk (J_k - \Jstar) + \log T \max_{k \le \log T} \|x_{\tauk}\|_2^2 \\
            &\qquad\quad + \sqrt{T} \big( \du \sigmain^2 \PsiB^2 \Pstarnorm + \sqrt{d \log(1/\delta)}\,\Pstarnorm^4 \big) \\
            &\qquad\quad + \log^2 \!\tfrac{1}{\delta}\, (1 + \sqrt{d} \sigmain^2 \PsiB^2) \Pstarnorm^4.
        \end{split}
    \end{equation}
\end{lemma}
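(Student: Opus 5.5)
Within each epoch the controller $\tilde K_k$ is fixed and stabilizing. Lemma~\ref{lem:51} gives this, with cost and norm bounds comparable to those of $K_\infty$. So I will follow the proof of Lemma~5.2 in \cite{simchowitz20a} essentially verbatim, replacing the unquantized certainty-equivalent gain by $\tilde K_k$ and checking that each property used there is supplied by Lemma~\ref{lem:51}.

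\textbf{Step 1: per-epoch telescoping.} Fix an epoch $k\ge\ksafe$ with gain $\tilde K_k$ and let $P_k$ be as in \eqref{eq:Jk-def}. For $u_t=\tilde K_k x_t+\sigma_k g_t$, the Bellman identity for the closed loop gives
\[
c(x_t,u_t)=x_t^\top P_k x_t - x_{t+1}^\top P_k x_{t+1} + \xi_t,
\]
where $\xi_t$ collects three pieces. The first is the martingale cross terms $2 w_t^\top P_k A_{\mathrm{cl},k}x_t$ and $2\sigma_k g_t^\top(\cdots)x_t$. The second is the noise quadratic forms $w_t^\top P_k w_t$, whose mean is $\sigma_w^2\Tr(P_k)=J_k$. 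The third is the exploration terms $\sigma_k^2 g_t^\top(\Ru + \Bstar^\top P_k \Bstar)g_t$. Summing over the epoch, the quadratic forms telescope to a boundary term $x_{\tau_k}^\top P_k x_{\tau_k}$ bounded by $\|P_k\|_{\op}\|x_{\tau_k}\|^2\lesssim\Pstarnorm\|x_{\tau_k}\|^2$ by Lemma~\ref{lem:51}(2). Over the $\lesssim\log T$ epochs these boundary terms give the $\log T\max_k\|x_{\tau_k}\|_2^2$ term.

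\textbf{Step 2: expected terms.} The noise means contribute $\tau_k J_k$ per epoch, which gives $\sum_k\tau_k(J_k-\Jstar)$ after subtracting $T\Jstar$. The exploration means contribute
\[
\tau_k\sigma_k^2\,\du\,\|\Ru+\Bstar^\top P_k\Bstar\|_{\op}\lesssim \tau_k^{1/2}\sigmain^2\du\PsiB^2\Pstarnorm,
\]
using Lemma~\ref{lem:51}(2) and $\sigma_k^2\le\sigmain^2\tau_k^{-1/2}$. Summing over doubling epochs gives $\du\sigmain^2\PsiB^2\Pstarnorm\sqrt T$.

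\textbf{Step 3: concentration, the main obstacle.} I must control three deviations with probability $1-\delta/8$: the Gaussian chaos deviations of $\sum_t w_t^\top P_k w_t$ and of $\sum_t g_t^\top M g_t$ (Hanson--Wright), and the martingale cross terms. The cross terms need the state covariance to be bounded. Here I use Lemma~\ref{lem:51}(4) and (5) (the $\mathcal H_\infty$ bound and Lyapunov contraction) in the Gaussian-process argument of \cite[Lem.~5.3]{simchowitz20a}. That argument shows the stacked epoch states have covariance operator norm $\lesssim\Pstarnorm^{3}(\sigma_w^2+\sigma_k^2\PsiB^2)$ plus a transient from $x_{\tau_k}$. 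A Hanson--Wright bound on the resulting quadratic form in the noise vector then gives per-epoch deviation $\lesssim\sqrt{\tau_k d\log(1/\delta)}\Pstarnorm^4 + \log(1/\delta)(1+\sqrt d\sigmain^2\PsiB^2)\Pstarnorm^4$.

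\textbf{Step 4: union bound.} I allocate failure probability $\delta/(8k^2)$ to epoch $k$, which produces the $\log^2(1/\delta)$ factor. Summing $\sqrt{\tau_k}$ over epochs yields the $\sqrt{d\log(1/\delta)}\Pstarnorm^4\sqrt T$ term. The delicate point is that $\tilde K_k$ depends on past data and on the quantizer output. However, it is $\mathcal F_{\tau_k}$-measurable, being a deterministic function of plant data through epoch $k-1$ together with the shared codebook indices. Conditioning on $\mathcal F_{\tau_k}$ therefore leaves the within-epoch noise independent, so the concentration applies epoch by epoch, exactly as in the unquantized case.
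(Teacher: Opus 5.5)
Your proposal takes the same route as the paper. The paper justifies this lemma only by noting that Lemma~\ref{lem:51} supplies, for the quantized and projected gain $\tilde K_k$, the properties of Lemma~5.1 in \cite{simchowitz20a}, so that Lemma~5.2 there carries over verbatim. Your four steps faithfully expand that reduction, and your point that $\tilde K_k$ is $\mathcal F_{\tau_k}$-measurable (so the within-epoch noise stays independent despite the quantizer) is exactly what the paper leaves implicit.

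One bookkeeping caveat. Your telescoped boundary term $x_{\tau_k}^\top P_k x_{\tau_k}$ actually yields $\Pstarnorm \log T \max_k \|x_{\tau_k}\|_2^2$, not the displayed $\log T \max_k \|x_{\tau_k}\|_2^2$. This factor is the cost-to-go of the epoch's initial state, so it cannot be hidden in $\lesssim$; carry it explicitly rather than dropping it.
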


\begin{lemma}[Lem. 5.3 from \cite{simchowitz20a}]
    There is an event $\mathcal{E}_{\mathrm{bound}}$ which holds with probability at least $1 - \frac{\delta}{8}$ such that, conditioned on $\mathcal{E}_{\mathrm{safe}} \cap \mathcal{E}_{\mathrm{bound}}$
    \begin{equation}
        \|x_{\tauk}\| \le \sqrt{x_{\tauk}^\top \mathsf{dlyap}[A_{\mathrm{cl},\star}] x_{\tauk}} \lesssim \sqrt{\PsiB \Jzero \log(1/\delta)} \Pstarnorm^{3/2}, \quad \forall k \ge \ksafe.
    \end{equation}
\end{lemma}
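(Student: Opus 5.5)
We would mirror the proof of Lemma 5.3 in \cite{simchowitz20a}, checking that every ingredient it uses is supplied by Lemma~\ref{lem:51} for the quantized and projected gains $\tilde K_k$. Set $P_{\mathrm{ly}} := \mathsf{dlyap}[A_{\mathrm{cl},\star}]$. The first inequality $\|x_{\tauk}\| \le \sqrt{x_{\tauk}^\top P_{\mathrm{ly}} x_{\tauk}}$ is immediate from $I \preceq P_{\mathrm{ly}}$ (Lemma~\ref{lem:51}(5)), so the work is in the second.

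\textbf{Step 1: recursion for $V_t := x_t^\top P_{\mathrm{ly}} x_t$.} For $t \ge \tau_{\ksafe}$ in epoch $k$, the closed loop is $x_{t+1} = A_{\mathrm{cl},k}x_t + \Bstar\sigma_k g_t + w_t$. Write $\xi_t := \Bstar\sigma_k g_t + w_t$; this is Gaussian with covariance $\preceq (\PsiB^2+\sigma_w^2)I$, using $\sigma_k^2\le 1$. By Lemma~\ref{lem:51}(5), with $\gamma := \tfrac12\|P_{\mathrm{ly}}\|_{\op}^{-1} \ge \tfrac12\Pstarnorm^{-1}$, we get
\[
V_{t+1} \le (1-\gamma)V_t + 2\xi_t^\top P_{\mathrm{ly}}A_{\mathrm{cl},k}x_t + \xi_t^\top P_{\mathrm{ly}}\xi_t .
\]
This contraction holds uniformly in $k\ge\ksafe$, because the projection onto $\mathcal{B}_{\mathrm{safe}}$ keeps every $\check\theta_k$ inside the Lemma~\ref{lem:11} neighborhood of the shared center. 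This is exactly where quantization could break things, and Lemma~\ref{lem:51} already handles it.

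\textbf{Step 2: unroll and concentrate.} Unrolling gives $P_{\mathrm{ly}}^{1/2}x_t$ as a geometrically weighted sum of the noise plus a contribution from the initial state $x_{\tau_{\ksafe}}$. The noise sum is conditionally Gaussian given the gains, which are predictable with respect to the filtration. We would therefore bound $\sup_t V_t$ with a Gaussian quadratic-form (Hanson--Wright) tail bound, together with a union bound over the $\lesssim \log T$ epoch boundaries $\tauk$. The stationary part has mean $\lesssim \Tr(P_{\mathrm{ly}})(\PsiB^2+\sigma_w^2)/\gamma$. We would control it more sharply through the $\mathcal{H}_\infty$ bound of Lemma~\ref{lem:51}(4), $\|A_{\mathrm{cl},k}\|_{\mathcal{H}_\infty}\lesssim\Pstarnorm^{3/2}$, which bounds the noise-to-state gain. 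Together with $\Tr(P_\star)\lesssim \Jzero$ and $\opnorm{P_\star}\le\Pstarnorm$, this yields a bound of order $\PsiB\Jzero\Pstarnorm^3\log(1/\delta)$, and taking square roots gives the claim.

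\textbf{Step 3: initial transient.} The term $(1-\gamma)^{t-\tau_{\ksafe}}V_{\tau_{\ksafe}}$ needs a bound on $x_{\tau_{\ksafe}}$ at the handoff from $K_0$. During the pre-safe phase the gain $K_0$ is stabilizing with cost $\Jzero$, so the same Gaussian concentration gives $\|x_{\tau_{\ksafe}}\|^2 \lesssim \PsiB\Jzero\log(1/\delta)$. This is absorbed with an extra factor $\Pstarnorm$ from $P_{\mathrm{ly}}\preceq P_\star$.

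\textbf{Main obstacle.} The hard step is Step 2: obtaining the time-uniform bound despite the switching gains $\tilde K_k$. These gains depend on past data, so the noise sum is not a fixed Gaussian vector. We would handle this by conditioning epoch by epoch, since $\tilde K_k$ is fixed within epoch $k$ and measurable at its start, and then using the common Lyapunov function $P_{\mathrm{ly}}$ to chain the epochs. This is precisely how \cite{simchowitz20a} argues, so once Lemma~\ref{lem:51}(4)--(5) are in place the proof should transfer verbatim.
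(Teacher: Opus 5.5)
Your route is the same as the paper's. The paper gives no standalone proof: it only observes that Lemma~\ref{lem:51} supplies the hypotheses under which the proof of Lemma~5.3 in \cite{simchowitz20a} goes through. Your Step~1, a single contraction in $\mathsf{dlyap}[A_{\mathrm{cl},\star}]$ valid for every post-safe gain because $\check\theta_k\in\mathcal{B}_{\mathrm{safe}}$, is exactly that mechanism. So is your epoch-wise conditioning, which is legitimate because $\mathcal{E}_{\mathrm{safe}}$ is determined by time $\tau_{k_{\mathrm{safe}}}$ and $\tilde K_k$ is frozen on each epoch.

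The gap is in the constants, which you assert rather than derive and which your own steps contradict.
\begin{itemize}
\item Your Step~2 mean is $\mathrm{Tr}(P_{\mathrm{ly}})(\Psi_B^2+\sigma_w^2)/\gamma\lesssim\Psi_B^2 J_0\|P_\star\|_{\mathrm{op}}$, which is quadratic in $\Psi_B$.
\item Lemma~\ref{lem:51}(4) cannot remove a factor of $\Psi_B$. The $\mathcal{H}_\infty$ norm is the gain from $\xi_t$ to $x_t$, and the covariance of $\xi_t$ already contains $\sigma_k^2BB^\top$. That route therefore gives $\|P_\star\|_{\mathrm{op}}^3(d_x\sigma_w^2+\sigma_k^2\|B\|_F^2)$, still quadratic in $\Psi_B$.
\item Step~3 is false as written. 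In the pre-safe phase the excitation $g_t$ has unit variance and enters through $B$, and $J_0=J(A,B,K_0)$ is the cost under process noise alone, so it does not see this. The correct handoff bound is $\|x_{\tau_{k_{\mathrm{safe}}}}\|^2\lesssim(\sigma_w^2+\|B\|_{\mathrm{op}}^2)\,\mathrm{Tr}\,\mathsf{dlyap}[A+BK_0]\log(1/\delta)\lesssim\Psi_B^2J_0\log(1/\delta)$.
\end{itemize}

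No bookkeeping can recover the single power of $\Psi_B$ under the paper's definitions. Take $d_x=d_u=1$, $R_x=R_u=1$, $A=a$, $B=b$ and $K_0=-a/b$, so that $A+BK_0=0$ and $J_0=\sigma_w^2(1+a^2/b^2)$. As $b\to\infty$, both $\|P_\star\|_{\mathrm{op}}\to 1$ and $\mathsf{dlyap}[A_{\mathrm{cl},\star}]\to 1$. In the pre-safe phase $x_{t+1}=bg_t+w_t$, so $\|x_{\tau_{k_{\mathrm{safe}}}}\|^2$ is of order $b^2$ with constant probability. By contrast, $\Psi_BJ_0\|P_\star\|_{\mathrm{op}}^3\log(1/\delta)$ is only of order $b\log(1/\delta)$.

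To avoid any dependence between the trigger and the last sample, look at epoch $k_{\mathrm{safe}}+1$. There $\sigma_{\mathrm{in}}^2\gtrsim\Psi_B$, while $\tau_{k_{\mathrm{safe}}}$ grows only logarithmically in $b$. Hence $\sigma_{k_{\mathrm{safe}}}=1$, the last excitation is fresh, and the same conclusion holds.

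So what your argument proves, and what you should state, is the version with $\Psi_B^2$ under the square root. For example, $\sqrt{x_{\tau_k}^\top\mathsf{dlyap}[A_{\mathrm{cl},\star}]x_{\tau_k}}\lesssim\Psi_B\sqrt{J_0\log(1/\delta)}\,\|P_\star\|_{\mathrm{op}}^{3/2}$. This is consistent with the $\Psi_B^2$ in the paper's pre-safe bound. The boundary term in Theorem~\ref{thm:main_ach} then carries $\Psi_B^2$ rather than $\Psi_B$.
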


\begin{lemma}[Lem. 5.4 from \cite{simchowitz20a}]
    Recall $\tauls$ from~\eqref{eq:tauls-def}. There is an event $\Els$, which holds with probability at least $1 - \delta/8$, such that conditioned on $\Els \cap \Esafe \cap \Ebound$,

    \begin{equation}
        \label{OLSrate}
        \Fnorm{\Ahat_k - \Astar}^2 + \Fnorm{\Bhat_k - \Bstar}^2 \lesssim \frac{\dx\du\Pstarnorm^2}{\sigmain^2 \tauk^{1/2}} \log\left(\frac{1}{\delta}\right) + \Pstarnorm^3 \frac{\dx^2}{\tauk} \log\left(\frac{1}{\delta}\right)^2, \quad \forall k : c \tauls \le \tauk \le T,
    \end{equation}
    where $c > 0$ is a universal constant.
\end{lemma}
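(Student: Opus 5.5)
The statement is Lemma 5.4 of \cite{simchowitz20a}: a two-rate OLS bound for the plant's unquantized estimates $(\hat A_k,\hat B_k)$. Quantization never enters the OLS regression, so the plan is to re-verify the hypotheses of the original proof for the closed loop driven by $\tilde K_k$. Those hypotheses are properties of the applied gains and are supplied by Lemma~\ref{lem:51}.

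\textbf{Step 1: self-normalized confidence bound.} Write $z_t=[x_t;u_t]$ and use the self-normalized martingale bound underlying $\Conf_k$:
\[
\Fnorm{[\hat A_k-\Astar\mid \hat B_k-\Bstar]}^2 \lesssim \dx\,\mathrm{tr}\bigl(\mathbf\Lambda_k^{-1}\bigr)\,\log\frac{\det(\mathbf\Lambda_k)}{\delta}.
\]
The Frobenius version costs the extra $\dx$ factor. This holds uniformly over epochs on an event of probability $1-\delta/16$. On $\Ebound$, Lemma 5.3 and items (2)--(4) of Lemma~\ref{lem:51} bound $\|\mathbf\Lambda_k\|_{\op}\lesssim \tauk\,\mathrm{poly}(\Pstarnorm,\PsiB)$. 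Hence the log-determinant is $\lesssim d\log(\Pstarnorm/\delta)$, which is absorbed into the $\log(1/\delta)$ factors.

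\textbf{Step 2: two-scale lower bound on $\mathbf\Lambda_k$.} This is the main obstacle. Write $u_t=\tilde K_k x_t+\sigma_k g_t$ and decompose $\R^d$ into the range of $[I;\tilde K_k]$ (dimension $\dx$) and its complement (dimension $\du$). I would prove a block-martingale small-ball bound with a covering argument, of the form
\[
\mathbf\Lambda_k\succeq c\,\tauk\begin{bmatrix}I\\ \tilde K_k\end{bmatrix}\sigma_w^2\begin{bmatrix}I\\ \tilde K_k\end{bmatrix}^\top + c\,\tauk\sigma_k^2\,\Pstarnorm^{-1}\,\Pi_\perp .
\]
Here $\sigma_w^2 I$ is the state-noise floor and the $\sigma_k^2$ excitation acts in directions orthogonal to the policy graph. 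The bound requires $\tauk\ge c\tauls$, which is exactly where $\tauls$ enters. Item (3) of Lemma~\ref{lem:51} controls $\|\tilde K_k\|_{\op}$, and hence the conditioning of the graph directions. The uniform bound on $\tilde K_k$ across epochs also matters for the next step.

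\textbf{Step 3: combine the two directions.} Estimates from epochs $k-1$ and $k$ use different gains. Either the analysis restricts to single-epoch data, as in the definition of $\mathbf\Lambda_k$, or I would follow Simchowitz--Foster and use the fact that consecutive gains are close (item (1) of Lemma~\ref{lem:51} plus continuity) to prevent subspace misalignment. Then split $\mathrm{tr}(\mathbf\Lambda_k^{-1})$ into the two directions:
\begin{itemize}[nosep]
\item the $\du$-dimensional excitation directions contribute $\du/(\tauk\sigma_k^2)\cdot\Pstarnorm=\du\Pstarnorm\,\tauk^{-1/2}/\sigmain^2$;
\item the $\dx$-dimensional directions contribute $\dx\Pstarnorm^{2}/\tauk$, with an extra $\Pstarnorm$ from the $\tilde K_k$ conditioning.
\end{itemize}
Multiplying by the prefactor $\dx$ and the logarithm gives $\frac{\dx\du\Pstarnorm^2}{\sigmain^2\tauk^{1/2}}\log\frac1\delta+\Pstarnorm^3\frac{\dx^2}{\tauk}\log^2\frac1\delta$. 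Here I use $\sigma_k^2=\sigmain^2\tauk^{-1/2}$, which holds once $\tauk$ is large. The squared log on the fast term comes from the log-determinant combined with the small-ball failure probability.

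\textbf{Step 4: events.} A union bound over the $\le\log T$ epochs and the two high-probability events defines $\Els$ with probability at least $1-\delta/8$. I expect Step 2 to be the crux: the $\tau^{-1/4}$ versus $\tau^{-1/2}$ anisotropy must be proved with constants uniform over the quantized gains. Lemma~\ref{lem:51} makes this go through exactly as in the unquantized case.
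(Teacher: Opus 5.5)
Your top-level reduction matches the paper's. The paper gives no independent argument. It only notes that Lemma~\ref{lem:51} gives the applied gains the properties the proof in \cite{simchowitz20a} uses: stability, $\|P_k\|_{\op}\lesssim\Pstarnorm$, $\|\tilde K_k\|_{\op}^2\lesssim\Pstarnorm$, and the $\mathcal{H}_\infty$ and Lyapunov bounds. Lemma 5.4 there then transfers unchanged, so stopping at that reduction would leave nothing missing. (Note that $\mathbf{\Lambda}_k$ uses only $t\in[\tau_{k-1},\tau_k)$, which is generated by the single gain $\tilde K_{k-1}$ with variance $\sigma_{k-1}^2$. Your first option in Step~3 is therefore the operative one, and item~(1) of Lemma~\ref{lem:51} would not give closeness of consecutive gains anyway.) The derivation you then sketch, however, does not produce the stated dimension dependence. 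That dependence is the lemma's entire role here: $\Vslow\propto\sqrt{\dx\du}$ is what keeps $\dx^2$ out of the $\sqrt{T}$ term.

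The step that fails is the absorption at the end of Step~1. Write $D:=[\hat A_k-A\mid\hat B_k-B]$.

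\emph{Why the absorption fails.} Since $\mathbf{\Lambda}_k\succeq I$ and, by your own Step~2, every eigenvalue of $\mathbf{\Lambda}_k$ grows polynomially in $\tauk$, $\log\det\mathbf{\Lambda}_k$ is genuinely of order $d\log\tauk$. The $\log\tauk$ can be absorbed because $\delta<1/T$, but the factor $d$ cannot be hidden in $\lesssim$. Carried honestly through Step~3, your chain gives the stated right-hand side multiplied by $d=\dx+\du$. In particular the slow term becomes of order $\dx\du d\,\Pstarnorm^2/(\sigmain^2\tauk^{1/2})$, which puts a $\dx^2\du$ component back into the $\tauk^{-1/4}$ rate.

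\emph{Root cause.} You multiply three things: the number $\dx$ of rows of $D$, the trace $\mathrm{tr}(\mathbf{\Lambda}_k^{-1})$, and a $d$-dimensional log-determinant. A valid bound pays for only two of the three. The self-normalized bound with a net over the output sphere gives $\|D\mathbf{\Lambda}_k^{1/2}\|_{\op}^2\lesssim \dx+\log\det\mathbf{\Lambda}_k+\log(1/\delta)\lesssim d\,\mathcal{L}$, with $\mathcal{L}$ logarithmic in $\tauk$, $1/\delta$ and the problem constants.

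\emph{Fix.} If $\du\lesssim\dx$, combine $\|D\|_F^2\le\|D\mathbf{\Lambda}_k^{1/2}\|_{\op}^2\,\mathrm{tr}(\mathbf{\Lambda}_k^{-1})$ with your Step~2; this yields $d\du$ and $d\dx$ in place of $\dx\du$ and $\dx^2$. If $\du\gtrsim\dx$, use $\|D\|_F^2\le \dx\,\lambda_{\min}(\mathbf{\Lambda}_k)^{-1}\|D\mathbf{\Lambda}_k^{1/2}\|_{\op}^2\lesssim \dx d\,\Pstarnorm\mathcal{L}/(\tauk\sigma_{k-1}^2)$. Either way the dimensional factor on the slow rate is at most $d\min\{\dx,\du\}\le 2\dx\du$, which is what the two-rate form requires. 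This bookkeeping, not only the small-ball bound of Step~2, is where the sharp $\dx\du$ is won.
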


We now extract the two-rate form of the OLS error bound that motivates the two-scale quantization schedule.

\begin{corollary}[Two-rate OLS error bound]
    \label{cor:ols-two-rate}
    Under the event $\Els \cap \Esafe \cap \Ebound$, for all $k$ such that $c\tauls \le \tauk \le T$,
    \begin{equation}
        \label{eq:ols-two-rate}
        \Fnorm{\Ahat_k - \Astar} + \Fnorm{\Bhat_k - \Bstar} \le \Vslow\,\tauk^{-1/4} + \Vfast\,\tauk^{-1/2},
    \end{equation}
    where
    \begin{align}
        \Vslow &:= \sqrt{C_0}\,\Pstarnorm \sqrt{\frac{\dx\du}{\sigmain^2} \log\!\left(\frac{1}{\delta}\right)}, \label{eq:Vslow-def} \\
        \Vfast &:= \sqrt{C_0}\,\Pstarnorm^{3/2}\,\dx\,\log\!\left(\frac{1}{\delta}\right), \label{eq:Vfast-def}
    \end{align}
    and $C_0$ is a universal constant.
\end{corollary}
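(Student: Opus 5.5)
The corollary follows directly from Lemma~5.4 (the OLS rate bound \eqref{OLSrate}); the only real work is making its universal constant explicit and converting a bound on the sum of squared Frobenius errors into a bound on the sum of (unsquared) Frobenius norms.

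\emph{Step 1: make the constant explicit.} Under $\Els\cap\Esafe\cap\Ebound$ and for every $k$ with $c\tauls\le\tauk\le T$, \eqref{OLSrate} holds with some universal constant hidden in $\lesssim$. Call this constant $C_0/2$, so that
\[
\Fnorm{\Ahat_k-\Astar}^2+\Fnorm{\Bhat_k-\Bstar}^2 \le \frac{C_0}{2}\Big(\frac{\dx\du\Pstarnorm^2}{\sigmain^2}\log\tfrac1\delta\,\tauk^{-1/2} + \Pstarnorm^3\dx^2\log^2\tfrac1\delta\,\tauk^{-1}\Big).
\]
The factor $1/2$ is chosen so that Step 2 closes with exactly $C_0$.

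\emph{Step 2: pass to unsquared norms.} Write $a=\Fnorm{\Ahat_k-\Astar}$ and $b=\Fnorm{\Bhat_k-\Bstar}$. By Cauchy--Schwarz, $a+b\le\sqrt{2(a^2+b^2)}$. Inserting the bound from Step 1 gives
\[
a+b\le\sqrt{C_0\big(X+Y\big)},\qquad X:=\frac{\dx\du\Pstarnorm^2}{\sigmain^2}\log\tfrac1\delta\,\tauk^{-1/2},\quad Y:=\Pstarnorm^3\dx^2\log^2\tfrac1\delta\,\tauk^{-1}.
\]
Now apply subadditivity of the square root, $\sqrt{u+v}\le\sqrt u+\sqrt v$, to split the right-hand side.

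\emph{Step 3: identify the two terms.} The first term is
\[
\sqrt{C_0X}=\sqrt{C_0}\,\Pstarnorm\sqrt{\tfrac{\dx\du}{\sigmain^2}\log\tfrac1\delta}\;\tauk^{-1/4}=\Vslow\,\tauk^{-1/4}.
\]
The second term is
\[
\sqrt{C_0Y}=\sqrt{C_0}\,\Pstarnorm^{3/2}\dx\log\tfrac1\delta\;\tauk^{-1/2}=\Vfast\,\tauk^{-1/2}.
\]
Together these give \eqref{eq:ols-two-rate}.

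The only delicate point is the constant bookkeeping. Because $C_0$ is merely declared to be a universal constant, it can absorb the factor of $2$ from Cauchy--Schwarz. Nothing beyond Lemma~5.4 is needed, and the range of validity in $k$ is inherited unchanged.
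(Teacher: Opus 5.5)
Your proposal is correct and follows the paper's approach. The paper gives no separate proof and simply reads the corollary off Lemma~5.4. Your steps (fixing the hidden constant, using $a+b\le\sqrt{2(a^2+b^2)}$, then subadditivity of the square root, with the range of epochs inherited unchanged) are exactly what make that reading rigorous.
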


\subsection{Data-Driven Computation of the Quantization Schedule}
\label{sec:data-driven}

The constants $\Vslow$ and $\Vfast$ in Corollary~\ref{cor:ols-two-rate} depend on the true $\Pstarnorm$, which is unknown to the algorithm. To set the quantization schedule $\cslow$ and $\cfast$ in a fully implementable way, both the plant and controller use the safe-set geometry available at epoch $\ksafe$ to compute an upper bound.

Applying~\eqref{eq:thm11-P} to the projected center $\tilde\theta_{\ksafe}$~\eqref{eq:safe-proj-overview} gives
\begin{equation}
    \label{eq:Pop-bound}
    \Pstarnorm \le 1.0835\,\opnorm{\Pinf(\Atil_{\ksafe},\Btil_{\ksafe})} =: \Pophat.
\end{equation}
Since $(\Atil_{\ksafe}, \Btil_{\ksafe})$ is the shared decoded model, $\Pophat$ is computable by both sides. Replacing $\Pstarnorm$ with $\Pophat$ in Corollary~\ref{cor:ols-two-rate} yields deterministic upper bounds on the true constants:

\begin{definition}[Empirical proxies]
    \label{def:empirical-proxies}
    \begin{align}
        \Vslowhat &:= \sqrt{C_0}\,\Pophat\sqrt{\frac{\dx\du}{\sigmain^2}\log\!\left(\frac{1}{\delta}\right)}, \label{eq:Vslowhat} \\
        \Vfasthat &:= \sqrt{C_0}\,\Pophat^{3/2}\,\dx\,\log\!\left(\frac{1}{\delta}\right). \label{eq:Vfasthat}
    \end{align}
\end{definition}

\noindent Appendix~\ref{app:lemmas} further gives the comparison:
\begin{equation}
    \label{eq:Pophat-ratio-main}
    \Pstarnorm \le \Pophat \le (1.0835)^2 \Pstarnorm.
\end{equation}
Hence $\Vslowhat \ge \Vslow$ and $\Vfasthat \ge \Vfast$. Note that $\sigmain^2$ is computed and set at epoch $\ksafe$ (see Algorithm~\ref{alg:saferound}), so it does not require bounding.

The implementable two-scale base schedule coefficients are then set at epoch $\ksafe$ as:
\begin{align}
    \cslow &= \frac{1 + 2^{1/4}}{1 - \varrho\,2^{1/4}}\,\Vslowhat, \label{eq:cslow-impl} \\
    \cfast &= \frac{1+\sqrt{2}}{1 - \varrho\,\sqrt{2}}\,\Vfasthat. \label{eq:cfast-impl}
\end{align}
The transient gap between $\ksafe$ and the onset of the asymptotic OLS rate is handled natively by the adaptive multiplier $m_k$, which absorbs any excess innovation (Section~\ref{sec:contraction}).

The leading quantization error is driven by $\cslow^2 \propto \Vslowhat^2 \propto \dx\du$, isolating the $\dx^2$ scaling to $\cfast$ and achieving optimal $\tilde{\mathcal{O}}(\sqrt{\dx\du^2 T})$ regret.

\subsection{Quantization Error Analysis}
\label{sec:contraction}

Standard differential quantizers set a fixed codebook radius $s_k$ and assume the parameter innovation $\|\Delta_k\|_2$ will always land within it. In our setting, the two-scale base schedule~\eqref{eq:two-scale-schedule} is calibrated to the asymptotic OLS error rate (Corollary~\ref{cor:ols-two-rate}), which only holds for epochs $k > k_{\mathrm{ls}}$. During the transient window between $\ksafe$ (when differential tracking begins) and $k_{\mathrm{ls}}$ (when the asymptotic OLS rate bounds take effect), the estimation error may significantly exceed $\sbase$, causing the innovation $\|\Delta_k\|_2$ to overflow the base codebook radius. Enlarging the codebook to cover the worst-case transient error would force the per-epoch bit cost to scale with $\Omega(T)$, exceeding the logarithmic budget. Adaptive quantizer scaling resolves this via the dynamic multiplier $m_k$, which expands the effective quantization radius on-the-fly to absorb any transient innovation excess. The key mathematical insight is that once the asymptotic OLS regime begins, the interaction between the shrinking estimation error and the codebook dynamics forms a \emph{contraction}: $m_k$ collapses back to $\mathcal{O}(1)$ within a constant number of epochs with high probability, achieving overflow immunity during the transient phase without incurring an asymptotic bit-rate penalty.

What remains is to show that: (i) the multiplier $m_k$ contracts to $\mathcal{O}(1)$, ensuring $\mathcal{O}(\log T)$ bit rate, and (ii) the random scale $s_k = m_k \sbase$ does not inflate the regret beyond the ideal base schedule.

\begin{lemma}[Contraction of the Adaptive Multiplier]
    \label{lem:contraction}
    Let $\varrho < 1/\sqrt{2}$ and $\beta := \varrho\sqrt{2} < 1$. On the event $\Els \cap \Esafe \cap \Ebound$, for all epochs $k > k_{\mathrm{ls}}$ (where the asymptotic OLS bound \eqref{eq:ols-two-rate} holds), the multiplier satisfies
    \begin{equation}
        m_k < 1 + \beta(m_{k-1} - 1) + 1.
    \end{equation}
    Consequently, $\limsup_{k\to\infty} m_k \le \frac{2 - \beta}{1-\beta} = \mathcal{O}(1)$.
\end{lemma}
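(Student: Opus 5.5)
The plan is to bound the innovation $\|\Delta_k\|_2$ directly in units of $\sbase$ at epoch $k$, using three ingredients: the two-rate OLS bound (Corollary~\ref{cor:ols-two-rate}) at epochs $k$ and $k-1$, nonexpansiveness of the safe-set projection, and the no-overflow guarantee \eqref{eq:quant-rule} from epoch $k-1$. Write $\theta_\star=\mathrm{vec}(A,B)$ and $V(\tau):=\Vslow\tau^{-1/4}+\Vfast\tau^{-1/2}$.

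\textbf{Step 1: decompose the innovation.} By \eqref{eq:innovation},
\[
\Delta_k=(\bar\theta_k-\theta_\star)+(\theta_\star-\bar\theta_{k-1})+(\bar\theta_{k-1}-\tilde\theta_{k-1}).
\]
On $\Esafe$, the true parameter lies in $\mathcal B_{\mathrm{safe}}$; this is the same containment used in Lemma~\ref{lem:11} with $\tilde\theta_{\ksafe}$ as center. The set $\mathcal B_{\mathrm{safe}}$ is convex, being an intersection of two operator-norm balls. Hence $\Pi_{\mathcal B_{\mathrm{safe}}}$ is nonexpansive in the Euclidean ($=$ Frobenius) norm and fixes $\theta_\star$, so $\|\bar\theta_j-\theta_\star\|_2\le\|\hat\theta_j-\theta_\star\|_2\le V(\tau_j)$ for $j\in\{k-1,k\}$. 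Here I need $k-1\ge k_{\mathrm{ls}}$, which holds since $k>k_{\mathrm{ls}}$. I also use that the Frobenius norm of the stacked matrix is at most the sum of the two Frobenius norms. The quantization rule gives $\|\bar\theta_{k-1}-\tilde\theta_{k-1}\|_2\le\varrho s_{k-1}=\varrho m_{k-1}s^{\mathrm{base}}_{k-1}$.

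\textbf{Step 2: rescale to epoch $k$.} With $\tau_{k-1}=\tau_k/2$, we have $V(\tau_{k-1})\le 2^{1/4}\Vslow\tau_k^{-1/4}+\sqrt2\,\Vfast\tau_k^{-1/2}$ and $s^{\mathrm{base}}_{k-1}=2^{1/4}\cslow\tau_k^{-1/4}+\sqrt2\,\cfast\tau_k^{-1/2}$. The comparison \eqref{eq:Pophat-ratio-main} gives $\Vslow\le\Vslowhat$ and $\Vfast\le\Vfasthat$. By the calibration \eqref{eq:cslow-impl}--\eqref{eq:cfast-impl}, $(1+2^{1/4})\Vslow\le(1-\varrho2^{1/4})\cslow$ and $(1+\sqrt2)\Vfast\le(1-\beta)\cfast$. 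Collecting the $\tau_k^{-1/4}$ and $\tau_k^{-1/2}$ terms separately gives
\[
\|\Delta_k\|_2\le\cslow\tau_k^{-1/4}\bigl[1+\varrho2^{1/4}(m_{k-1}-1)\bigr]+\cfast\tau_k^{-1/2}\bigl[1+\beta(m_{k-1}-1)\bigr].
\]
Since $\varrho2^{1/4}\le\beta$ and $m_{k-1}\ge1$, this is at most $\bigl(1+\beta(m_{k-1}-1)\bigr)\sbase$.

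\textbf{Step 3: conclude.} By \eqref{eq:adaptive-mult}, and because $\lceil x\rceil<x+1$ while the right-hand side below is at least $2>1$,
\[
m_k\le\max\bigl\{1,\lceil 1+\beta(m_{k-1}-1)\rceil\bigr\}<1+\beta(m_{k-1}-1)+1.
\]
The affine map $f(m)=2-\beta+\beta m$ is a contraction with factor $\beta<1$ and fixed point $m^\ast=(2-\beta)/(1-\beta)$. Iterating from $k_{\mathrm{ls}}+1$ gives $m_k-m^\ast\le\beta^{\,k-k_{\mathrm{ls}}-1}\bigl(m_{k_{\mathrm{ls}}}-m^\ast\bigr)_+$, which yields the $\limsup$ claim. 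Here $m_{k_{\mathrm{ls}}}$ is finite, since it is a deterministic function of finitely many data.

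\textbf{Main obstacle.} I expect the delicate point to be Step 2's bookkeeping. The cruder bound $\|\Delta_k\|\le(1+\beta m_{k-1})\sbase$ only yields $m_k<2+\beta m_{k-1}$, which is off by $\beta$ from the stated recursion. Recovering the extra $-\beta$ requires using the slack factors $1-\varrho2^{1/4}$ and $1-\beta$ built into $\cslow$ and $\cfast$, and treating the slow and fast components separately rather than through the single ratio $s^{\mathrm{base}}_{k-1}\le\sqrt2\,\sbase$. A secondary check is that $\theta_\star\in\mathcal B_{\mathrm{safe}}$ on $\Esafe$; this follows from $\|\hat\theta_{\ksafe}-\theta_\star\|_{\op}\le\epsilon_{\mathrm{target}}$, $\|\tilde\theta_{\ksafe}-\hat\theta_{\ksafe}\|\le\epsilon_{\mathrm{target}}$, and $2\epsilon_{\mathrm{target}}\le\rsafe$ via \eqref{eq:thm11-Csafe}.
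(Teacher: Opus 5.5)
Your proposal is correct and follows the paper's proof essentially step for step. It uses the same three-term triangle decomposition of $\Delta_k$, namely the OLS bounds at epochs $k$ and $k-1$ via nonexpansive projection plus the no-overflow term $\varrho s_{k-1}$, the same slack built into the two-scale constants, and the same ceiling argument. One small correction to your ``main obstacle'' remark: the paper does use the single ratio $s_{k-1}^{\mathrm{base}}/s_k^{\mathrm{base}}\le\sqrt2$, but only on the $(m_{k-1}-1)$ part, after the calibration identity $\mathrm{OLS}_k+\mathrm{OLS}_{k-1}=s_k^{\mathrm{base}}-\varrho s_{k-1}^{\mathrm{base}}$ has absorbed the rest, so your componentwise split is equivalent bookkeeping rather than necessary.
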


We establish an envelope recurrence $s_k \le \varrho s_{k-1} + U_k$ (where $U_k$ bounds the unquantized estimation errors), allowing us to bound the total quantization-induced regret. The formal statement and proof of this bound are deferred to Appendix~\ref{app:achievability}.

\subsection{Pre-Safe Period Regret}
\label{sec:presafe-regret}

Because our algorithm executes the baseline stabilizing controller $K_0$ before the safety trigger, and this trigger occurs with high probability at a bounded stopping time, the system behavior during this phase is identical to the unquantized setting. Consequently, we can upper-bound the regret accumulated during the pre-safe period using the same method as Lemma~5.5 in \cite{simchowitz20a}.

\begin{lemma}[Pre-Safe Regret]
    \label{lem:presaferegret}
    There is an event $\mathcal{E}_{\mathrm{reg, init}}$ which holds with probability $1 - \frac{\delta}{8}$ such that the duration of the pre-safe exploration phase is bounded by:
    \begin{equation}
        \tau_{\ksafe} \leq 2 \cdot 729^2\,\Pstarnorm^{10}\, d(1 + \opnorm{K_0}^2) \log\frac{\PsiB^2 \Jzero}{\delta},
    \end{equation}
    and the total regret accumulated during this phase is bounded by:
    \begin{equation}
        \label{presaferegretfinal}
        \sum_{t=1}^{\tau_{\ksafe}-1} x_{t}^\top \Rx x_{t} + u_{t}^\top \Ru u_{t} \lesssim d^2 \Pstarnorm^{10} \PsiB^2 \Pzero \left(1 + \opnorm{K_0}^2\right) \log\frac{\PsiB^2 \Jzero}{\delta} \cdot \log\frac{1}{\delta}.
    \end{equation}
\end{lemma}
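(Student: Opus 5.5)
The plan follows \cite[Lem.~5.5]{simchowitz20a}, using that before $\ksafe$ the trajectory of Algorithm~\ref{alg:quantized_ce_lqr_proj} is exactly that of the unquantized algorithm. The only uplink traffic before $\ksafe$ is the one-bit safety flag. The controller applies $K_0$, and $u_t = K_0 x_t + g_t$ uses the same excitation, so the state process and the OLS statistics $(\mathbf{\Lambda}_k, \Conf_k, \hat\theta_k)$ coincide pathwise with the unquantized ones. The trigger conditions $\mathbf{\Lambda}_k\succeq I$ and $\sqrt{\Conf_k}\le \epsilon_{\mathrm{target}}$ depend only on these plant-side quantities and not on $\varrho$ or $s_k$. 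Hence $\ksafe$ is the same stopping time as in the unquantized scheme. The Elias Gamma initialization at $\ksafe$ only affects times $t\ge \tau_{\ksafe}$, so the lemma concerns a purely unquantized object.

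\textbf{Step 1: bound the stopping time.} The closed loop $A+BK_0$ is stable with Lyapunov cost $J_0$, and the excitation is isotropic. A matrix-Chernoff/self-normalized argument over the epoch gives, with probability $1-\delta/16$, $\lambda_{\min}(\mathbf{\Lambda}_k)\gtrsim \tau_k/(1+\opnorm{K_0}^2)$ once $\tau_k \gtrsim d\log(\PsiB^2J_0/\delta)$. For the upper bound on $\log\det\mathbf{\Lambda}_k$, a Markov/Hanson--Wright bound on $\sum\|x_t\|^2 \lesssim \tau_k \PsiB^2 J_0\log(1/\delta)$ suffices. Substituting into \eqref{eq:Conf} gives $\Conf_k \lesssim (1+\opnorm{K_0}^2)\,d\log(\PsiB^2J_0/\delta)/\tau_k$.

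The threshold $\epsilon_{\mathrm{target}}$ must be lower bounded by a quantity in the true $\Pstarnorm$. Once $\Conf_k$ is at most $1/(3\Csafe(A,B))^2$, Lemma~\ref{lem:11} with center $(A,B)$ gives $\Csafe(\hat A_k,\hat B_k)\le 1.5\,\Csafe(A,B)$. Therefore $\epsilon_{\mathrm{target}}\ge 1/(13.5\cdot 54\,\Pstarnorm^5)$, and solving $\Conf_k\le\epsilon_{\mathrm{target}}^2$ for $\tau_k$ gives $\tau_k\gtrsim 729^2\Pstarnorm^{10}d(1+\opnorm{K_0}^2)\log(\cdot)$. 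The doubling of epochs contributes the factor $2$.

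\textbf{Step 2: bound the accumulated cost.} I would sum $c(x_t,u_t)\le (\opnorm{\Rx}+\opnorm{\Ru}\opnorm{K_0}^2)\|x_t\|^2+\opnorm{\Ru}\|g_t\|^2$ over $t<\tau_{\ksafe}$. The expectation under $K_0$ is $\lesssim \tau_{\ksafe}(J_0 + \du)$, using that the excitation inflates the stationary covariance by at most $\PsiB^2\Pzero$. A high-probability concentration for quadratic forms of the Gaussian process then adds a $\log(1/\delta)$ factor. Multiplying by the bound from Step~1 and writing $J_0=\dx\Pzero$ gives \eqref{presaferegretfinal} with the $d^2$ factor. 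A union bound over both events gives $\mathcal{E}_{\mathrm{reg,init}}$ with probability $1-\delta/8$.

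The main obstacle is the circularity in Step~1. The trigger threshold $\epsilon_{\mathrm{target}}$ is computed from $\hat\theta_k$, not from $\theta_\star$. The trigger must therefore be shown to fire by the deterministic time, rather than being delayed by a pessimistic $\Csafe(\hat A_k,\hat B_k)$. I would handle this by working on the OLS event that $\|[\hat A_k-A\mid \hat B_k-B]\|_{\op}^2\le\Conf_k$ for all $k$, which is uniform over $k$ through the $4k^2$ term. On that event, Lemma~\ref{lem:11} applies as soon as $\Conf_k$ falls below the true-parameter safe radius, which closes the argument.
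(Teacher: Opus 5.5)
Your proposal is correct and follows the paper's proof. On the OLS confidence event, once $\sqrt{\Conf_k}\le 2/(27\,\Csafe(A,B)) = 1/(729\Pstarnorm^5)$, Lemma~\ref{lem:11} centered at $(A,B)$ gives $\Csafe(\hat A_k,\hat B_k)\le 1.5\,\Csafe(A,B)$, so the trigger fires. This is combined with a $d(1+\opnorm{K_0}^2)\log(\PsiB^2\Jzero/\delta)/\tau_k$ bound on $\Conf_k$, the doubling factor $2$, and a $d\,\PsiB^2\Pzero\log(1/\delta)$ per-step cost bound; the paper cites Lemmas G.11 and G.9 of \cite{simchowitz20a} for the last two, where you sketch derivations (which therefore fix the stated $2\cdot 729^2$ only up to a universal constant). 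Your use of the uniform-in-$k$ confidence event (the paper writes $\Esafe$, which concerns only $\ksafe$) and your bound on $\lambda_{\min}(\mathbf{\Lambda}_k)$, which also covers the $\mathbf{\Lambda}_k\succeq I$ condition, are slightly more careful than the paper.
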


\noindent\textbf{Remark} (Elimination of the Dimensionality and Delay Penalty).
    Our safety test involves only the OLS confidence bound and then requires the quantization error to be of similar order. This recovers the $\mathcal{O}(\log T)$ stopping time and $\mathcal{O}(\log^2 T)$ pre-safe regret of the unquantized algorithm.

\subsection{Proof of Theorem~\ref{thm:main_ach}}

\begin{proof}[Proof sketch]
    The full proof is given in Appendix~\ref{app:achievability}. The argument proceeds by decomposing the post-safe regret via Lemma~5.2 of \cite{simchowitz20a} into epoch-level cost deviations and boundary state norms. By Lemma~\ref{lem:51}, the cost deviation at each epoch satisfies
    \begin{equation}
        J_k - \Jstar \;\lesssim\; \Pstarnorm^8 \,\Fnorm{\tilde\theta_k - \theta_\star}^2,
    \end{equation}
    which splits into an unquantized OLS error and a quantization reconstruction error $\varrho^2 s_k^2$. The adaptive multiplier $m_k$ makes $s_k$ a random variable, but Lemma~\ref{lem:contraction} ensures $m_k = \mathcal{O}(1)$ asymptotically.

    Applying Lemma~\ref{lem:convolution} bounds the weighted quantization regret by the ideal base schedule: the $\varrho$-contraction in the envelope recurrence $s_k \le \varrho s_{k-1} + U_k$ contributes a factor of $\Crho^2$, so the quantization-induced regret scales as
    \begin{equation}
        \varrho^2 \Crho^2 \cdot \cslow^2 \cdot \sqrt{T} \;+\; \varrho^2 \Crho^2 \cdot \cfast^2 \cdot \log T.
    \end{equation}
    Substituting the definitions of $\cslow$~\eqref{eq:cslow-impl} and $\cfast$~\eqref{eq:cfast-impl} and absorbing the proxy comparison $\Pophat \le (1.0835)^2 \Pstarnorm$ from~\eqref{eq:Pophat-ratio-main} recovers $Q_{\mathrm{slow}}(\varrho)$~\eqref{eq:Qslow} and $Q_{\mathrm{fast}}(\varrho)$~\eqref{eq:Qfast}. Combining with the pre-safe regret (Lemma~\ref{lem:presaferegret}) yields the stated bound.
\end{proof}

\begin{algorithm}
    \caption{Quantized Certainty Equivalent Control with Adaptive Scale}
    \label{alg:quantized_ce_lqr_proj}
    \begin{algorithmic}[1]
        \Require Stabilizing controller $K_0$, confidence $\delta$, codebook $\mathcal{C}$ ($\varrho$-net of unit ball, $\varrho < 1/\sqrt{2}$).
        \State \textbf{Init:} $\mathrm{safe} \gets \textsc{False}$, $\tilde{\theta}_{\ksafe} \gets \bot$, $\mathcal{B}_{\mathrm{safe}} \gets \bot$, $\sigma_{\mathrm{in}} \gets \infty$.
        \State \textbf{Play:} $u_1 \sim \mathcal{N}(0,I)$.

        \For{epoch $k = 2,3,\dots$}
        \State $\tau_k \gets 2^k$
        \vspace{0.4em}

        \State \textbf{\textsc{--- Plant Side ---}}
        \State $(\hat A_k, \hat B_k, \mathbf{\Lambda}_k) \gets \textsc{OLS}(k)$; \quad $\hat{\theta}_k \gets \mathrm{vec}(\hat A_k, \hat B_k)$.

        \If{$\mathrm{safe}$ is \textsc{False}} \Comment{Pre-safe: Decoupled Statistical Trigger}
        \State $\mathrm{Conf}_k \gets 6\,\lambda_{\min}(\mathbf{\Lambda}_k)^{-1}\!\left(d\log 5 + \log\!\left(\frac{4k^2\det(3\mathbf{\Lambda}_k)}{\delta}\right)\right)$
        \If{$\mathbf{\Lambda}_k \succeq I$ \textbf{and} $\sqrt{\mathrm{Conf}_k} \le \frac{1}{9\,\Csafe(\hat A_k, \hat B_k)}$}
            \State $\epsilon_{\mathrm{target}} \gets \frac{1}{9\,\Csafe(\hat A_k, \hat B_k)}$
            \State $\mathrm{safe} \gets \textsc{True}$; \quad $\ksafe \gets k$.
            \State $\tilde\theta_{\ksafe} \gets \textsc{AbsoluteInit}(\hat\theta_k,\, \epsilon_{\mathrm{target}})$ \Comment{Elias Gamma; zero overflow risk}
            \State $\rsafe \gets \frac{1}{3\,\Csafe(\tilde A_{\ksafe},\tilde B_{\ksafe})}$ \Comment{Computed locally from decoded center}
            \State $\sigmain^2, \mathcal{B}_{\mathrm{safe}}, \cslow, \cfast \gets \textsc{SafeRoundInit}(\tilde A_{\ksafe},\tilde B_{\ksafe}, \rsafe, \varrho, \delta)$.
        \EndIf
        \State \textbf{Transmit} $\mathrm{safe}$ flag to Controller. \Comment{No incremental update}

        \Else \Comment{Post-safe: adaptive scale protocol}
        \State $\bar{\theta}_k \gets \Pi_{\mathcal{B}_{\mathrm{safe}}}(\hat{\theta}_k)$ \Comment{Plant-side projection of raw estimate}
        \State $\Delta_k \gets \bar{\theta}_k - \tilde{\theta}_{k-1}$ \Comment{Track against exact shared baseline}
        \State $\sbase \gets \cslow\,\tau_k^{-1/4} + \cfast\,\tau_k^{-1/2}$ \Comment{Ideal base schedule}
        \State $m_k \gets \max\!\left\{1,\;\left\lceil \|\Delta_k\|_2 / \sbase \right\rceil\right\}$ \Comment{Adaptive multiplier}
        \State $s_k \gets m_k \cdot \sbase$
        \State \textbf{Transmit} $m_k$ via Elias Gamma coding ($2\lfloor\log_2 m_k\rfloor + 1$ bits).
        \State $q_k \gets \operatorname*{argmin}_{c\in\mathcal{C}} \|\Delta_k - s_k c\|_2$; \quad $\tilde{\theta}_k \gets \tilde{\theta}_{k-1} + s_k c_{q_k}$.
        \State \textbf{Transmit} $q_k$ to Controller.
        \EndIf

        \vspace{0.4em}
        \State \textbf{\textsc{--- Controller Side ---}}
        \If{$\mathrm{safe}$ is \textsc{False}}
        \State $\tilde K_k \gets K_0$
        \Else
        \If{$k = \ksafe$}
            \State Receive Elias Gamma bits ($E$ and $z_i$); reconstruct absolute $\tilde{\theta}_{\ksafe}$.
            \State $\rsafe \gets \frac{1}{3\,\Csafe(\tilde A_{\ksafe},\tilde B_{\ksafe})}$ \Comment{Computed locally from decoded center}
            \State $\sigmain^2, \mathcal{B}_{\mathrm{safe}}, \cslow, \cfast \gets \textsc{SafeRoundInit}(\tilde A_{\ksafe},\tilde B_{\ksafe}, \rsafe, \varrho, \delta)$.
        \Else
            \State Receive $m_k$; compute $s_k \gets m_k \cdot \sbase$. Receive $q_k$; update $\tilde{\theta}_k \gets \tilde{\theta}_{k-1} + s_k c_{q_k}$.
        \EndIf
        \State $\check{\theta}_k \gets \Pi_{\mathcal{B}_{\mathrm{safe}}}(\tilde{\theta}_k)$ \Comment{Controller-side safe projection}
        \State Compute CE policy $\tilde K_k \gets K_\infty(\check A_k,\check B_k)$. \Comment{Synthesize from safe parameter}
        \EndIf
        \State \textbf{Transmit} $\tilde K_k$ to Plant.

        \vspace{0.4em}
        \State \textbf{\textsc{--- Plant Side (Execution) ---}}
        \State Set exploration variance $\sigma_k^2 \gets \min\{1,\sigma_{\mathrm{in}}^2\tau_k^{-1/2}\}$.
        \For{$t = \tau_k,\dots,\tau_{k+1}-1$}
        \State Apply $u_t = \tilde K_k x_t + \sigma_k g_t$, with $g_t\sim\mathcal{N}(0,I)$.
        \EndFor
        \EndFor
    \end{algorithmic}
\end{algorithm}

\begin{algorithm}
    \caption{OLS($k$)}
    \label{alg:ols}
    \begin{algorithmic}[1]
        \State \textbf{Input:} Examples $\mathbf{x}_{\tau_{k-1}}, \dots, \mathbf{x}_{\tau_k}$, $\mathbf{u}_{\tau_{k-1}}, \dots, \mathbf{u}_{\tau_k-1}$.
        \State \textbf{Return} $(\widehat{A}_k, \widehat{B}_k, \mathbf{\Lambda}_k)$, where
        \[
            \left[ \widehat{A}_k \;\; \widehat{B}_k \right] \leftarrow \left( \sum_{t=\tau_{k-1}}^{\tau_k-1} \mathbf{x}_{t+1} \begin{bmatrix} \mathbf{x}_t \\ \mathbf{u}_t \end{bmatrix}^\top \right) \mathbf{\Lambda}_k^\dagger, \quad \text{and} \quad \mathbf{\Lambda}_k \leftarrow \sum_{t=\tau_{k-1}}^{\tau_k-1} (\mathbf{x}_t, \mathbf{u}_t)(\mathbf{x}_t, \mathbf{u}_t)^\top.
        \]
    \end{algorithmic}
\end{algorithm}

\begin{algorithm}
    \caption{\textsc{AbsoluteInit}$(\hat\theta,\, \epsilon_{\mathrm{target}})$}
    \label{alg:absoluteinit}
    \begin{algorithmic}[1]
        \Require OLS estimate $\hat\theta \in \mathbb{R}^{d_s}$ with $d_s = \dx^2 + \dx\du$, target $\ell_2$ precision $\epsilon_{\mathrm{target}}$.
        \vspace{0.3em}
        \State $\Delta_{\max} \gets 2\epsilon_{\mathrm{target}} / \sqrt{d_s}$ \Comment{Per-coordinate step to meet total $\ell_2$ error}
        \State $E \gets \lceil \log_2(1/\Delta_{\max}) \rceil$; \quad $\Delta \gets 2^{-E}$ \Comment{Dyadic grid step}
        \State \textbf{Transmit} exponent $E$ via Elias Gamma coding.
        \For{$i = 1,\dots,d_s$}
            \State $z_i \gets \lfloor \hat\theta_i / \Delta \rceil$ \Comment{Round to nearest grid point}
            \State \textbf{Transmit} $z_i$ via Signed Elias Gamma coding.
            \State $\tilde\theta_i \gets z_i \cdot \Delta$
        \EndFor
        \State \textbf{Return} $\tilde\theta$. \Comment{$\|\tilde\theta - \hat\theta\|_2 \le \epsilon_{\mathrm{target}}$ by construction}
    \end{algorithmic}
\end{algorithm}

\begin{algorithm}
    \caption{\textsc{SafeRoundInit}$(\widehat{A}, \widehat{B}, \rsafe, \varrho, \delta)$}
    \label{alg:saferound}
    \begin{algorithmic}[1]
        \Require Reconstructed safe pair $(\widehat{A}, \widehat{B})$, absolute safe radius $\rsafe$, codebook radius $\varrho < 1/\sqrt{2}$, failure probability $\delta$.
        \vspace{0.3em}

        \State \textbf{1: Define Safe Set and Empirical Complexity:}
        \State $\mathcal{B}_{\mathrm{safe}} \gets \left\{(A, B): \max\left\{\| A - \widehat{A}\|_{\mathrm{op}}, \| B - \widehat{B}\|_{\mathrm{op}}\right\} \leq \rsafe\right\}$.
        \State $\Pophat \gets 1.0835\,\opnorm{\Pinf(\Ahat, \Bhat)}$.
        \vspace{0.3em}

        \State \textbf{2: Compute Exploration Variance:}
        \vspace{-0.5em}
        \[
            \sigmain^2 := \sqrt{\dx}\,\Pophat^{9/2} \max\left\{1, \opnorm{\Bhat} + \rsafe\right\} \sqrt{\log \frac{\Pophat}{\delta}}
        \]

        \State \textbf{3: Compute Ideal Asymptotic Bounds and Two-Scale Constants:}
        \vspace{-0.5em}
        \begin{align*}
            \Vslowhat &:= \sqrt{C_0}\,\Pophat \sqrt{\frac{\dx\du}{\sigmain^2} \log\!\left(\frac{1}{\delta}\right)}, \quad
            \Vfasthat := \sqrt{C_0}\,\Pophat^{3/2}\,\dx\,\log\!\left(\frac{1}{\delta}\right) \\[0.5em]
            \cslow &:= \frac{1 + 2^{1/4}}{1 - \varrho\,2^{1/4}}\,\Vslowhat \\[0.5em]
            \cfast &:= \frac{1+\sqrt{2}}{1 - \varrho\,\sqrt{2}}\,\Vfasthat
        \end{align*}

        \State \textbf{Return}  $\sigma_{\mathrm{in}}^2$, $\mathcal{B}_{\mathrm{safe}}$, $\cslow$, $\cfast$.

    \end{algorithmic}
\end{algorithm}

\section{Numerical Experiments}
\label{sec:experiments}

We evaluate a practical variant of QCE-LQR on four benchmark systems of increasing complexity to verify that the quantization overhead is consistently small.

\subsection{Benchmark Systems}

\begin{enumerate}[leftmargin=*,itemsep=4pt]
    \item \textbf{Scalar Unstable} $(\dx=1,\;\du=1)$: $x_{t+1} = 1.1\,x_t + u_t + w_t$. Spectral radius $\rho(A) = 1.1$, parameter dimension $\pdim = 2$.

    \item \textbf{Double Integrator} $(\dx=2,\;\du=1)$: $A = \bigl[\begin{smallmatrix} 1 & 1 \\ 0 & 1 \end{smallmatrix}\bigr]$, $B = \bigl[\begin{smallmatrix} 0.5 \\ 1 \end{smallmatrix}\bigr]$. Marginally unstable $(\rho(A) = 1)$, $\pdim = 6$.

    \item \textbf{Inverted Pendulum} $(\dx=2,\;\du=1)$: Linearized pendulum on a cart (mass $0.1$\,kg, length $0.5$\,m), discretized at $\Delta t = 0.05$\,s. Unstable with $\rho(A) = 1.248$, $\pdim = 6$.

    \item \textbf{Boeing 747 Lateral} $(\dx=4,\;\du=2)$: Yaw--sideslip--roll dynamics~\cite{friedland_control} discretized at $\Delta t = 0.5$\,s. Nearly marginally stable $(\rho(A) = 0.996)$, $\pdim = 24$.
\end{enumerate}

All systems use $\Rx = \Ru = I$ and i.i.d.\ process noise $w_t \sim \mathcal{N}(0,I)$.

\subsection{Practical Variant Used in Simulation}

Theoretical QCE-LQR is defined and analyzed in Section~\ref{sec:achievability}. For finite-horizon simulations, we use a practical variant that keeps the same communication architecture and blockwise update structure but replaces several conservative components by implementation-ready alternatives. These modifications are not covered by the main theorem; the goal of this section is to assess practical low-rate performance.

\subsubsection{Data-Driven Bootstrap Trigger}

The theoretical trigger requires $\sqrt{\Conf_k} \le 1/(9\,\Csafe)$ with $\Csafe \lesssim \opnorm{\Pinf}^5$. Table~\ref{tab:trigger} shows this is unreachable in practice: for the Boeing~747 system the required estimation accuracy is $\sim 10^{-11}$, exceeding what OLS can produce in any feasible horizon.

\begin{table}[h]
\centering
\caption{Theoretical safe trigger gap: $\sqrt{\Conf_k}$ vs.\ $2\epsilon_{\mathrm{target}} = 2/(9\,\Csafe)$ at $T=10{,}000$.}
\label{tab:trigger}
\begin{tabular}{lcccc}
\hline
System & $\opnorm{\Pstar}$ & $\Csafe$ & $2\epsilon_{\mathrm{target}}$ & $\sqrt{\Conf_k}$ \\
\hline
Scalar ($a\!=\!1.1$) & 1.77 & $9.5\!\times\!10^{2}$ & $2.3\!\times\!10^{-4}$ & $8.5\!\times\!10^{-2}$ \\
Double Integrator & 3.60 & $3.3\!\times\!10^{4}$ & $6.8\!\times\!10^{-6}$ & $1.2\!\times\!10^{-1}$ \\
Inv.\ Pendulum & 24.0 & $4.3\!\times\!10^{8}$ & $5.2\!\times\!10^{-10}$ & $8.2\!\times\!10^{-2}$ \\
Boeing 747 & 55.9 & $2.9\!\times\!10^{10}$ & $7.5\!\times\!10^{-12}$ & $3.3\!\times\!10^{-1}$ \\
\hline
\end{tabular}
\end{table}

In simulation, we replace this with a \emph{bootstrap stability trigger}. At each epoch boundary, the plant computes the OLS estimate $(\Ahat_k, \Bhat_k)$ and the corresponding certainty-equivalent gain $\Khat_k$. It then draws $N_{\mathrm{mc}} = 50$ Monte Carlo samples from the OLS sampling distribution. Under the linear model $x_{t+1} = [A \mid B]\,z_t + w_t$ with $z_t = [x_t^\top,\, u_t^\top]^\top$ and Gaussian noise $w_t \sim \mathcal{N}(0, \sigma_w^2 I)$, we let each row $\hat\theta_i$ of the OLS estimate be distributed as $\hat\theta_i \sim \mathcal{N}(\theta_{\star,i},\, \sigma_w^2 \mathbf{\Lambda}_k^{-1})$, where $\mathbf{\Lambda}_k = \sum_t z_t z_t^\top$ is the empirical covariance matrix. Each sample $(A^{(j)}, B^{(j)})$ is drawn from this distribution, and we verify that $\rho(A^{(j)} + B^{(j)} \Khat_k) < 0.99$ for every sample. The safe phase is declared only if all sampled systems are stabilized, ensuring that the trigger fires only when the empirical uncertainty is small enough that the learned controller is statistically likely to be safe.

As a secondary safeguard, both controllers employ a \emph{fallback test}: if the state norm exceeds five times the maximum norm observed during pre-safe exploration, the algorithm reverts to the baseline stabilizing controller $K_0$ and resumes exploration.

\subsubsection{Coordinate-Wise Scalar Quantization}

The theoretical analysis uses a $\varrho$-net codebook on the unit sphere $\cS^{\pdim-1}$ to quantize the correction $\Delta_k = \thetabar_k - \thetatil_{k-1}$, transmitting the direction via a codebook index and the scale via Elias Gamma coding. While this achieves optimal rate--distortion scaling, it is difficult to implement in practice. 

We instead quantize each coordinate of $\Delta_k$ independently using a uniform scalar quantizer with step size $1/\sqrt{\tauk}$. This matches the $\mathcal{O}(\tau^{-1/2})$ ``fast'' OLS estimation rate (Corollary~\ref{cor:ols-two-rate}), which in practice governs the convergence of all parameter coordinates---including the $\dx\du$ subspace whose worst-case theoretical rate is $\tau^{-1/4}$. Coordinate~$i$ is encoded as a sign bit plus the Elias Gamma encoding of the integer index $\lceil |\Delta_{k,i}|\sqrt{\tauk}\,\rceil$. Unlike the theorem's two-scale vector quantizer, this simplification is not accompanied by the same worst-case guarantee: using the fast rate instead of the conservative two-scale schedule~\eqref{eq:two-scale-schedule} may inflate the worst-case bit budget from $\mathcal{O}(\pdim\log T)$ to $\mathcal{O}(\pdim\log^2 T)$. For the four benchmarked systems considered here, however, the realized communication cost remains close to $\mathcal{O}(\pdim \log T)$ since OLS converges at the fast $\tau_k^{-\frac{1}{2}}$ rate across all dimensions in these simulations.

\subsubsection{Exploration Decay}

The theoretical exploration rate is $\sigma_k^2 = \min\{1,\,\sigmain^2\,\tauk^{-1/2}\}$, which depends on $\sigmain$ in Point 6 of Lemma~\ref{lem:51}.
For the inverted pendulum algorithm, this evaluates to $\sigmain^2 \approx 1649$. This means $\sigma_k = 1$ (no decay) for all $\tauk \le \sigmain^4 \approx 7.4 \times 10^{12}$, far exceeding our horizon. The excessive exploration noise injects cost through the unstable dynamics, causing linear regret growth. In simulation, we therefore set $\sigmain^2 = 1$ so that exploration decays as $\sigma_k \sim \tauk^{-1/4}$ within the simulation horizon.

\subsection{Results}

We compare two controllers over $T = 10^4$ steps and $50$ independent trials:
\begin{enumerate}[itemsep=2pt]
    \item \textbf{Unquantized CE:} Full-precision certainty-equivalence. The controller receives $\thetahat_k \in \R^{\pdim}$ exactly at each epoch boundary.
    \item \textbf{Practical QCE-LQR:} Quantized certainty-equivalence with the simulation variant above. We report its realized communication empirically.
\end{enumerate}
Table~\ref{tab:results} and Figures~\ref{fig:regret_grid}--\ref{fig:bits_grid} summarize the findings.

\begin{table}[h]
\centering
\renewcommand{\arraystretch}{1.15}
\caption{Practical QCE-LQR vs.\ Unquantized CE ($T=10^4$, $50$ trials, median regret).}
\label{tab:results}
\begin{tabular}{lccrrrc}
\hline
System & $(\dx,\du)$ & $\rho(A)$ & CE Regret & Practical QCE Regret & Overhead & Bits \\
\hline
Scalar Unstable   & $(1,1)$ & $1.10$ & $953$       & $880$       & $-7.6\%$  & $123$   \\
Double Integrator & $(2,1)$ & $1.00$ & $1{,}886$   & $1{,}723$   & $-8.6\%$  & $279$   \\
Inverted Pendulum & $(2,1)$ & $1.25$ & $22{,}761$  & $15{,}856$  & $-30.3\%$ & $232$  \\
Boeing 747        & $(4,2)$ & $1.00$ & $25{,}484$  & $32{,}504$  & $+27.5\%$ & $819$ \\
\hline
\end{tabular}
\end{table}

\begin{figure*}[t]
    \centering
    \begin{minipage}[t]{0.48\textwidth}
        \centering
        \includegraphics[width=\linewidth]{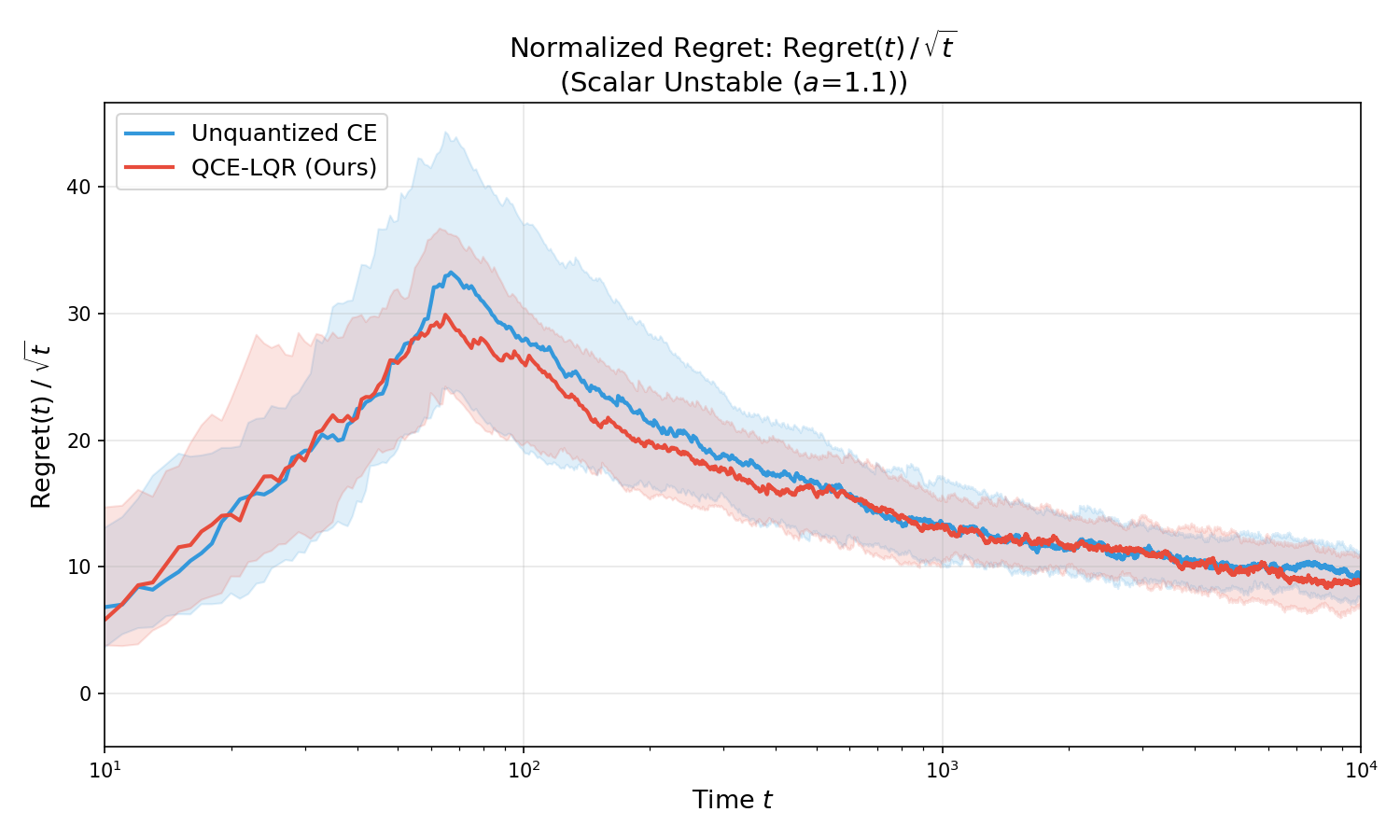}
        {\small (a) Scalar unstable ($a = 1.1$, $123$ bits)}
    \end{minipage}\hfill
    \begin{minipage}[t]{0.48\textwidth}
        \centering
        \includegraphics[width=\linewidth]{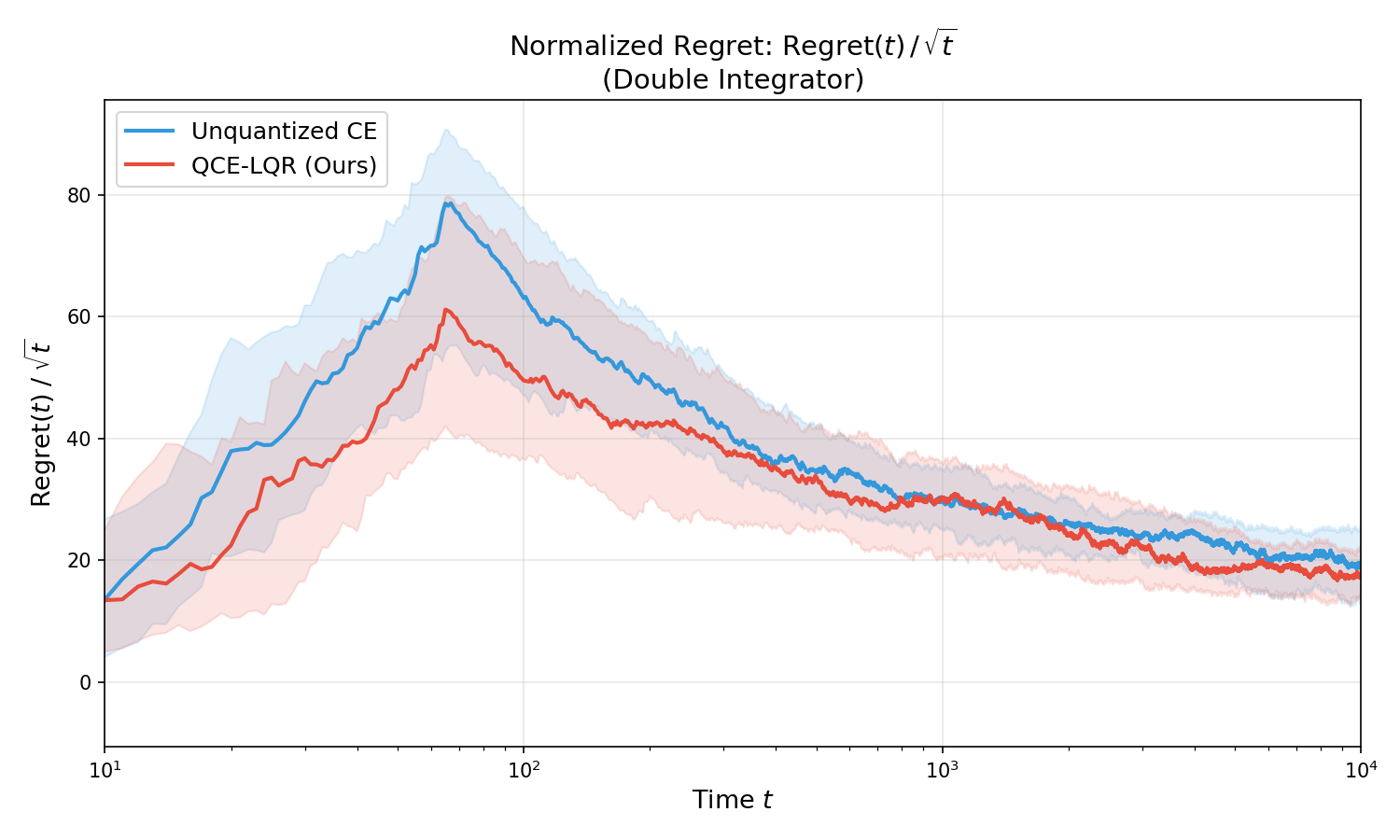}
        {\small (b) Double integrator ($279$ bits)}
    \end{minipage}

    \vspace{6pt}

    \begin{minipage}[t]{0.48\textwidth}
        \centering
        \includegraphics[width=\linewidth]{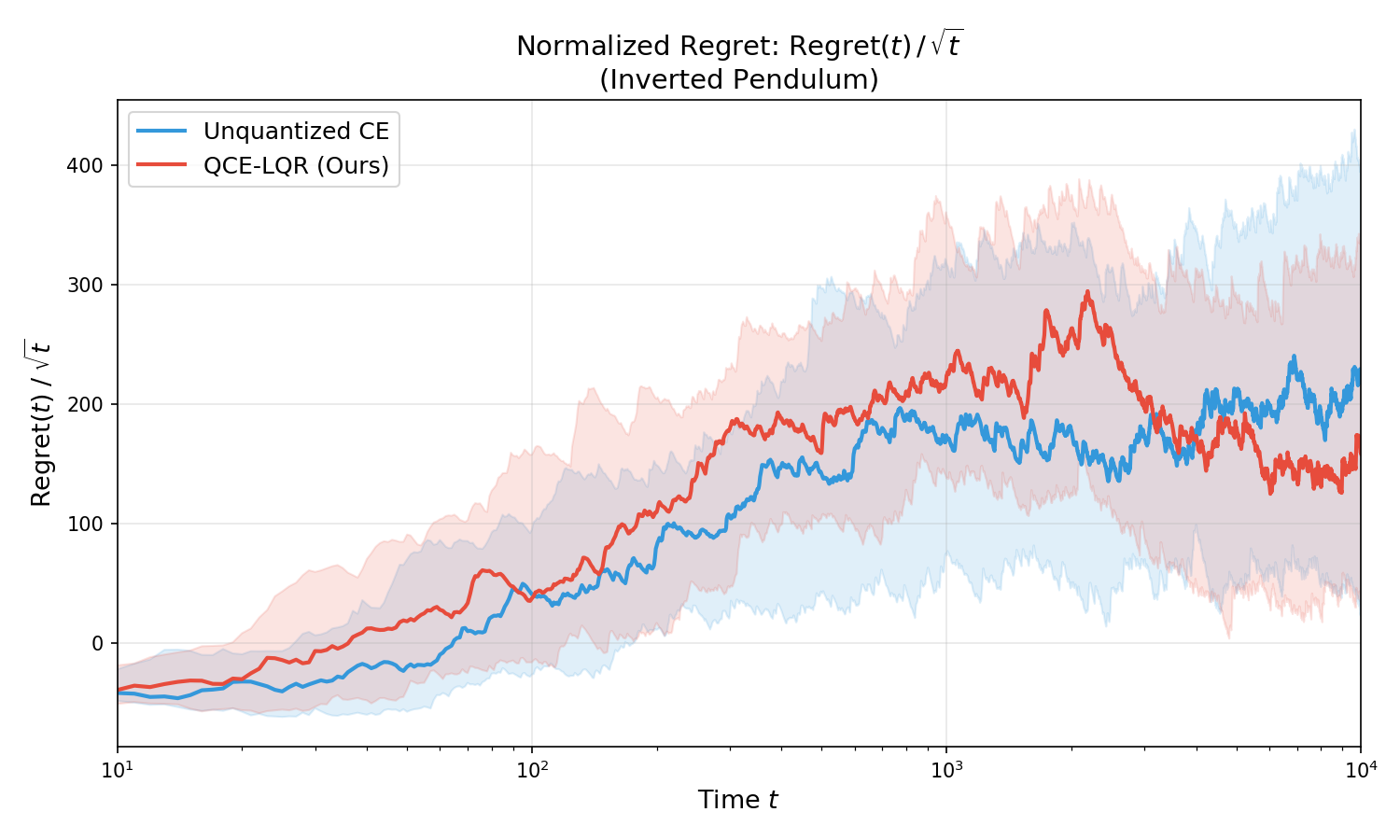}
        {\small (c) Inverted pendulum ($232$ bits)}
    \end{minipage}\hfill
    \begin{minipage}[t]{0.48\textwidth}
        \centering
        \includegraphics[width=\linewidth]{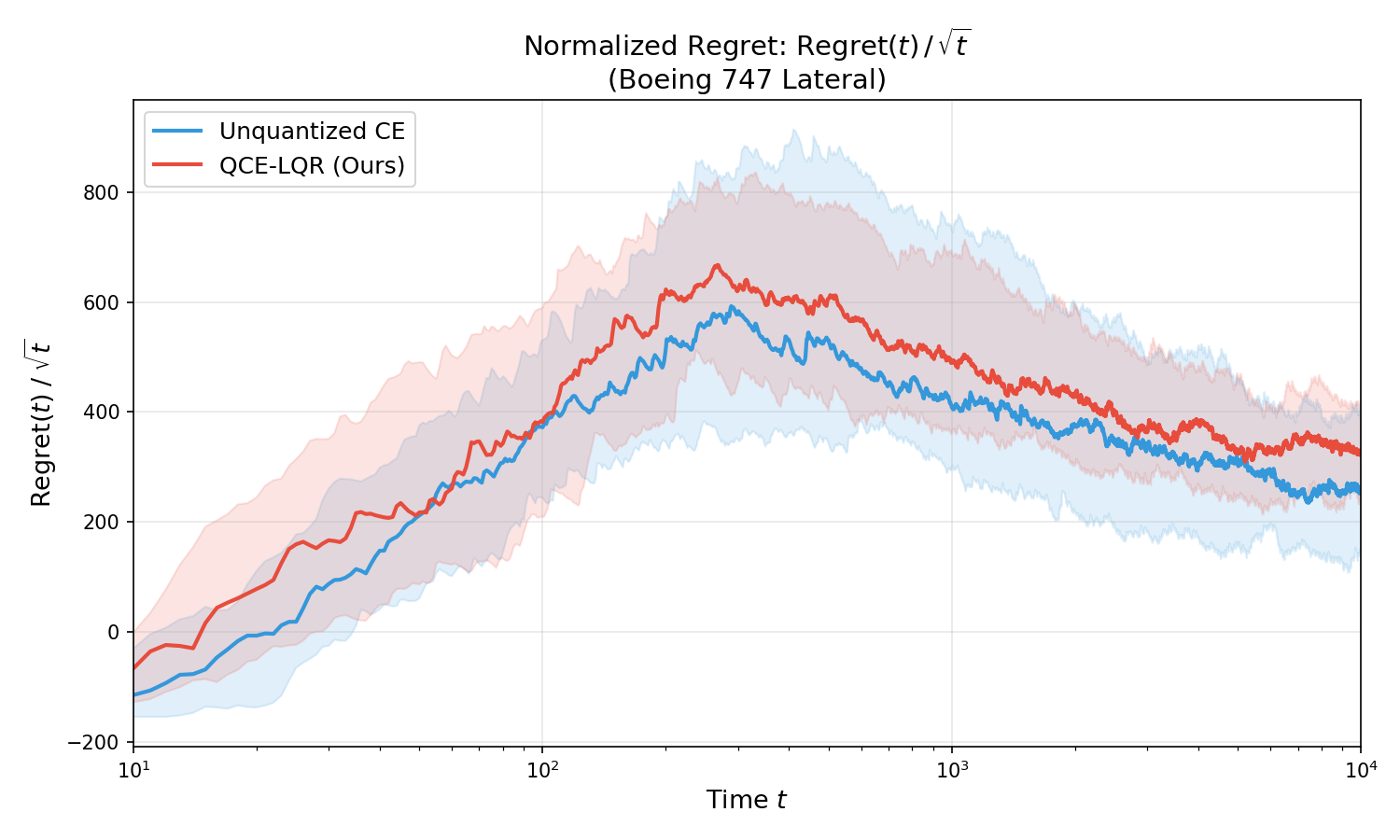}
        {\small (d) Boeing 747 lateral ($819$ bits)}
    \end{minipage}
    \caption{Normalized regret $\Regret(t)/\sqrt{t}$ for Practical QCE-LQR (red) vs.\ unquantized CE (blue), $T=10^4$, $50$ trials. Solid lines show medians; shaded bands show the interquartile range. Both controllers use the causal bootstrap trigger with a runtime fallback shield. In all four systems the two controllers converge to comparable asymptotic values, confirming that quantization overhead is small in these simulations.}
    \label{fig:regret_grid}
\end{figure*}

\begin{figure*}[t]
    \centering
    \begin{minipage}[t]{0.48\textwidth}
        \centering
        \includegraphics[width=\linewidth]{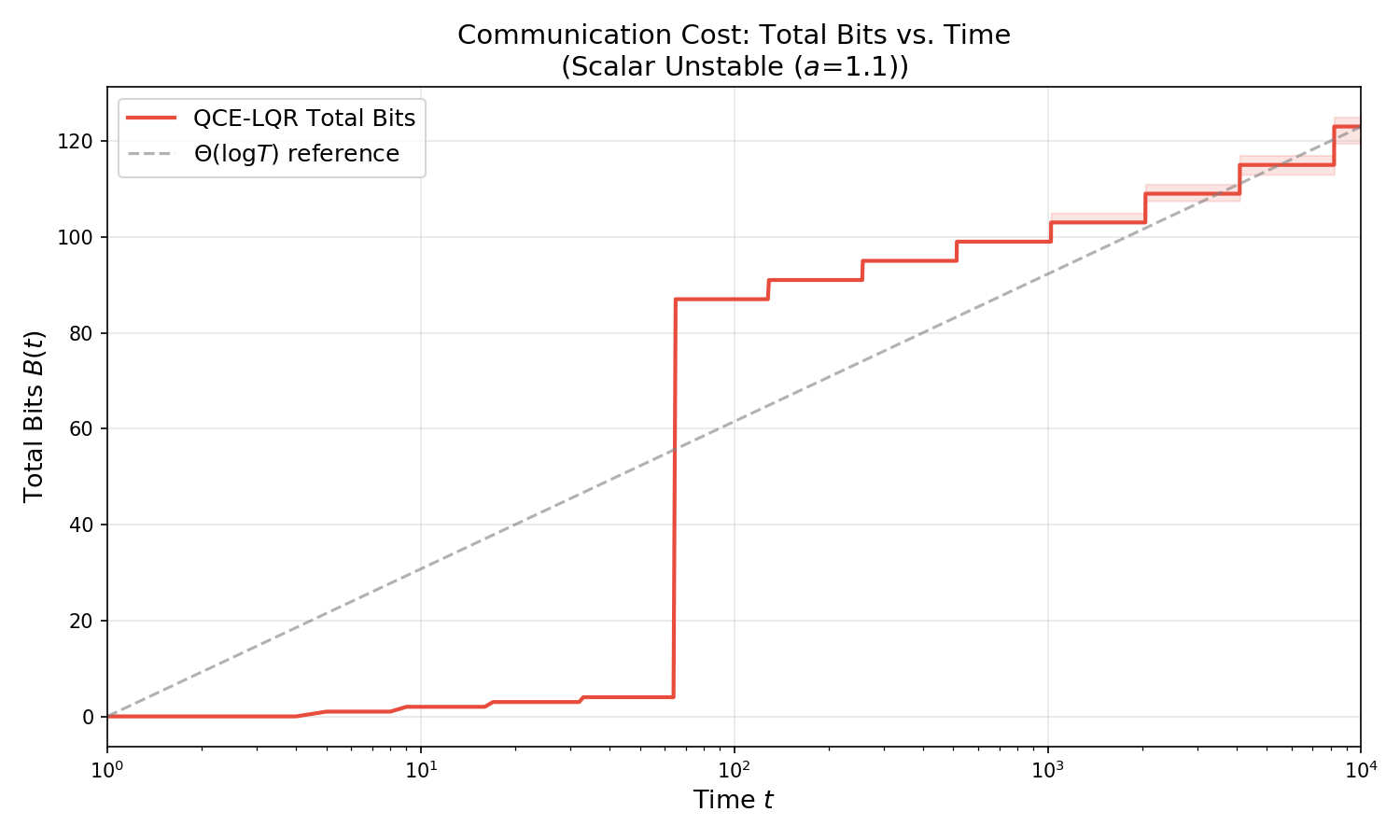}
        {\small (a) Scalar unstable ($123$ bits)}
    \end{minipage}\hfill
    \begin{minipage}[t]{0.48\textwidth}
        \centering
        \includegraphics[width=\linewidth]{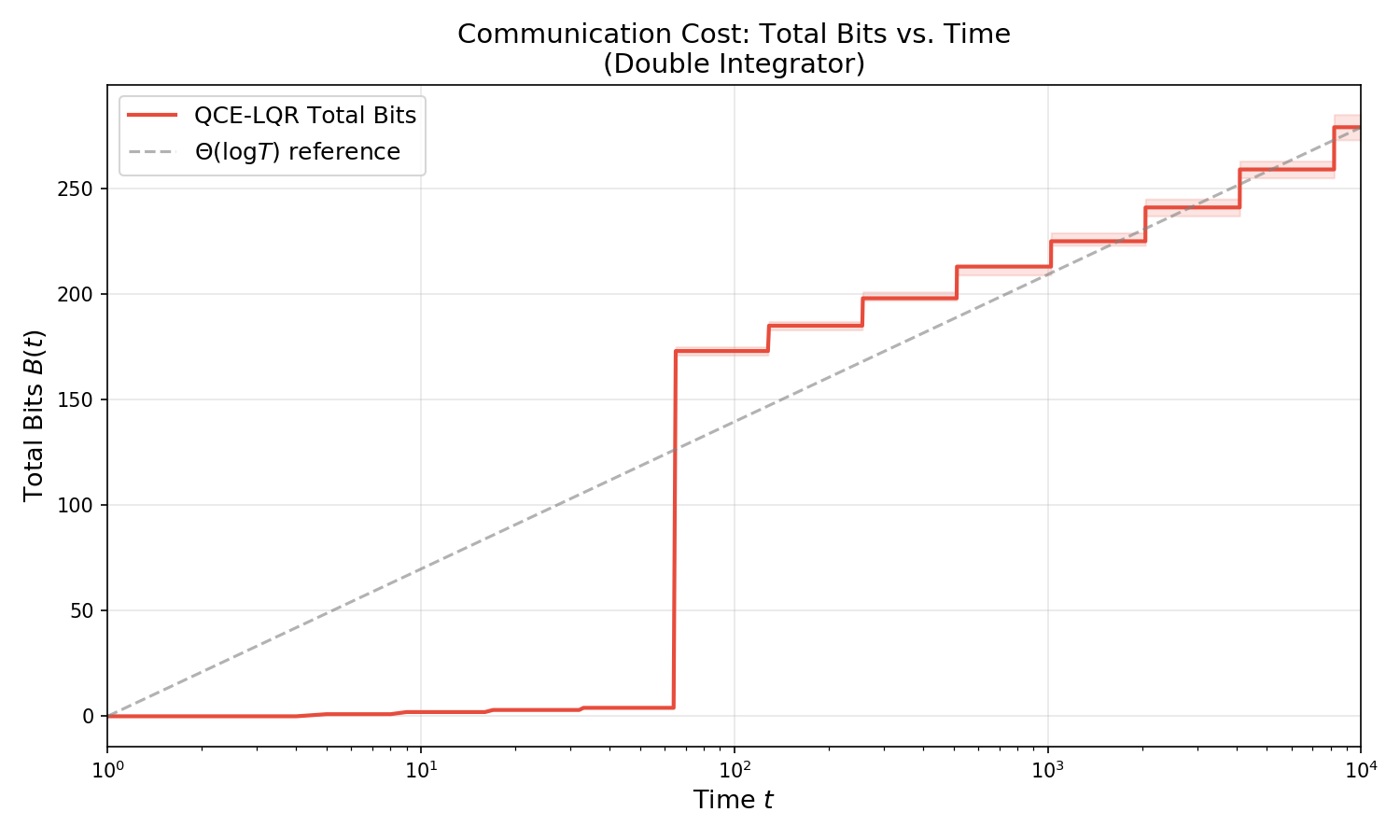}
        {\small (b) Double integrator ($279$ bits)}
    \end{minipage}

    \vspace{6pt}

    \begin{minipage}[t]{0.48\textwidth}
        \centering
        \includegraphics[width=\linewidth]{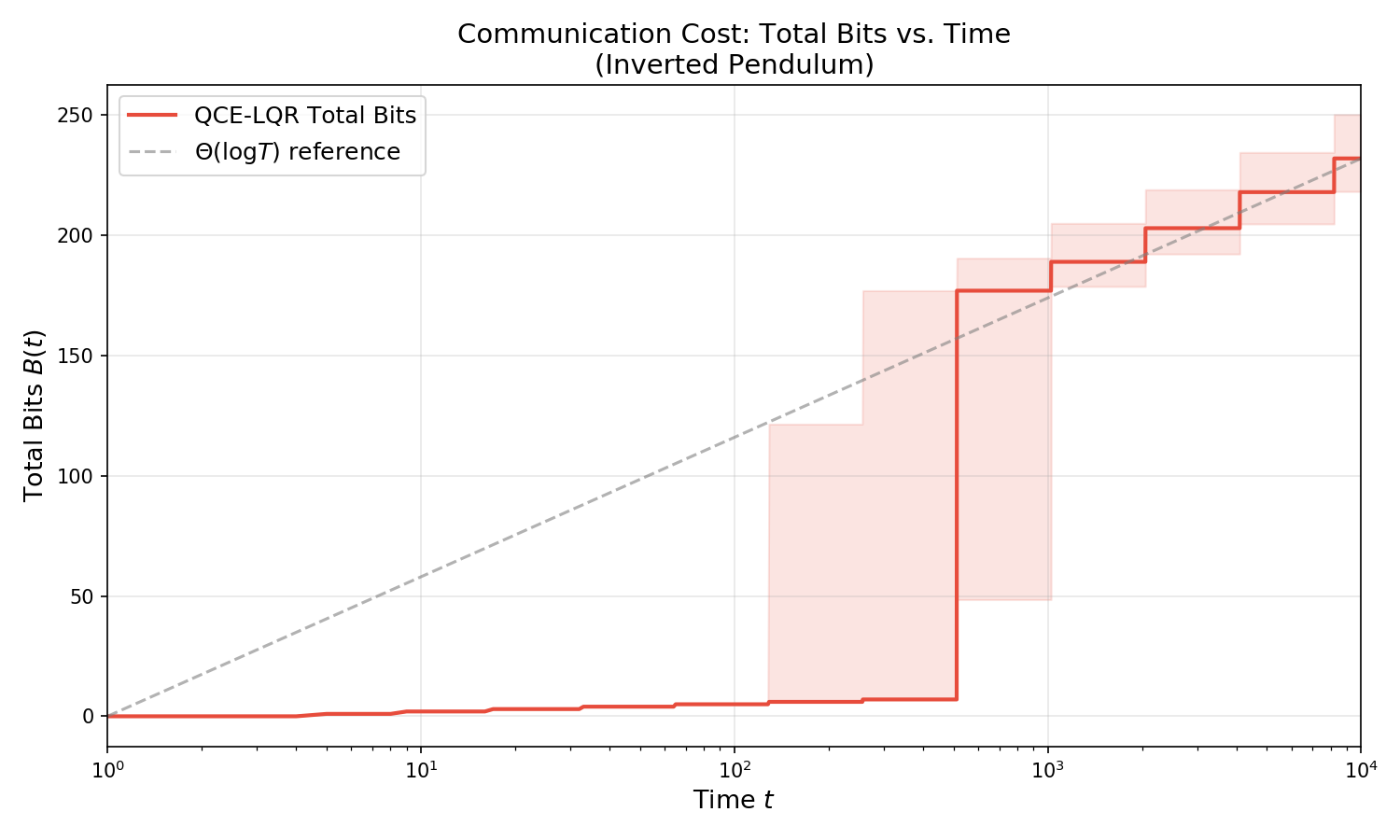}
        {\small (c) Inverted pendulum ($232$ bits)}
    \end{minipage}\hfill
    \begin{minipage}[t]{0.48\textwidth}
        \centering
        \includegraphics[width=\linewidth]{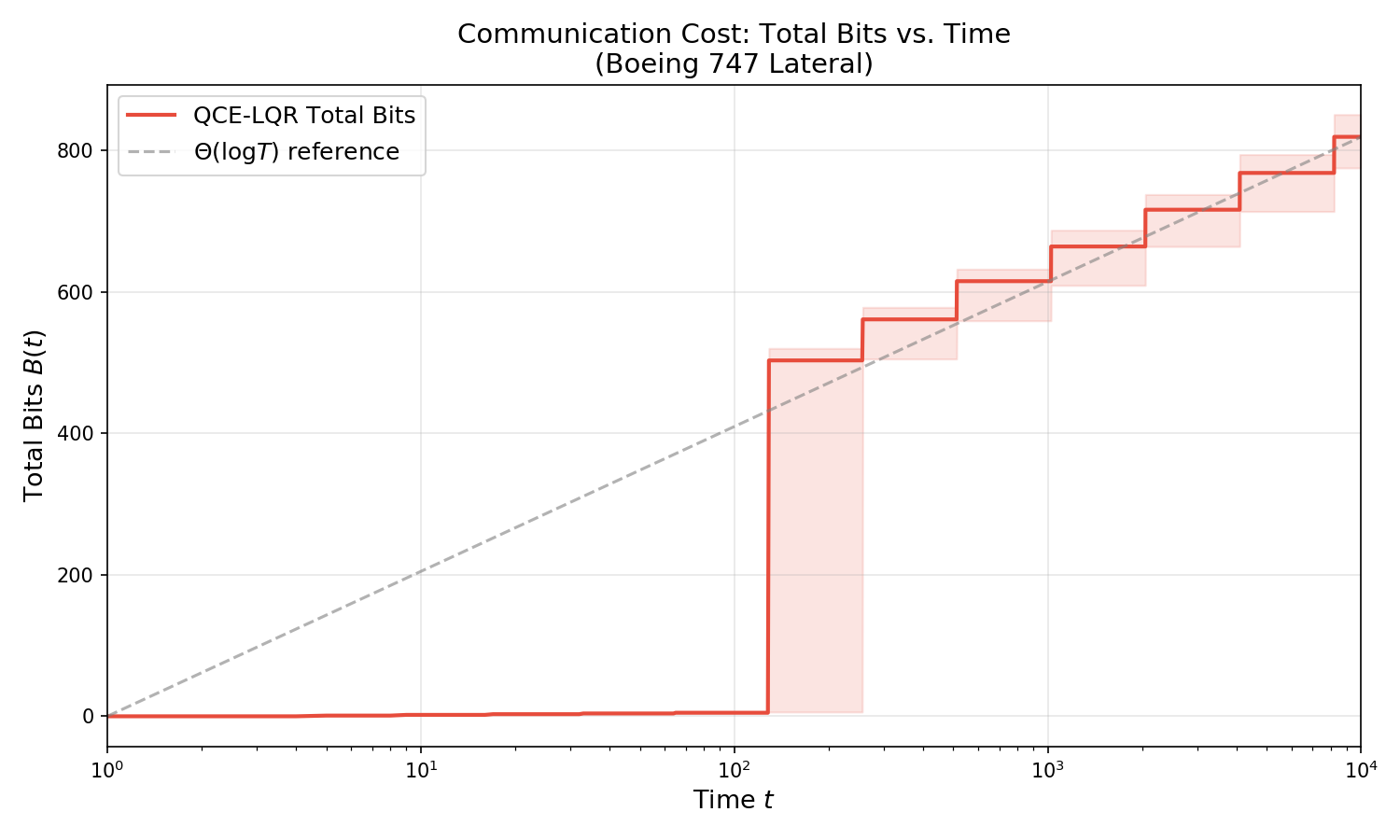}
        {\small (d) Boeing 747 lateral ($819$ bits)}
    \end{minipage}
    \caption{Total bits transmitted $B(t)$ by Practical QCE-LQR over time for each system. Dashed lines show the $\Theta(\log T)$ reference. All systems exhibit the characteristic three-phase structure: pre-safe flags, Elias Gamma initialization jump, and logarithmically growing tracking corrections. Total bits scale with $\pdim$: from $123$ ($\pdim=2$) to $819$ ($\pdim=24$).}
    \label{fig:bits_grid}
\end{figure*}

Across all four benchmarks, Practical QCE-LQR achieves normalized regret comparable to the unquantized baseline (Figure~\ref{fig:regret_grid}). We observe that before the safe flag is declared, the tracking controller incurs linear regret growth since it executes the known stabilizing controller $K_0$. The normalized regret then decays to a constant level as the controller converges to the optimal policy. The differences in median regret (Table~\ref{tab:results}) are within stochastic variability for the lower-dimensional systems. For the Boeing~747 ($\pdim = 24$), the bootstrap trigger fires slightly later in some trials due to the higher statistical uncertainty, resulting in modestly higher regret. The realized bit count scales with $\pdim$: from $123$ bits ($\pdim = 2$) to $819$ bits ($\pdim = 24$), close to the $\mathcal{O}(\pdim \log T)$ trend in these experiments. Figure~\ref{fig:bits_grid} confirms the characteristic three-phase structure across all systems: pre-safe flags, Elias Gamma initialization, and logarithmically growing tracking corrections.

\section{Conclusion}

We have established that $\Theta(\log T)$ bits of communication are both necessary and sufficient to achieve the optimal $\tilde{\mathcal{O}}(\sqrt{T})$ regret in online LQR with unknown dynamics and a rate-limited plant-to-controller link. By transmitting learned models instead of raw states, communication only needs to track the estimator's innovation, which shrinks as learning converges, breaking the $\mathcal{O}(T)$ bandwidth barrier of classical state-quantization approaches.

\subsection*{Summary of Contributions}

\begin{enumerate}
    \item \textbf{Converse (Theorem~\ref{thm:main_converse}).} Any scheme achieving worst-case expected regret $\mathcal{O}(T^\alpha)$ for $\alpha \in [1/2,1)$ must transmit at least $\Omega(\du\dx(1-\alpha)\log T)$ bits by time $T$, even if the true dynamics are known at the system. This establishes a fundamental information-theoretic floor on the communication cost of near-optimal adaptive control.

    \item \textbf{Achievability (Theorem~\ref{thm:main_ach}).} The QCE-LQR algorithm attains $\tilde{\mathcal{O}}(\sqrt{T})$ regret using only $\mathcal{O}((\dx^2 + \dx\du)\log T)$ total bits. The algorithm employs a two-scale adaptive quantization protocol that separately tracks the slow ($\tau_k^{-1/4}$) and fast ($\tau_k^{-1/2}$) decay rates of the OLS estimation error, combined with an absolute initialization via Elias Gamma coding that guarantees zero overflow risk.

    \item \textbf{Explicit Quantization--Regret Tradeoff.} The regret bound in Theorem~\ref{thm:main_ach} exposes the precise cost of quantization through the inflation factors $Q_{\mathrm{slow}}(\varrho)$ and $Q_{\mathrm{fast}}(\varrho)$. As the codebook resolution increases ($\varrho \to 0$), these factors vanish, smoothly recovering the exact unquantized Simchowitz \& Foster baseline.
    \item \textbf{Dimension matching and explicit tradeoff.} The two-scale schedule quarantines the $\dx^2$ scaling into the lower-order $\log T$ term, preserving the optimal $\tilde{\mathcal{O}}(\sqrt{\dx\du^2 T})$ dimension dependence. The quantization penalty is fully captured by $Q_{\mathrm{slow}}(\varrho)$ and $Q_{\mathrm{fast}}(\varrho)$, which vanish as $\varrho \to 0$, making the bandwidth--regret tradeoff explicit.
\end{enumerate}

\subsection*{Next Steps}

\textbf{Closing the dimensional gap.} Our bounds are tight in the horizon $T$ but a gap remains in the dimension-dependent constant multiplying $\log T$. The converse establishes a lower bound of $\Omega(\du\dx\log T)$ bits, counting only the $\du\dx$ degrees of freedom in the gain $K \in \R^{\du\times\dx}$, since the proof technique relies on a mutual information bound between $\Kstar$ and $\hat K$, while the achievability requires $\mathcal{O}((\dx^2 + \dx\du)\log T)$ bits for the full estimate $\hat\theta = \mathrm{vec}(\hat A, \hat B) \in \R^{\dx^2 + \dx\du}$. The additional $\dx^2 \log T$ term arises because the controller needs $\hat A$ to solve the DARE; in our architecture these parameters cannot be eliminated from the uplink. Closing this gap by strengthening the converse to $\Omega((\dx^2 + \dx\du)\log T)$ remains open.

\textbf{Symmetric channel extension.} Our main results assume an asymmetric bi-directional channel (rate-limited uplink, unconstrained downlink). The analysis extends to a fully symmetric rate-limited channel in which the controller must also quantize the policy $\Ktil_k$. The policy innovation $\Delta K_k := \Ktil_k - \Ktil_{k-1}$ inherits the same $\mathcal{O}(\tau_k^{-1/4})$ decay rate as the dynamics estimates, so the controller can quantize $\Delta K_k$ using the identical two-scale schedule and Elias Gamma protocol. The resulting Frobenius-norm error can be bounded using the same techniques. A rate of $\mathcal{O}(\log T)$ downlink bits preserves the optimal $\tilde{\mathcal{O}}(\sqrt{T})$ regret, yielding a total symmetric cost of $\mathcal{O}((\dx^2 + \dx\du)\log T)$ bits. A full proof of this extension is left for future work.

\bibliographystyle{ieeetr}

\begin{thebibliography}{10}

\bibitem{simchowitz20a}
M.~Simchowitz and D.~Foster, ``Naive exploration is optimal for online {LQR},''
  in {\em Proceedings of the 37th International Conference on Machine Learning}
  (H.~D. III and A.~Singh, eds.), vol.~119 of {\em Proceedings of Machine
  Learning Research}, pp.~8937--8948, PMLR, 13--18 Jul 2020.

\bibitem{mania2019}
H.~Mania, S.~Tu, and B.~Recht, ``Certainty equivalence is efficient for linear
  quadratic control,'' in {\em Advances in Neural Information Processing
  Systems}, pp.~10154--10164, Dec 2019.

\bibitem{kargin}
T.~Kargin, S.~Lale, K.~Azizzadenesheli, A.~Anandkumar, and B.~Hassibi,
  ``Thompson sampling for partially observable linear-quadratic control,'' in
  {\em 2023 American Control Conference (ACC)}, pp.~4561--4568, May 2023.

\bibitem{kostina2019}
V.~Kostina and B.~Hassibi, ``Rate-cost tradeoffs in control,'' {\em IEEE
  Transactions on Automatic Control}, vol.~64, pp.~4525--4540, Apr. 2019.

\bibitem{tatikondacontrol}
S.~Tatikonda and S.~Mitter, ``Control under communication constraints,'' {\em
  IEEE Transactions on Automatic Control}, vol.~49, no.~7, pp.~1056--1068, Jul
  2004.

\bibitem{barronisit}
B.~Han, V.~Kostina, B.~Hassibi, and O.~Sabag, ``Coded {K}alman filtering over
  {MIMO} {G}aussian channels with feedback,'' in {\em 2024 IEEE International
  Symposium on Information Theory (ISIT)}, pp.~3261--3266, Jul 2024.

\bibitem{sahaimitter}
A.~Sahai and S.~Mitter, ``The necessity and sufficiency of anytime capacity for
  stabilization of a linear system over a noisy communication link—part i:
  Scalar systems,'' {\em IEEE Transactions on Information Theory}, vol.~52,
  pp.~3369--3395, Jul 2006.

\bibitem{mitra2024}
A.~Mitra, L.~Ye, and V.~Gupta, ``Towards model-free {LQR} control over
  rate-limited channels,'' in {\em Proceedings of the Learning for Dynamics and
  Control Conference (L4DC)}, Proceedings of Machine Learning Research,
  pp.~1253--1265, PMLR, Jul 2024.

\bibitem{fazel18a}
M.~Fazel, R.~Ge, S.~Kakade, and M.~Mesbahi, ``Global convergence of policy
  gradient methods for the linear quadratic regulator,'' in {\em Proceedings of
  the 35th International Conference on Machine Learning} (J.~Dy and A.~Krause,
  eds.), vol.~80 of {\em Proceedings of Machine Learning Research},
  pp.~1467--1476, PMLR, 10--15 Jul 2018.

\bibitem{friedland_control}
B.~Friedland, {\em Control System Design: An Introduction to State-Space
  Methods}.
\newblock Dover Publications, reprint~ed., 2005.

\bibitem{abbasi}
Y.~Abbasi-Yadkori and C.~Szepesv\'ari, ``Regret bounds for the adaptive control
  of linear quadratic systems,'' in {\em Proceedings of the 24th Annual
  Conference on Learning Theory} (S.~M. Kakade and U.~von Luxburg, eds.),
  vol.~19 of {\em Proceedings of Machine Learning Research}, (Budapest,
  Hungary), pp.~1--26, PMLR, 09--11 Jun 2011.

\end{thebibliography}

\newpage
\appendices
\section{Proof of Theorem~\ref{thm:main_converse} (Converse)}
\label{app:converse}

\begin{proof}
As outlined in Section~\ref{sec:converse}, this proof proceeds in four main steps. First, we construct a hard subclass of systems $\Theta_{\mathrm{hard}}$ parametrized by the optimal controller gain $K$. Second, we establish a regret identity demonstrating that sub-linear regret bounds the mean-squared error of estimating $K$. Third, we use the controller's sequence of actions to construct a concrete estimator $\widehat{K}$ for the hidden gain. Finally, we place a uniform prior on the hard class and apply information-theoretic inequalities to show that transmitting enough bits to form this estimator requires $\Omega(\log T)$ bits.

We may assume $T\ge 4$, since finitely many smaller horizons can be absorbed into the constant $C$.

\paragraph{Step 1: Constructing a Hard Subclass.} We begin by defining a localized hypercube of optimal control gains.
Set
\begin{equation}\label{eq:cube-def}
    a:=\frac{r}{\sqrt{\du\dx}},
    \qquad
    \mathcal{C}_r := [-a,a]^{\du\times \dx},
\end{equation}
so that every $K\in\mathcal{C}_r$ satisfies $\|K\|_F\le r$.

\medskip
Choose
\begin{equation}\label{eq:c-choice}
    c > 1 + \frac{r^2\lambda_{\max}(\Ru)}{\lambda_{\min}(\Rx)}
\end{equation}
and set
\begin{equation}\label{eq:P-def-converse}
    P:=c\Rx\succ 0.
\end{equation}
For each $K\in\mathcal{C}_r$, define
\begin{equation}\label{eq:MK-def}
    M_K := (c-1)\Rx - K^\top \Ru K.
\end{equation}
Since $\|K\|_F\le r$,
\begin{equation}\label{eq:KRuK-bound}
    K^\top \Ru K \preceq \lambda_{\max}(\Ru)\|K\|_F^2 I_{\dx}
    \preceq r^2\lambda_{\max}(\Ru) I_{\dx},
\end{equation}
so our choice of $c$ implies $M_K\succ 0$. Now define
\begin{equation}\label{eq:hard-instance-def}
    \Phi_K := P^{-1/2} M_K^{1/2},
    \qquad
    B_K := -P^{-1}\Phi_K^{-\top} K^\top \Ru,
    \qquad
    A_K := \Phi_K - B_KK.
\end{equation}
Then
\begin{align}
    A_K+B_KK &= \Phi_K, \label{eq:ABK-identity} \\
    \Phi_K^\top P\Phi_K &= M_K = (c-1)\Rx - K^\top \Ru K, \label{eq:PhiK-identity} \\
    B_K^\top P\Phi_K &= -\Ru K. \label{eq:BK-cross-term}
\end{align}
Hence
\begin{equation}\label{eq:dare-verification}
    \Rx + K^\top \Ru K + \Phi_K^\top P\Phi_K = P.
\end{equation}
Also, $\Phi_K^\top P\Phi_K \preceq (c-1)\Rx = \frac{c-1}{c}P$,
so $\rho(\Phi_K)<1$. Therefore $(A_K,B_K)$ is stabilizable.

Let
\begin{equation}\label{eq:Theta-hard-def}
    \Theta_{\mathrm{hard}} := \{(A_K,B_K):\ K\in\mathcal{C}_r\}.
\end{equation}
We now show that $K_\infty(A_K,B_K)=K$ for every $K\in\mathcal{C}_r$, so in particular $\Theta_{\mathrm{hard}}\subset \Theta$.

\medskip
\paragraph{Step 2: The Regret Identity.} Next, we relate the expected regret to the mean-squared-error of the controller's sequence of actions. Fix $K\in\mathcal{C}_r$ and consider the corresponding system $(A_K,B_K)$. Define
\begin{equation}\label{eq:V-SK-def}
    V(x):=x^\top P x,
    \qquad
    S_K:=\Ru+B_K^\top P B_K.
\end{equation}
Then $S_K\succ 0$ and $S_K \succeq \Ru \succeq \lambda_{\min}(\Ru) I_{\du}$.
For arbitrary $x\in\R^{\dx}$ and $u\in\R^{\du}$,

\begin{equation}\label{eq:closed-loop-split}
    A_Kx+B_Ku = \Phi_Kx + B_K(u-Kx).
\end{equation}
Therefore
\begin{align}\label{eq:V-expansion}
    V(A_Kx+B_Ku)
    &=
    x^\top \Phi_K^\top P\Phi_K x
    + 2x^\top \Phi_K^\top P B_K (u-Kx)
    + (u-Kx)^\top B_K^\top P B_K (u-Kx).
\end{align}
Consider
\begin{align}\label{eq:complete-square}
    &x^\top \Rx x + u^\top \Ru u + V(A_Kx+B_Ku)-V(x) \nonumber\\
    &\qquad=
    x^\top\bigl(\Rx + \Phi_K^\top P\Phi_K - P\bigr)x
    + u^\top \Ru u
    - 2x^\top K^\top \Ru (u-Kx)
    + (u-Kx)^\top B_K^\top P B_K(u-Kx) \nonumber\\
    &\qquad=
    x^\top\bigl(\Rx+K^\top \Ru K+\Phi_K^\top P\Phi_K-P\bigr)x
    +(u-Kx)^\top S_K(u-Kx) \nonumber\\
    &\qquad=
    (u-Kx)^\top S_K(u-Kx).
\end{align}
The first equality substitutes~\eqref{eq:V-expansion} for $V(A_Kx+B_Ku)$ and applies~\eqref{eq:BK-cross-term} to the cross term. The second expands $u^\top \Ru u = x^\top K^\top \Ru Kx + 2x^\top K^\top \Ru(u-Kx) + (u-Kx)^\top \Ru(u-Kx)$: the $\pm 2x^\top K^\top \Ru(u-Kx)$ terms cancel and $\Ru + B_K^\top P B_K = S_K$~\eqref{eq:V-SK-def}. The third applies~\eqref{eq:dare-verification}.
If $w\sim\mathcal N(0,\sigma_w^2 I_{\dx})$, then
$\E[V(A_Kx+B_Ku+w)] = V(A_Kx+B_Ku) + \sigma_w^2\Tr(P)$,
so with $J_P := \sigma_w^2\Tr(P)$ we have the one-step identity
\begin{equation}
\label{eq:bellman-converse}
    x^\top \Rx x + u^\top \Ru u
    + \E\!\left[V(A_Kx+B_Ku+w)\right]
    - V(x) - J_P
    =
    (u-Kx)^\top S_K(u-Kx).
\end{equation}
Since $S_K\succ 0$, the unique minimizer is $u=Kx$. Hence
$K_\infty(A_K,B_K)=K$ and $J_\infty(A_K,B_K)=J_P$,
confirming $\Theta_{\mathrm{hard}}\subset \Theta$.

\medskip
Fix $K\in\mathcal{C}_r$ and run any scheme achieving the regret bound~\eqref{eq:regret-assumption} on $(A_K,B_K)$. Write
\begin{equation}\label{eq:regret-eps-def}
    \mathrm{Regret}_T^K := \sum_{t=1}^T c_t - TJ_P,
    \qquad
    \epsilon_t^K := \E\!\left[(u_t-Kx_t)^\top S_K(u_t-Kx_t)\right].
\end{equation}
Summing \eqref{eq:bellman-converse} from $t=1$ to $T$ with $x_1=0$ gives
\begin{equation}
\label{eq:regret-identity-converse}
    \sum_{t=1}^T \epsilon_t^K
    =
    \E[\mathrm{Regret}_T^K] + \E[V(x_{T+1})].
\end{equation}
Since $P=c\Rx$ by~\eqref{eq:P-def-converse}, we have $x^\top \Rx x = \frac{1}{c} V(x)$ exactly. Identity \eqref{eq:bellman-converse} therefore implies
\begin{align}\label{eq:lyapunov-recursion}
    \E[V(x_{t+1})]
    &\le
    \left(\frac{c-1}{c}\right)\E[V(x_t)] + J_P + \epsilon_t^K.
\end{align}
Iterating from $V(x_1)=0$ yields
$\E[V(x_{T+1})]
\le
c J_P + \sum_{t=1}^T \left(\frac{c-1}{c}\right)^{\!T-t}\epsilon_t^K$,
and combining with \eqref{eq:regret-identity-converse} gives
\begin{equation}
\label{eq:weighted-bound-converse}
    \sum_{t=1}^T \left(1-\left(\frac{c-1}{c}\right)^{\!T-t}\right)\epsilon_t^K
    \le
    \E[\mathrm{Regret}_T^K] + c J_P.
\end{equation}
Let $m := \lfloor T/2\rfloor$.
For every $t\in\{2,\dots,m\}$ we have $T-t\ge 1$, hence $1-\left(\frac{c-1}{c}\right)^{T-t} \ge 1-\frac{c-1}{c} = \frac{1}{c}$.
Therefore \eqref{eq:weighted-bound-converse} implies
\begin{equation}
\label{eq:first-half-eps-converse}
    \frac{1}{c}\sum_{t=2}^m \epsilon_t^K
    \le
    \E[\mathrm{Regret}_T^K] + c J_P.
\end{equation}
Since $\Theta_{\mathrm{hard}}\subset \Theta$, the assumed regret bound gives $\E[\mathrm{Regret}_T^K] \le C_1 T^\alpha$ for all $K\in\mathcal{C}_r$.

\medskip
Fix $t\ge 2$ and let $\mathcal F_{t-1}$ denote the sigma-field generated by $(x_1,u_1,\ldots,x_{t-1},u_{t-1})$. Let $\mathcal G_{t-1}$ be the enlargement of $\mathcal F_{t-1}$ by all encoder/controller randomization used before $u_t$ is chosen. By the timing convention in Section~\ref{sec:setting}, the transcript available when selecting $K_t$ is generated from the past state-action history, so $K_t$ is $\mathcal G_{t-1}$-measurable. Since
$x_t = A_Kx_{t-1} + B_Ku_{t-1} + w_{t-1}$
with $w_{t-1}\sim\mathcal N(0,\sigma_w^2 I_{\dx})$ independent of $\mathcal G_{t-1}$, we have
$\E[x_t x_t^\top \mid \mathcal G_{t-1}] \succeq \sigma_w^2 I_{\dx}$.
Writing $M_t := (K_t-K)^\top S_K (K_t-K)\succeq 0$ and using $u_t=K_tx_t$,
\begin{align}\label{eq:eps-lower-bound}
    \epsilon_t^K
    &=
    \E\!\left[x_t^\top M_t x_t\right]
    =
    \E\!\left[\Tr(M_t x_t x_t^\top)\right] \nonumber\\
    &=
    \E\!\left[\Tr\!\left(M_t\,\E[x_t x_t^\top \mid \mathcal G_{t-1}]\right)\right] \nonumber\\
    &\ge
    \sigma_w^2\,\E\!\left[\Tr(M_t)\right] \nonumber\\
    &\ge
    \sigma_w^2\lambda_{\min}(\Ru)\,\E\|K_t-K\|_F^2.
\end{align}
Set $c_0 := \sigma_w^2\lambda_{\min}(\Ru)>0$.
Combining \eqref{eq:first-half-eps-converse} with the regret hypothesis yields
\begin{equation}
\label{eq:estimator-mse-converse}
    \frac{1}{m-1}\sum_{t=2}^m \E\|K_t-K\|_F^2
    \le
    C_{\mathrm{est}} T^{\alpha-1},
\end{equation}
where
$C_{\mathrm{est}} := \frac{5c}{c_0}\left(C_1 + c J_P\right)$,
using $T\ge 4 \Rightarrow m-1\ge T/5$.

\medskip
\paragraph{Step 3: Constructing the Gain Estimator.} Achieving sub-linear regret implies that the chosen controls $K_t$ are ultimately very close to the true optimal gain $K$. Let $Z$ be uniform on $\{2,\dots,m\}$ and independent of everything else, and define $\widehat K := K_Z$.
Averaging \eqref{eq:estimator-mse-converse} over $Z$ gives
\begin{equation}
\label{eq:Khat-error}
    \E\|\widehat K-K\|_F^2 \le C_{\mathrm{est}} T^{\alpha-1}.
\end{equation}

\medskip
\paragraph{Step 4: Information-Theoretic Lower Bound.} Finally, we apply the data-processing inequality to bound the mutual information between the communication transcript and the true system gain. Place the prior $K\sim \mathrm{Unif}(\mathcal{C}_r)$ on the hard family $\Theta_{\mathrm{hard}}$. Let $\Pi_T$ denote the full uplink transcript received by the controller by time $T$.
Since $(Z,\widehat K)$ is obtained from $\Pi_T$, the Markov chain
$K \to \Pi_T \to (Z,\widehat K)$
holds. By the data-processing inequality,
\begin{equation}
\label{eq:info-upper-converse}
    \I(K;Z,\widehat K)
    \le
    \I(K;\Pi_T)
    \le
    H(\Pi_T)
    \le
    B(T)+1.
\end{equation}
Since $K$ is uniform on $[-a,a]^{\du\dx}$,
$h(K) = \du\dx\log_2(2a) = \du\dx\log_2\!\left(\frac{2r}{\sqrt{\du\dx}}\right)$.
Also,
$h(K\mid Z,\widehat K) \le h(K-\widehat K)$.
By the Gaussian maximum-entropy bound under a second-moment constraint and \eqref{eq:Khat-error},
\begin{equation}\label{eq:gauss-maxent}
    h(K-\widehat K)
    \le
    \frac{\du\dx}{2}\log_2\!\left(\frac{2\pi e}{\du\dx}\,C_{\mathrm{est}}T^{\alpha-1}\right).
\end{equation}
Therefore
\begin{align}\label{eq:final-bound}
    B(T)+1
    &\ge
    h(K)-h(K\mid Z,\widehat K) \nonumber\\
    &\ge
    \du\dx\log_2\!\left(\frac{2r}{\sqrt{\du\dx}}\right)
    - \frac{\du\dx}{2}\log_2\!\left(\frac{2\pi e}{\du\dx}\,C_{\mathrm{est}}T^{\alpha-1}\right) \nonumber\\
    &=
    \frac{\du\dx}{2}(1-\alpha)\log_2 T
    + \du\dx\log_2\!\left(\frac{2r}{\sqrt{\du\dx}}\right)
    - \frac{\du\dx}{2}\log_2\!\left(\frac{2\pi e}{\du\dx}\,C_{\mathrm{est}}\right).
\end{align}
This gives $B(T) \ge \frac{\du\dx}{2}(1-\alpha)\log_2 T - C$
for the finite constant
$C := 1 + \frac{\du\dx}{2}\log_2\!\left(\frac{2\pi e}{\du\dx}\,C_{\mathrm{est}}\right) - \du\dx\log_2\!\left(\frac{2r}{\sqrt{\du\dx}}\right)$.
\end{proof}

\section{Proof of Supporting Lemmas}
\label{app:lemmas}

\begin{proof}[Proof of Lemma \ref{lem:51}]

    By Algorithm \ref{alg:quantized_ce_lqr_proj}, at epoch $k_\safe$, the Decoupled Statistical Trigger passes:
    \begin{equation}
        \label{safecond}
        \sqrt{\mathrm{Conf}_{k_\safe}} \le \epsilon_{\mathrm{target}} = \frac{1}{9\,\Csafe(\hat A_{k_\safe},\hat B_{k_\safe})}.
    \end{equation}
    The plant then quantizes $\hat\theta_{k_\safe}$ at precision $\epsilon_{\mathrm{target}}$~\eqref{eq:eps-target-def}, yielding a shared reconstruction $\tilde\theta_{k_\safe}$ satisfying
    \begin{equation}\label{eq:quant-guarantee}
        \|\tilde\theta_{k_\safe} - \hat\theta_{k_\safe}\|_2 \ \le \epsilon_{\mathrm{target}}.
    \end{equation}
    Recall from~\eqref{eq:rsafe-def} that both sides set $\rsafe = 1/(3\,\Csafe(\tilde A_{\ksafe}, \tilde B_{\ksafe}))$.

    Note that the estimated safe constant $\Csafe(\hat A_{\ksafe},\hat B_{\ksafe})$ remains finite at the trigger, meaning the center pair $(\hat A_{\ksafe}, \hat B_{\ksafe})$ is stabilizable. Since $\epsilon_{\mathrm{target}} < \frac{1}{3\,\Csafe(\hat A_{\ksafe},\hat B_{\ksafe})}$, we may apply Lemma~\ref{lem:11} with center $(A_0, B_0) = (\hat A_{\ksafe}, \hat B_{\ksafe})$ to the decoded pair $(\tilde A_{\ksafe}, \tilde B_{\ksafe})$, concluding that it is stabilizable with
    \begin{equation}\label{eq:Csafe-bootstrap}
        \Csafe(\tilde A_{\ksafe}, \tilde B_{\ksafe}) \le 1.5\,\Csafe(\hat A_{\ksafe},\hat B_{\ksafe}).
    \end{equation}
    Applying Lemma~\ref{lem:11} with the same center $(\hat A_{\ksafe}, \hat B_{\ksafe})$ to the true system $(A,B)$ on the event $\Esafe$ similarly yields
    \begin{equation}\label{eq:Csafe-true}
        \Csafe(A,B) \le 1.5\,\Csafe(\hat A_{\ksafe},\hat B_{\ksafe}).
    \end{equation}
    Combining~\eqref{safecond} and~\eqref{eq:quant-guarantee} with~\eqref{eq:Csafe-bootstrap}, the total OLS-plus-quantization error satisfies
    \begin{equation}\label{eq:rsafe-bound}
        \sqrt{\Conf_{\ksafe}} + \epsilon_{\mathrm{target}} \le \frac{2}{9\,\Csafe(\hat A_{\ksafe},\hat B_{\ksafe})} \le \frac{1}{3\,\Csafe(\tilde A_{\ksafe}, \tilde B_{\ksafe})} = \rsafe.
    \end{equation}
    We now verify that, on the event $\Esafe$~\eqref{eq:Esafe}, the true system lies in $\mathcal{B}_{\safe}$~\eqref{eq:Bsafe-def}. By the triangle inequality,
    \begin{subequations}
        \label{eq:safety_derivation}
        \begin{align}
            \max \left\{ \| A - \widetilde{A}_{k_{\text{safe}}} \|_{\mathrm{op}} , \| B - \widetilde{B}_{k_{\text{safe}}} \|_{\mathrm{op}} \right\}
             & \leq \left\| \left[ \widetilde{A}_{k_{\text{safe}}} - A \mid \widetilde{B}_{k_{\text{safe}}} - B \right] \right\|_{\mathrm{op}} \label{eq:block_norm}                              \\
             & \leq \left\| \left[ \widehat{A}_{k_{\text{safe}}} - \widetilde{A}_{k_{\text{safe}}} \mid \widehat{B}_{k_{\text{safe}}} - \widetilde{B}_{k_{\text{safe}}} \right] \right\|_{\mathrm{op}} \notag \\
             & \quad + \left\| \left[ \widehat{A}_{k_{\text{safe}}} - A \mid \widehat{B}_{k_{\text{safe}}} - B \right] \right\|_{\mathrm{op}} \label{eq:triangle_ineq}                            \\
             & \leq \epsilon_{\mathrm{target}} + \sqrt{\text{Conf}_{k_\safe}} \label{eq:conf_bound}                                                                                                             \\
             & \leq \rsafe = \frac{1}{3\,\Csafe(\tilde A_{\ksafe}, \tilde B_{\ksafe})}, \label{eq:safe_radius}
        \end{align}
    \end{subequations}
    where \eqref{eq:triangle_ineq} uses the triangle inequality, \eqref{eq:conf_bound} uses~\eqref{eq:quant-guarantee} and \eqref{eq:Esafe}, and \eqref{eq:safe_radius} follows from~\eqref{eq:rsafe-bound}. Therefore both $(A,B)$ and every $(\tilde A, \tilde B) \in \mathcal{B}_\safe$~\eqref{eq:Bsafe-def} lie within a ball of radius $\frac{1}{3\,\Csafe(\tilde A_{\ksafe}, \tilde B_{\ksafe})}$ around the decoded center $(\tilde A_{\ksafe}, \tilde B_{\ksafe})$, which is the hypothesis of Lemma~\ref{lem:11}.

    Moreover, combining~\eqref{safecond}, \eqref{eq:quant-guarantee}, and~\eqref{eq:Csafe-true} gives
    \begin{equation}\label{eq:decoded-near-true}
        \max \left\{ \| A - \widetilde{A}_{k_{\text{safe}}} \|_{\mathrm{op}} , \| B - \widetilde{B}_{k_{\text{safe}}} \|_{\mathrm{op}} \right\}
        \le \frac{2}{9\,\Csafe(\hat A_{\ksafe},\hat B_{\ksafe})}
        \le \frac{1}{3\,\Csafe(A,B)}.
    \end{equation}
    Applying Lemma~\ref{lem:11} once more with center $(A,B)$ and system $(\tilde A_{\ksafe}, \tilde B_{\ksafe})$ yields
    \begin{equation}\label{eq:Pophat-ratio}
        \Pophat = 1.0835\,\opnorm{\Pinf(\Atil_{\ksafe},\Btil_{\ksafe})} \le (1.0835)^2\,\Pstarnorm.
    \end{equation}

    Further,  any $(\tilde A, \tilde B) \in \mathcal{B}_\safe$ satisfies
    \begin{equation}
        \max \left \{ \| \tilde{A} - A \|_{\mathrm{op}}, \| \tilde{B} - B \|_{\mathrm{op}} \right\} \leq \frac{1}{C_\safe(A, B)},
    \end{equation}
    where $\mathcal{B}_{\mathrm{safe}}$ is defined in~\eqref{eq:Bsafe-def}. In particular, in future epochs $k \geq k_\safe$, the quantized system estimates are projected on $\mathcal{B}_\safe$ (Line 25 Algorithm \ref{alg:quantized_ce_lqr_proj}), so it holds for those estimates.

    Then, the first three points of the lemma follow from Theorem 5 in \cite{simchowitz20a}. In particular, for point $1)$,
    \begin{equation}
        J_k - \Jstar \lesssim  \Pstarnorm^8 \left(\|\Pi_{\mathcal{B}_\safe} (\tilde{A}) - A\|_{\text{F}}^2 + \|\Pi_{\mathcal{B}_\safe}(\tilde{B}) - B\|_{\text{F}}^2\right) \leq \Pstarnorm^8 \left(\|\tilde{A} - A\|_{\text{F}}^2 + \|\tilde{B} - B\|_{\text{F}}^2\right),
    \end{equation}
    where the second equality holds since $(A, B) \in \mathcal{B}_\safe$, and the projection is non-expansive.

    The next two points follow from Theorem 8 in \cite{simchowitz20a}. The final point results from the proof of Lemma 5.1 in \cite{simchowitz20a}, recalling that we use the same scaling:
    \begin{equation}\label{eq:sigmain-def}
        \sigmain^2 := \sqrt{\dx} \opnorm{\Pinf(\Ahat, \Bhat)}^{9/2} \max\left\{1, \opnorm{\Bhat} + r_\safe\right\} \sqrt{\log \frac{\opnorm{\Pinf(\Ahat, \Bhat)}}{\delta}}.
    \end{equation}

\end{proof}

\section{Proof of Theorem~\ref{thm:main_ach}: Communication Bound}
\label{app:comm-bound}

\begin{proof}
Let $K_T := \lceil \log_2 T \rceil + 1$ be the number of doubling epochs intersecting $[1,T]$. Since the algorithm communicates only at epoch boundaries, every uplink message is counted once per epoch. We account for four message types.

\medskip\noindent\textbf{Codeword indices.}
The codebook $\mathcal{C}$ is a $\varrho$-net of the unit ball in $\mathbb{R}^{d_s}$. The standard volumetric covering bound gives $|\mathcal{C}| \le (1 + 2/\varrho)^{d_s}$. Each index $q_k$ costs at most
\[
    \log_2 |\mathcal{C}| \le d_s \log_2\!\left(1 + \frac{2}{\varrho}\right)
\]
bits, and over at most $K_T$ post-safe epochs the total is
\begin{equation}\label{eq:Bq}
    B_q(T) \le d_s \log_2\!\left(1 + \frac{2}{\varrho}\right) K_T.
\end{equation}

\medskip\noindent\textbf{Adaptive multipliers.}
Algorithm~\ref{alg:quantized_ce_lqr_proj} sends $m_k$ via Elias Gamma coding at cost $2\lfloor \log_2 m_k \rfloor + 1$ bits. By Lemma~\ref{lem:contraction}, for $k > k_{\mathrm{ls}}$,
\[
    m_k < 2 + \beta(m_{k-1} - 1), \qquad \beta := \varrho\sqrt{2} < 1,
\]
so $\limsup_k m_k \le m_\infty := (2 - \beta)/(1 - \beta)$. Define $M_\varrho := \lceil m_\infty \rceil$. Since $m_k - m_\infty < \beta(m_{k-1} - m_\infty)$, after at most $h = \mathcal{O}(\log_+(m_{k_{\mathrm{ls}}} - m_\infty))$ additional epochs, $m_k \le M_\varrho$ for all subsequent $k$. Therefore all but finitely many multiplier messages cost at most
\[
    b_\varrho := 2\lfloor \log_2 M_\varrho \rfloor + 1
\]
bits each, giving
\begin{equation}\label{eq:Bm}
    B_m(T) \le b_\varrho K_T + \mathcal{O}(\log^2 m_{k_{\mathrm{ls}}}).
\end{equation}
Since $m_k$ decays geometrically over $h = \mathcal{O}(\log_+ m_{k_{\mathrm{ls}}})$ epochs, the sum of the Elias encoding costs $\mathcal{O}(\log m_k)$ forms an arithmetic progression bounded by $\mathcal{O}(\log^2 m_{k_{\mathrm{ls}}})$. Because $m_{k_{\mathrm{ls}}}$ scales polynomially with $\tauls$, this, along with the pre-contraction expansion phase, adds a finite, one-time transient overhead of $\mathcal{O}(\log^2 \tauls)$, which is retained explicitly in~\eqref{eq:total-bits-proof}. Under the theorem choice $\delta = 1/T$, we have $\tauls = \mathcal{O}(\log T)$, so this overhead is at most $\mathcal{O}(\log^2\!\log T)$.

\medskip\noindent\textbf{Pre-safe flags.}
Algorithm~\ref{alg:quantized_ce_lqr_proj} sends a 1-bit safe flag during the pre-safe phase. Thus $B_{\mathrm{flag}} \le \ksafe$. Since $\tau_k = 2^k$, Lemma~\ref{lem:presaferegret} gives
\begin{equation}\label{eq:Bflag}
    \ksafe \le 1 + \log_2 \tau_{\ksafe} = \mathcal{O}\!\left(\log\!\left(d(1+\opnorm{K_0}^2)\Pstarnorm^{10}\log\frac{\PsiB^2 \Jzero}{\delta}\right)\right),
\end{equation}
which is a one-time overhead.

\medskip\noindent\textbf{Absolute initialization.}
\textsc{AbsoluteInit} sends the exponent $E$ and the signed integers $z_1,\dots,z_{d_s}$ via Elias Gamma coding. Both sides compute $r_\safe$ from the shared decoded center~\eqref{eq:rsafe-def}. Since
\[
    E = \left\lceil \log_2 \frac{\sqrt{d_s}}{2\epsilon_{\mathrm{target}}} \right\rceil, \qquad \epsilon_{\mathrm{target}} = \frac{1}{9\,\Csafe(\hat A_{\ksafe}, \hat B_{\ksafe})},
\]
and by Lemma \ref{lem:11}, $\Csafe(\hat A_{\ksafe}, \hat B_{\ksafe}) < 1.5 \Csafe(\Astar, \Bstar)$, we get
\[
    E = \mathcal{O}\!\left(\log d_s + \log \Csafe(\Astar, \Bstar)\right) = \mathcal{O}\!\left(\log d_s + \log \Pstarnorm\right).
\]
Each $z_i = \lfloor \hat\theta_i / \Delta \rceil$ with $\Delta = 2^{-E} \asymp \epsilon_{\mathrm{target}} / \sqrt{d_s}$, so Elias coding gives
\begin{equation}\label{eq:Binit}
    B_{\mathrm{init}} = \mathcal{O}\!\left(d_s \log\!\left(1 + \theta_{\max} \cdot \frac{\sqrt{d_s}}{\epsilon_{\mathrm{target}}}\right)\right),
\end{equation}
where $\theta_{\max} = \max_{i,j}\{|A_{ij}|, |B_{ij}|\}$. On $\Esafe$, $\hat\theta_{\ksafe}$ is within $\mathcal{O}(\epsilon_{\mathrm{target}})$ of $\theta_\star$, so this is $\mathcal{O}(d_s \log(1 + \theta_{\max}\, \Csafe \sqrt{d_s}))$, a finite one-time cost.

\medskip\noindent\textbf{Total.}
Combining all four components:
\begin{align}
    B(T) &= B_q(T) + B_m(T) + B_{\mathrm{flag}} + B_{\mathrm{init}} \nonumber \\
    &\le \left[d_s \log_2\!\left(1 + \frac{2}{\varrho}\right) + b_\varrho\right] K_T + B_{\mathrm{init}} + \mathcal{O}\!\left(\log \tau_{\ksafe} + \log^2 \tauls\right). \label{eq:total-bits-proof}
\end{align}
Since $K_T \le \lceil \log_2 T \rceil + 1 \le \log_2 T + 2$, this gives $B(T) = \mathcal{O}(d_s \log T)$.
\end{proof}

\section{Proof of Theorem~\ref{thm:main_ach}: Regret Bound}
\label{app:achievability}

\subsection{Proof of the Contraction Lemma}

\begin{proof}[Proof of Lemma~\ref{lem:contraction}]
    For $k > k_{\mathrm{ls}}$, the OLS bounds hold. Let $\mathrm{OLS}_k := \Vslowhat\,\tau_k^{-1/4} + \Vfasthat\,\tau_k^{-1/2}$. By the triangle inequality centered at $\theta_\star$ and non-expansiveness of projection:
    \[
        \|\Delta_k\|_2 \le \|\bar\theta_k - \theta_\star\|_2 + \|\theta_\star - \bar\theta_{k-1}\|_2 + \|\bar\theta_{k-1} - \tilde\theta_{k-1}\|_2 \le \mathrm{OLS}_k + \mathrm{OLS}_{k-1} + \varrho s_{k-1}.
    \]
    By the design of $\cslow$ and $\cfast$ in \eqref{eq:cslow-impl}--\eqref{eq:cfast-impl}:
    \[
        \mathrm{OLS}_k + \mathrm{OLS}_{k-1} \le \sbase - \varrho s_{k-1}^{\mathrm{base}}.
    \]
    Indeed, using $\tau_{k-1}^{-1/4} = 2^{1/4}\tau_k^{-1/4}$ and $\tau_{k-1}^{-1/2} = \sqrt{2}\,\tau_k^{-1/2}$, the left-hand side equals $\Vslowhat(1+2^{1/4})\tau_k^{-1/4} + \Vfasthat(1+\sqrt{2})\tau_k^{-1/2}$, while the right-hand side expands to $\cslow(1-\varrho\,2^{1/4})\tau_k^{-1/4} + \cfast(1-\varrho\sqrt{2})\tau_k^{-1/2}$, and by the definitions \eqref{eq:cslow-impl}--\eqref{eq:cfast-impl} these are identical.
    Substituting $s_{k-1} = m_{k-1} s_{k-1}^{\mathrm{base}}$:
    \[
        \|\Delta_k\|_2 \le \sbase + \varrho s_{k-1}^{\mathrm{base}}(m_{k-1} - 1).
    \]
    Dividing by $\sbase$ and using $s_{k-1}^{\mathrm{base}}/\sbase \le \sqrt{2}$ (since the ratio is a weighted average of $2^{1/4}$ and $\sqrt{2}$ with positive weights $\cslow\tau_k^{-1/4}$ and $\cfast\tau_k^{-1/2}$, hence bounded by $\sqrt{2}$):
    \[
        \frac{\|\Delta_k\|_2}{\sbase} \le 1 + \varrho\sqrt{2}(m_{k-1}-1).
    \]
    Since $m_k = \lceil \|\Delta_k\|_2/\sbase \rceil$ and $\lceil x \rceil < x + 1$, the contraction follows. With $\beta = \varrho\sqrt{2} < 1$, the sequence converges exponentially to the fixed point $m^\star = (2-\beta)/(1-\beta) = \mathcal{O}(1)$.
\end{proof}

\subsection{Proof of the Regret Convolution Bound}

\begin{lemma}[Regret Convolution Bound]
    \label{lem:convolution}
    For any non-negative sequence $U_k$ satisfying $s_k \le \sum_{j=\ksafe}^k \varrho^{k-j} U_j$, with $\varrho < 1/\sqrt{2}$,
    \begin{equation}
        \sum_k \tau_k s_k^2 \le \Crho^2 \sum_k \tau_k U_k^2, \qquad \text{where } \Crho = \frac{1}{1-\varrho\sqrt{2}}.
    \end{equation}
\end{lemma}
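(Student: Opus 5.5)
The plan is to reduce the claim to a discrete Young-type convolution inequality, after absorbing the epoch weights $\tau_k = 2^k$ into the sequences. Set $a_j := \sqrt{\tau_j}\,U_j \ge 0$ and $\beta := \varrho\sqrt{2} < 1$, so that $\Crho = 1/(1-\beta)$. Since $s_k \ge 0$ (it is a scale, $s_k = m_k\sbase > 0$), squaring the hypothesis is legitimate. I will work over any finite range $\ksafe \le k \le \kfin$; the statement then follows for the full (finite) sum.

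\textbf{Step 1 (reweighting).} Multiply the hypothesis by $\sqrt{\tau_k}$ and use $\sqrt{\tau_k/\tau_j} = 2^{(k-j)/2}$:
\begin{equation*}
\sqrt{\tau_k}\, s_k \le \sum_{j=\ksafe}^{k} \varrho^{k-j}\, 2^{(k-j)/2}\, \sqrt{\tau_j}\,U_j = \sum_{j=\ksafe}^{k} \beta^{k-j} a_j .
\end{equation*}
This is exactly where $\varrho < 1/\sqrt{2}$ is needed: the doubling of epoch lengths costs a factor $\sqrt{2}$ per step of lag, and the quantizer contraction $\varrho$ must beat it.

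\textbf{Step 2 (Cauchy--Schwarz with the kernel split).} Write $h_i := \beta^i$ for $i \ge 0$, and split $\beta^{k-j} a_j = \sqrt{h_{k-j}}\cdot\sqrt{h_{k-j}}\,a_j$. Then
\begin{equation*}
\tau_k s_k^2 \le \Big(\sum_{j} h_{k-j}\Big)\Big(\sum_{j} h_{k-j} a_j^2\Big) \le \frac{1}{1-\beta}\sum_{j=\ksafe}^{k} \beta^{k-j} a_j^2 .
\end{equation*}

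\textbf{Step 3 (swap the sums).} Summing over $k$ and exchanging the order of summation gives
\begin{equation*}
\sum_k \tau_k s_k^2 \le \frac{1}{1-\beta}\sum_j a_j^2 \sum_{k \ge j}\beta^{k-j} \le \frac{1}{(1-\beta)^2}\sum_j \tau_j U_j^2 = \Crho^2 \sum_k \tau_k U_k^2 .
\end{equation*}
Truncating the inner geometric series at $\kfin$ only decreases it, so the bound is uniform in the horizon.

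\textbf{Obstacle.} There is no real obstacle. The only points needing care are:
\begin{itemize}
\item the sign condition $s_k \ge 0$, which allows squaring the hypothesis;
\item recognizing that the correct contraction factor is $\beta = \varrho\sqrt{2}$ rather than $\varrho$, because of the weight ratio $\tau_k/\tau_j$.
\end{itemize}

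In the regret application, the hypothesis itself would be obtained by unrolling the envelope recurrence $s_k \le \varrho s_{k-1} + U_k$. Any transient slack from the multiplier $m_k$ is absorbed into $U_k$.
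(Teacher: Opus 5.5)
Your proof is correct and is essentially the paper's argument. The paper splits the kernel as $\varrho^{k-j} = (\varrho\sqrt{2})^{(k-j)/2}(\varrho/\sqrt{2})^{(k-j)/2}$ before applying Cauchy--Schwarz and multiplies by $\tau_k = 2^k$ afterwards, which gives exactly your bound $\tau_k s_k^2 \le (1-\varrho\sqrt{2})^{-1}\sum_{j\le k}(\varrho\sqrt{2})^{k-j}\tau_j U_j^2$ and then uses the same exchange of summation; your remark that $s_k \ge 0$ is needed to square the hypothesis is a point the paper leaves implicit.
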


\noindent \textbf{Remark.} By the definition of $m_k$, $s_k \le \|\Delta_k\|_2 + \sbase$. By the triangle inequality, $\|\Delta_k\|_2 \le E_k + E_{k-1} + \varrho s_{k-1}$ where $E_k := \|\bar\theta_k - \theta_\star\|_2$. This gives the envelope recurrence $s_k \le \varrho s_{k-1} + U_k$ with $U_k := E_k + E_{k-1} + \sbase$.

\begin{proof}[Proof of Lemma~\ref{lem:convolution}]
    We split $\varrho^{k-j} = (\varrho\sqrt{2})^{(k-j)/2} \cdot (\varrho/\sqrt{2})^{(k-j)/2}$ and apply Cauchy--Schwarz inequality to $s_k \le \sum_j \varrho^{k-j} U_j$:
    \[
        s_k^2 \le \underbrace{\sum_{j \le k} (\varrho\sqrt{2})^{k-j}}_{\le\, \Crho} \cdot \sum_{j \le k} \left(\frac{\varrho}{\sqrt{2}}\right)^{\!k-j} U_j^2.
    \]
    Multiplying $s_k^2$ by $\tau_k = 2^k$ and summing over $k$, we obtain:
    \[
        \sum_k \tau_k s_k^2 \le \Crho \sum_k \sum_{j \le k} 2^k \left(\frac{\varrho}{\sqrt{2}}\right)^{\!k-j} U_j^2.
    \]
    Setting $m = k-j$ and exchanging the order of summation, the factor $2^k(\varrho/\sqrt{2})^{k-j} = 2^j \cdot (2 \cdot \varrho/\sqrt{2})^m = \tau_j\,(\varrho\sqrt{2})^m$, giving
    \[
        \sum_k \tau_k s_k^2 \le \Crho \sum_j \tau_j U_j^2 \sum_{m=0}^\infty (\varrho\sqrt{2})^m = \Crho^2 \sum_j \tau_j U_j^2.
    \]
\end{proof}

\subsection{Bound on Safety Event Probability}

\begin{lemma}[Lem. 5.5 from \cite{simchowitz20a}]
    The event $\mathcal{E}_\safe$ holds with probability at least $1 - \frac{\delta}{2}.$
\end{lemma}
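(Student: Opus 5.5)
The statement is Lemma 5.5 of \cite{simchowitz20a}, transplanted to QCE-LQR. The event $\Esafe$ in~\eqref{eq:Esafe} depends only on the unquantized OLS estimate at the random epoch $\ksafe$ and on the confidence scalar $\Conf_{\ksafe}$. The plan is therefore to argue that nothing in its probability depends on quantization. Before the trigger fires, the plant plays $u_t = K_0 x_t + g_t$ exactly as in the unquantized algorithm. The trigger conditions ($\mathbf{\Lambda}_k \succeq I$ and $\sqrt{\Conf_k}\le \epsilon_{\mathrm{target}}$) are computed from $\hat A_k,\hat B_k,\mathbf{\Lambda}_k$ alone. So the data-generating process up to and including epoch $\ksafe$ coincides pathwise with that of \cite{simchowitz20a}, given the same noise realizations. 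The only quantization-dependent objects, $\tilde\theta_{\ksafe}$ and $\rsafe$, are produced after the estimate that defines $\Esafe$.

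The core step is a union bound over epochs rather than an argument at the stopping time itself, since $\ksafe$ is random. For each fixed $k$, I would apply the self-normalized martingale bound (Abbasi-Yadkori--Szepesv\'ari style, as in \cite{simchowitz20a}) to the OLS problem on epoch $k$'s data $\{(z_t, x_{t+1})\}_{t=\tau_{k-1}}^{\tau_k-1}$. On the event $\mathbf{\Lambda}_k\succeq I$, this gives, with probability at least $1-\delta/(4k^2)$,
\[
\bigl\|[\hat A_k - A\mid \hat B_k - B]\bigr\|_{\mathrm{op}}^2 \le \Conf_k .
\]
Here a $1/2$-net over the $\dx$-dimensional output sphere produces the $d\log 5$ term, and the $\det(3\mathbf{\Lambda}_k)$ term comes from the regularized-versus-unregularized comparison valid once $\mathbf{\Lambda}_k\succeq I$. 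Because the covariates $z_t$ are adapted and $w_t$ is independent of $\mathcal F_t$, the martingale argument applies to whatever adaptive policy generated the data, so no i.i.d.\ design is needed.

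Summing the failure probabilities over $k\ge 1$ gives $\sum_k \delta/(4k^2) = \delta\pi^2/24 < \delta/2$. On the complement, the inequality holds simultaneously for every $k$ with $\mathbf{\Lambda}_k\succeq I$. The trigger requires $\mathbf{\Lambda}_{\ksafe}\succeq I$, so in particular it holds at $k=\ksafe$, which is exactly $\Esafe$. If the trigger never fires before $T$, the event is vacuous or can be defined to hold.

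The main obstacle I expect is matching the constants in $\Conf_k$ (the factor $6$, the $3\mathbf{\Lambda}_k$, and the $4k^2$) to a clean self-normalized inequality. I would first reproduce the exact form from \cite{simchowitz20a} and verify that the $\lambda_{\min}^{-1}$ conversion from the self-normalized norm to the operator norm costs at most the stated factor. If the constants do not align exactly, the fallback is to note that the definition~\eqref{eq:Conf} is taken verbatim from \cite{simchowitz20a} and cite their per-epoch bound directly, since the per-epoch data law is unchanged.
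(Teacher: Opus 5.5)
Your proposal is correct and follows the paper's proof. The paper bounds each epoch separately by citing Lemma E.2 of \cite{simchowitz20a} with $\Lambda_0 = I$ and confidence $\delta/(4k^2)$ on the event $\{\mathbf{\Lambda}_k \succeq I\}$, takes a union bound over epochs to get total failure probability $\sum_k \delta/(4k^2) \le \delta/2$, and notes that the trigger forces $\mathbf{\Lambda}_{k_{\mathrm{safe}}} \succeq I$, so $\mathcal{E}_{\mathrm{safe}}$ contains the uniform confidence event. Your preliminary argument that the pre-trigger data law matches the unquantized algorithm is harmless but unnecessary, since, as you note yourself, the union bound over all epochs holds for any adaptive policy.
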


\begin{proof}
    Let
    \begin{equation}
        \mathcal{E}_{\mathrm{conf}} = \bigcap_{k: \mathbf{\Lambda}_k \succeq I} \left\{ \left\| \hat A_k - A \mid \hat B_k - B \right\|_{\text{op}}^2 \leq \Conf_k \right \},
    \end{equation}
    where $\bigcap$ denotes the intersection of events over all $k$ such that $\mathbf{\Lambda}_k \succeq I$, and $\mathbf{\Lambda}_k$ is defined in~\eqref{eq:Lambda-k}.
    Lemma E.2 in \cite{simchowitz20a}, which is a Corollary of Theorem 1 from \cite{abbasi}, applied with empirical covariance $\mathbf{\Lambda}=\mathbf{\Lambda}_k$, baseline matrix $\Lambda_0 = I$, and confidence parameter $\delta/(4k^2)$, yields
    \begin{equation}
        \mathbb{P} \left[ \left\{ \left\| \hat A_k - A \mid \hat B_k - B \right\|_{\text{op}}^2 \geq \Conf_k \right\} \cap \left\{ \mathbf{\Lambda}_k \succeq I \right\} \right] \leq \frac{\delta}{4k^2}.
    \end{equation}
    Applying the union bound over all epochs and noting that at epoch $k_\safe$, $\mathbf{\Lambda}_{k_\safe} \succeq I$, we have that
    \begin{equation}
        \mathbb{P}[\mathcal{E}_\safe] \geq  \mathbb{P} \left[ \mathcal{E}_{\mathrm{conf}} \right] \geq  1 - \sum_{k=1}^\infty \frac{\delta}{4k^2} \geq 1 - \frac{\delta}{2}.
    \end{equation}
\end{proof}

\subsection{Proof of the Pre-Safe Regret Lemma}

\begin{proof}[Proof of Lemma~\ref{lem:presaferegret}]
    Before epoch $\ksafe$, Algorithm~\ref{alg:quantized_ce_lqr_proj} executes the baseline stabilizing controller $K_0$ with independent Gaussian exploration noise, identical to the unquantized algorithm in \cite{simchowitz20a}. By Lemma G.9 in \cite{simchowitz20a}, for $\delta < 1/T$, the cost during this phase is bounded with probability $1 - \delta$ by:
    \begin{equation}
        \label{presaferegret}
        \sum_{t=1}^{\tau_{\ksafe}-1} x_{t}^\top \Rx x_{t} + u_{t}^\top \Ru u_{t} \le d\,\tau_{\ksafe}\,\PsiB^2\, \Pzero \log\frac{1}{\delta}.
    \end{equation}

    What remains is to upper-bound the stopping time $\tau_{\ksafe}$. Under the Decoupled Statistical Trigger, the safe phase is declared as soon as $\mathbf{\Lambda}_k \succeq I$ \eqref{eq:Lambda-k} and:
    \begin{equation}
        \label{eq:decoupled-trigger}
        \sqrt{\Conf_k} \le \frac{1}{9\,\Csafe(\hat A_k, \hat B_k)}.
    \end{equation}

    We proceed by contraposition to find the maximum duration the algorithm can remain unsafe. Assume the statistical confidence is sufficiently small with respect to the \emph{true} system:
    \begin{equation}
        \label{eq:contra-assume}
        \sqrt{\Conf_k} \le \frac{2}{27\,\Csafe(A, B)}.
    \end{equation}
    Conditioned on $\Esafe$ \eqref{eq:Esafe}, assumption \eqref{eq:contra-assume} implies:
    \[
        \max\left\{\opnorm{\hat A_k - A},\;\opnorm{\hat B_k - B}\right\} \le \sqrt{\Conf_k} \le \frac{2}{27\,\Csafe(A, B)} < \frac{1}{3\,\Csafe(A, B)}.
    \]
    Lemma \ref{lem:11} (with center $(A, B)$) then gives $\Csafe(\hat A_k, \hat B_k) \le 1.5\,\Csafe(A, B)$.

    Substituting this bound into our algorithm's trigger threshold yields:
    \[
        \frac{1}{9\,\Csafe(\hat A_k, \hat B_k)} \ge \frac{1}{9 \times 1.5\,\Csafe(A, B)} = \frac{2}{27\,\Csafe(A, B)}.
    \]
    Therefore, if \eqref{eq:contra-assume} holds, we have $\sqrt{\Conf_k} \le \frac{1}{9\,\Csafe(\hat A_k, \hat B_k)}$, meaning the algorithm's trigger condition \eqref{eq:decoupled-trigger} strictly passes.

    By contraposition, if the algorithm has \emph{not} yet switched to the safe phase, it must be the case that \eqref{eq:contra-assume} is false:
    \[
        \Conf_k > \frac{4}{729\,\Csafe(A, B)^2}.
    \]
    By the definition of the safe constant, $\Csafe(A, B) \le 54\,\Pstarnorm^5$, yielding the deterministic necessary condition to remain in the pre-safe phase:
    \begin{equation}
        \label{eq:conf-lower-bound}
        \Conf_k > \frac{4}{729 \cdot 54^2\,\Pstarnorm^{10}} = \frac{1}{729^2 \,\Pstarnorm^{10}}.
    \end{equation}

    We now apply Lemma G.11 from \cite{simchowitz20a}, which provides a high-probability upper bound on $\Conf_k$ under the baseline controller $K_0$:
    \begin{equation}
        \label{eq:conf-upper-bound}
        \Conf_k \le \frac{d(1 + \opnorm{K_0}^2)}{\tau_k} \log\frac{\PsiB^2 \Jzero}{\delta}.
    \end{equation}
    Combining the necessary condition \eqref{eq:conf-lower-bound} with the upper bound \eqref{eq:conf-upper-bound}:
    \[
        \frac{1}{729^2\,\Pstarnorm^{10}} < \frac{d(1 + \opnorm{K_0}^2)}{\tau_k} \log\frac{\PsiB^2 \Jzero}{\delta}.
    \]
    Solving for $\tau_k$, the algorithm must have switched before:
    \[
        \tau_{\mathrm{stop}} \le 729^2\,\Pstarnorm^{10}\, d(1 + \opnorm{K_0}^2) \log\frac{\PsiB^2 \Jzero}{\delta}.
    \]
    Since epoch lengths double, $\tau_{\ksafe} \le 2\tau_{\mathrm{stop}}$. Substituting into \eqref{presaferegret} yields \eqref{presaferegretfinal}.
\end{proof}

\subsection{Proof of Theorem~\ref{thm:main_ach}}

\begin{proof}
    We analyze the post-safe regret using the decomposition \eqref{regredecomp}. Three terms require treatment: the per-epoch excess cost, the boundary term, and the lower-order remainder.

    \medskip\noindent\textbf{Per-epoch excess cost.}
    By Lemma~\ref{lem:51} point (1), for every post-safe epoch $k \ge \ksafe$,
    \begin{equation}\label{eq:Jk-bound}
        J_k - \Jstar \lesssim \Pstarnorm^8 \left(\Fnorm{\tilde A_k - \Astar}^2 + \Fnorm{\tilde B_k - \Bstar}^2\right).
    \end{equation}
    We split the sum over post-safe epochs $k \ge \ksafe$ into two ranges:
    \begin{itemize}
        \item \emph{Asymptotic epochs} ($\tauk > c\tauls$): the OLS rate~\eqref{OLSrate} applies, and these produce the main $\sqrt{T}$ contribution.
        \item \emph{Transient epochs} ($\ksafe \le k$ with $\tauk \le c\tauls$): the OLS rate does not yet hold, but both $(\tilde A_k, \tilde B_k)$ and $(A,B)$ lie in $\Bsafe$.
    \end{itemize}
    For the transient epochs, the safety derivation in Appendix~\ref{app:lemmas} gives
    \[
        \max\{\opnorm{\tilde A_k - \Astar},\, \opnorm{\tilde B_k - \Bstar}\} \le \frac{1}{\Csafe(\Astar,\Bstar)} = \mathcal{O}(1),
    \]
    so $J_k - \Jstar = \mathcal{O}(1)$ per epoch. The number of transient epochs is $\mathcal{O}(\log(\tauls/\tau_{\ksafe}))$ and each has $\tauk \le c\tauls = \mathcal{O}(\mathrm{poly}(\log T))$, so
    \[
        \sum_{k: \tauk \le c\tauls} \tauk (J_k - \Jstar) = \mathcal{O}(\mathrm{poly}(\log T)).
    \]

    \medskip\noindent\textbf{Epoch-boundary state norms.}
    By Lemma~5.3 from \cite{simchowitz20a}, on the event $\Esafe \cap \Ebound$,
    \[
        \|x_{\tauk}\|_2^2 \lesssim \PsiB \Jzero \log\frac{1}{\delta}\, \Pstarnorm^3, \qquad \forall k \ge \ksafe.
    \]
    Substituting into the boundary term of~\eqref{regredecomp}:
    \[
        \log T \max_{k \le \log T} \|x_{\tauk}\|_2^2 \lesssim \PsiB \Jzero \Pstarnorm^3 \log T \cdot \log\frac{1}{\delta} \le \PsiB \Jzero \Pstarnorm^3 \tauls \log\frac{1}{\delta},
    \]
    where the last inequality uses $\log T \le \tauls$ for $T$ sufficiently large.

    \medskip\noindent\textbf{Combining.}
    Absorbing the transient-epoch and boundary contributions into the lower-order terms, and retaining only the asymptotic sum, we arrive at
    \begin{align}
         & \sum_{k=k_\safe}^{k_{\mathrm{fin}}} \tau_k(J_k - \Jstar) + \log T \max_{k \leq \log T} \lVert x_{\tau_k}\rVert_2^2 \nonumber \\
         & \lesssim \sum_{k: \tauk > \tauls} \tau_k \Pstarnorm^8 \left( \Fnorm{\Atil_k - \Astar}^2 + \Fnorm{\Btil_k - \Bstar}^2 \right) + \PsiB \Jzero \Pstarnorm^3 \tauls\log \frac{1}{\delta}. \label{eq:post-safe-split}
    \end{align}

    Focusing on the first term, we separate the parameter error into the unquantized OLS error and the quantization error. The regret depends on the \emph{controller-side} projection $\check\theta_k = \Pi_{\Bsafe}(\tilde\theta_k)$~\eqref{eq:safe-proj-overview}. Since $\theta_\star \in \Bsafe$ on $\Esafe$, non-expansiveness of projection gives $\Fnorm{\check\theta_k - \theta_\star} \le \Fnorm{\tilde\theta_k - \theta_\star}$, so it suffices to bound the decoded error $\Fnorm{\tilde\theta_k - \theta_\star}$.

    For the plant, the projected estimate is $\bar\theta_k = \Pi_{\Bsafe}(\hat\theta_k)$. By identical reasoning, $\Fnorm{\bar\theta_k - \theta_\star} \le \Fnorm{\hat\theta_k - \theta_\star}$. Furthermore, because the quantized message bounds the reconstruction error by the scale radius, $\Fnorm{\tilde\theta_k - \bar\theta_k} \le \varrho s_k$, so applying the triangle inequality and $(a+b)^2 \le 2a^2 + 2b^2$ gives:
    \begin{align}
        & \sum_{k:\tau_k > \tauls} \tau_k \Pstarnorm^8 \left( \Fnorm{\Atil_k - \Astar}^2 + \Fnorm{\Btil_k - \Bstar}^2 \right) \nonumber \\
        & \le 2\Pstarnorm^8 \sum_{k:\tau_k > \tauls} \tau_k \left( \Fnorm{\Ahat_k - \Astar}^2 + \Fnorm{\Bhat_k - \Bstar}^2 + \varrho^2 s_k^2 \right). \label{eq:quant-error-split}
    \end{align}

    Conditioned on the high-probability event $\Els \cap \Esafe \cap \Ebound$, the OLS errors are deterministically bounded. By the definition of the multiplier \eqref{eq:adaptive-mult}, $m_k = \max\{1, \lceil\|\Delta_k\|_2 / \sbase\rceil\} \le \|\Delta_k\|_2 / \sbase + 1$. Thus, $s_k \le \|\Delta_k\|_2 + \sbase$.

    By the triangle inequality, $\|\Delta_k\|_2 \le \Fnorm{\bar\theta_k - \theta_\star} + \Fnorm{\bar\theta_{k-1} - \theta_\star} + \Fnorm{\bar\theta_{k-1} - \tilde\theta_{k-1}}$. Let $E_k := \Fnorm{\bar\theta_k - \theta_\star}$. Because the reconstruction error is bounded by $\varrho s_{k-1}$, we obtain the recurrence for all tracking epochs $k > \ksafe$:
    \begin{equation}
        s_k \le \varrho s_{k-1} + \underbrace{E_k + E_{k-1} + \sbase}_{:= U_k}.
    \end{equation}

    To unroll this recurrence, we must account for the base case at epoch $\ksafe$. The differential tracking loop begins at $\ksafe+1$; at $\ksafe$ itself, \textsc{AbsoluteInit} returns $\tilde\theta_{\ksafe}$. On $\Esafe$, both $\tilde\theta_{\ksafe}$ and $\theta_\star$ lie in $\Bsafe$, so by the safety derivation in~\eqref{eq:safety_derivation},
    \begin{equation}\label{eq:sinit-def}
        \sinit := \Fnorm{\tilde\theta_{\ksafe} - \theta_\star} \le \frac{\sqrt{d_s}}{\Csafe(\Astar,\Bstar)} = \mathcal{O}(1),
    \end{equation}
    where $d_s = \dx(\dx+\du)$ and the $\sqrt{d_s}$ factor converts from the per-block operator norms to the vectorized Frobenius norm.
    Unrolling the recurrence down to this boundary gives
    \begin{equation}
        s_k \le \sum_{j=\ksafe+1}^k \varrho^{k-j} U_j \quad + \quad \varrho^{k-\ksafe} \sinit.
    \end{equation}

    To express this as a single convolution, we define $U_{\ksafe} := \sinit$ and write
    \begin{equation}
        \label{eq:envelope}
        s_k \le \sum_{j=\ksafe}^k \varrho^{k-j} U_j.
    \end{equation}

    By applying Lemma~\ref{lem:convolution} (since $\tau_k = 2^k$ and $\varrho < 1/\sqrt{2}$), the weighted $\ell_2$-norm of this convolution is strictly bounded:
    \begin{equation}\label{eq:Rquant-def}
        R_{\mathrm{quant}} := \Pstarnorm^8 \sum_{k=\ksafe}^{\log_2 T} \tau_k \varrho^2 s_k^2 \le \Pstarnorm^8 \varrho^2 \Crho^2 \sum_{k=\ksafe}^{\log_2 T} \tau_k U_k^2.
    \end{equation}

    We analyze $U_k$ on the good event across two phases.

    \medskip\noindent\textbf{Transient Phase} ($\tauk \le c\tauls$).
    For the base case $j=\ksafe$, $U_{\ksafe} = \sinit = \mathcal{O}(1)$ by~\eqref{eq:sinit-def}. For $j > \ksafe$, any projected parameter and the true system both lie in $\Bsafe$, so by the safety derivation (Appendix~\ref{app:lemmas}),
    \[
        E_j = \Fnorm{\bar\theta_j - \theta_\star} \le \frac{\sqrt{d_s}}{\Csafe(\Astar,\Bstar)} = \mathcal{O}(1), \qquad U_j = \mathcal{O}(1).
    \]

    \medskip\noindent\textbf{Asymptotic Phase} ($\tauk > c\tauls$).
    The statistical OLS bounds apply deterministically:
    \[
        E_k \le \mathrm{OLS}_k := \Vslowhat\,\tau_k^{-1/4} + \Vfasthat\,\tau_k^{-1/2}.
    \]
    By the design of the ideal base schedule (see~\eqref{eq:cslow-impl}--\eqref{eq:cfast-impl}) and the same calculation used in the proof of Lemma~\ref{lem:contraction}, $\mathrm{OLS}_k + \mathrm{OLS}_{k-1} \le \sbase$, so $U_k \le 2\sbase$.

    \medskip\noindent\textbf{Combining.}
    Substituting these bounds into~\eqref{eq:Rquant-def} gives the explicit split
    \begin{equation}\label{eq:Uk2-split}
        \sum_{k=\ksafe}^{\log_2 T} \tau_k U_k^2
        \;\le\;
        \underbrace{\sum_{\tauk \le c\tauls} \tau_k \cdot \mathcal{O}(1)}_{\le\; \mathcal{O}(\tauls)}
        \;+\;
        \underbrace{\sum_{\tauk > c\tauls} \tau_k \cdot 4(\sbase)^2}_{\text{asymptotic contribution}}.
    \end{equation}
    Expanding \eqref{eq:two-scale-schedule}, $(\sbase)^2 = (\cslow\tau_k^{-1/4} + \cfast\tau_k^{-1/2})^2 \le 2\cslow^2\tau_k^{-1/2} + 2\cfast^2\tau_k^{-1}$ and multiplying by $\tau_k$ yields terms $\tau_k^{1/2}$ and $\tau_k^0$. Since $\tau_k = 2^k$, the doubling-epoch geometric sums evaluate as
    \begin{equation}\label{eq:doubling-sums}
        \sum_{k=0}^{\lfloor \log_2 T \rfloor} \tau_k^{1/2} = \sum_{k=0}^{\lfloor \log_2 T \rfloor} 2^{k/2} \le \frac{2^{(\lfloor \log_2 T \rfloor+1)/2}-1}{\sqrt{2}-1} \lesssim \sqrt{T}, \qquad
        \sum_{k=0}^{\lfloor \log_2 T \rfloor} 1 = \lfloor \log_2 T \rfloor + 1 \lesssim \log T.
    \end{equation}
    Therefore the quantization contribution splits into two scales:
    \begin{align}
        R_{\mathrm{quant}}
         &\lesssim \Pstarnorm^8 (\varrho \Crho \cslow)^2 \sqrt{T} + \Pstarnorm^8 (\varrho \Crho \cfast)^2 \log T + \Pstarnorm^8 \varrho^2 \Crho^2 \cdot \mathcal{O}(\tauls). \label{eq:two-scale-regret-decomp}
    \end{align}

    \medskip\noindent\textbf{Full regret bound.}
    Combining~\eqref{eq:quant-error-split} with the OLS error from~\eqref{OLSrate}, we obtain:
    \begin{align}
        \eqref{eq:quant-error-split} &\lesssim \sqrt T \left(\frac{\dx\du}{\sigmain^2} \Pstarnorm^{10} \log \frac{1}{\delta} + \Pstarnorm^{8} (\varrho \Crho \cslow)^2 \right) + \log T \cdot \Pstarnorm^{8} (\varrho \Crho \cfast)^2 + \dx^2 \Pstarnorm^3 \tauls \log^2\frac{1}{\delta}.
    \end{align}

    Further combining this with the regret decomposition~\eqref{regredecomp}, we have:
    \begin{align}
        \begin{split}
            \sum_{t=\tau_{\ksafe}}^T (x_t^\top \Rx x_t + u_t^\top \Ru u_t - \Jstar)
             & \lesssim \sqrt T \bigg( \du \sigmain^2 \PsiB^2 \Pstarnorm + \sqrt{d \log \frac{1}{\delta}} \Pstarnorm^4            \\
             & \qquad + \frac{\dx\du}{\sigmain^2} \Pstarnorm^{10} \log \frac{1}{\delta} + \Pstarnorm^{8}(\varrho \Crho \cslow)^2 \bigg)       \\
             & \quad + \log T \cdot \Pstarnorm^{8}(\varrho \Crho \cfast)^2 \\
             & \quad + ( 1 + \sqrt d \sigmain^2 \PsiB) \Pstarnorm^4 \log^2 \frac{1}{\delta} + \PsiB \Jzero \Pstarnorm^3 \tauls \log \frac{1}{\delta} \\
             & \quad + \dx^2 \Pstarnorm^3 \tauls \log ^2 \frac{1}{\delta}
        \end{split}
    \end{align}
    Substituting $\sigmain^2 \asymp \sqrt{\dx}\,\Pstarnorm^{9/2}\,\PsiB\,\sqrt{\log(\Pstarnorm/\delta)}$ from Lemma~\ref{lem:51} point~(6) into the $\sigmain^2$-dependent terms:
    \begin{itemize}
        \item $\du\sigmain^2\PsiB^2\Pstarnorm = \du\sqrt{\dx}\,\PsiB^3\,\Pstarnorm^{11/2}\,\sqrt{\log(\Pstarnorm/\delta)}$.
        \item $\frac{\dx\du}{\sigmain^2}\Pstarnorm^{10}\log\frac{1}{\delta} = \frac{\du\sqrt{\dx}\,\Pstarnorm^{11/2}\,\log(1/\delta)}{\PsiB\,\sqrt{\log(\Pstarnorm/\delta)}} \le \frac{\du\sqrt{\dx}\,\Pstarnorm^{11/2}\,\sqrt{\log(\Pstarnorm/\delta)}}{\PsiB}$,
        which is at most the first term since $\PsiB \ge 1$.
        \item $\sqrt{d}\,\sigmain^2\,\PsiB\,\Pstarnorm^4\,\log^2\frac{1}{\delta} = \sqrt{d\dx}\,\PsiB^2\,\Pstarnorm^{17/2}\,\sqrt{\log(\Pstarnorm/\delta)}\,\log^2\frac{1}{\delta}$.
    \end{itemize}
    Substituting the definitions of $\cslow$~\eqref{eq:cslow-impl} and $\cfast$~\eqref{eq:cfast-impl} and the proxy comparison~\eqref{eq:Pophat-ratio} recovers $Q_{\mathrm{slow}}(\varrho)$~\eqref{eq:Qslow} and $Q_{\mathrm{fast}}(\varrho)$~\eqref{eq:Qfast} in the quantization terms, yielding Theorem~\ref{thm:main_ach}(b).
\end{proof}

\end{document}